\newtheorem{thm}{Theorem}
\newtheorem{lem}{Lemma}
\newtheorem{example}{Example}
\newtheorem{remark}{Remark}
\newcommand{\fp}{{\mathbb F}_{p}}
\newcommand{\fq}{{\mathbb F}_{q}}
\newcommand{\ftwo}{{\mathbb F}_{2}}
\newcommand{\fqm}{{\mathbb F}_{q^m}}
\newcommand{\ftwom}{{\mathbb F}_{2^m}}
\newcommand{\fqk}{{\mathbb F}_{q^k}}
\newcommand{\fqr}{{\mathbb F}_{q^r}}
\newcommand{\fqs}{{\mathbb F}_{q^s}}
\newcommand{\Tr}{{\rm {Tr}}}
\newcommand{\C}{{\mathcal{C}}}
\begin{document}

\title{A subfield-based construction of optimal linear codes \\ over finite fields}
\author{Zhao Hu, Nian Li, Xiangyong Zeng, Lisha Wang, Xiaohu Tang
\thanks{The authors are with the Hubei Key Laboratory of Applied Mathematics, Faculty of Mathematics and Statistics, Hubei
University, Wuhan, 430062, China. Xiaohu Tang is also with the Information Security and National Computing Grid Laboratory, Southwest Jiaotong University, Chengdu, 610031, China. Email: zhao.hu@aliyun.com, nian.li@hubu.edu.cn, xzeng@hubu.edu.cn, wangtaolisha@163.com,xhutang@swjtu.edu.cn}
}
\date{}
\maketitle
\begin{quote}
  {\small {\bf Abstract:} In this paper, we construct four families of linear codes over finite fields from the complements of either the union of subfields or the union of cosets of a subfield, which can produce infinite families of optimal linear codes, including infinite families of (near) Griesmer codes. We also characterize the optimality of these four families of linear codes with an explicit computable criterion using the Griesmer bound and obtain many distance-optimal linear codes. In addition, we obtain several classes of distance-optimal linear codes with few weights and completely determine their weight distributions. It is shown that most of our linear codes are self-orthogonal or minimal which are useful in applications.}

  {\small {\bf Keywords:} Optimal linear code, Griesmer code, Weight distribution, Self-orthogonal code, Minimal code.}
\end{quote}

\section{Introduction}

Let $\mathbb{F}_{q^m}$ be the finite field with $q^m$ elements and $\mathbb{F}_{q^m}^{*}=\mathbb{F}_{q^m}\backslash\{0\}$, where $q$ is a power of a prime $p$ and $m$ is a positive integer. An $[n, k, d]$ linear code $\mathcal{C}$ over $\fq$ is a $k$-dimensional subspace of $\fq^{n}$ with minimum Hamming distance $d$. An $[n,k,d]$ linear code $\mathcal{C}$ over $\fq$ is said to be distance-optimal if no $[n,k,d+1]$ code exists (i.e., this code has the largest minimum distance for given length $n$ and dimension $k$) and it is called almost distance-optimal if there exists an $[n,k,d+1]$ distance-optimal code. An $[n,k,d]$ linear code $\mathcal{C}$ is called optimal if its parameters $n$, $k$ and $d$ meet a bound on linear codes with equality and almost optimal if its parameters $n$, $k$ and $d+1$ meet a bound on linear codes with equality \cite{HWPV}. Optimal linear codes are  important in both theory and practice, the reader is referred to \cite{HDWZ,JYHLL,WZQY} for recent results. The Griesmer bound \cite{JHG,GSJS} for an $[n,k,d]$ linear code $\C$ over $\fq$ is given by
\[ n\geq  g(k,d):=\sum_{i=0}^{k-1} \lceil \frac{d}{q^i}\rceil,\]
where $\lceil \cdot \rceil$ denotes the ceiling function. An $[n,k,d]$ linear code $\mathcal{C}$ is called a Griesmer code if its parameters $n$, $k$ and $d$ achieve the Griesmer bound and called a near Griesmer code if $n-1$, $k$ and $d$ achieve the Griesmer bound. Griesmer codes have been an interesting topic of study for many years due to not only their optimality but also their geometric applications \cite{DING1,DING2}.

The dual code of an $[n,k,d]$ linear code $\C$ over $\fq$ is defined by
$ \C^{\bot}=\{x\in \fq^{n}\,\,|\,\, x \cdot y = 0 \,\, {\rm for \,\, all} \,\, y\in \C \},$
where $x \cdot y$ denotes the Euclidean inner product of $x$ and $y$.
The code $\C^{\bot}$ is an $[n,n-k]$ linear code over $\fq$. A linear code $\C$ is called projective if its dual code has minimum distance at least $3$, and it is called self-orthogonal if $\C\subseteq \C^{\bot}$.
Let $A_{i}$ denote the number of codewords with Hamming weight $i$ in a code $\mathcal{C}$ of length $n$. The weight enumerator of $\mathcal{C}$ is defined by
$1+A_{1}z+A_{2}z^{2}+\cdots +A_{n}z^{n}$. The sequence $(1, A_{1}, A_{2}, \cdots ,A_{n})$ is called the weight distribution of $\mathcal{C}$.
A code is said to be a $t$-weight code if the number of nonzero $A_{i}$ in the sequence $(A_{1}, A_{2}, \cdots ,A_{n})$ is equal to $t$. Linear codes with few weights have applications in secret sharing schemes \cite{ARJD,CCDY}, authentication codes \cite{DCHT,DW}, association schemes \cite{CAGJ}, strongly regular graphs and some other fields.

In 2007, Ding and Niederreiter \cite{DN} introduced a nice and generic way to construct linear codes via trace functions. Let $D \subset \fqm$ and define
\begin{equation} \label{CD}
\C_D=\{c_{a}=(\Tr_{q}^{q^{m}}(ax))_{x\in D}: a\in \fqm\},
\end{equation}
where $\Tr_{q}^{q^{m}}(\cdot)$ is the trace function from $\mathbb{F}_{q^m}$ to $\fq$.
Then $\C_{D}$ is a linear code of length $n:=|D|$ over $\mathbb{F}_{q}$. The set $D$ is called the defining set of $\mathcal{C}_D$. Later, Ding and Niederreiter's construction was extended to the bivariate form, namely, linear codes of the form
\begin{equation} \label{CDbi}
\mathcal{C}_{D}=\{c_{a,b}=(\Tr_{q}^{q^{m}}(ax)+\Tr_{q}^{q^{k}}(by))_{(x,y) \in D}: a\in \fqm,b \in \fqk\},
\end{equation}
where $D \subset \fqm \times \fqk$.

The objective of this paper is to construct (distance-) optimal linear codes and (near) Griesmer codes over the finite field $\fq$ of the forms \eqref{CD} and \eqref{CDbi}.  Our main contributions are summarized as follows:
\begin{enumerate}
  \item [1)] we construct distance-optimal linear codes $\C_{D}$ of the form \eqref{CD} with the defining set
      $D=\fqm\backslash \Omega_1$, where $\Omega_1=\cup_{i=1}^{h}\mathbb{F}_{q^{r_{i}}}$ and $1\leq r_{1} < r_{2} < \cdots < r_{h}<m$. A criterion $\sum_{i=1}^{h}q^{r_{i}}-|\Omega_1|<r_{1}+h-1$ for $\C_{D}$ to be distance-optimal is given by using the Griesmer bound,  which enables us to obtain many distance-optimal linear codes. In particular, when $h=1$, our construction reduces to the Solomon and Stiffler codes in the nonprojective case (see \cite{HELL,GSJS}), and when $h=2$, we show that the code $\C_{D}$ is a near Griesmer code if $(q,t)=(2,1)$ and distance-optimal if $r_{1}+1>q^{t}$, where $\gcd(r_{1},r_{2})=t$. Further, when $h=1$ and $h=2$, the weight distributions of $\C_{D}$ are completely determined, which are $2$-weight and $5$-weight respectively.
  \item [2)] we construct Griesmer codes and distance-optimal linear codes $\C_{D}$ of the form \eqref{CD} with the defining set $D=\fqm\backslash \Omega_2$, where $\Omega_2=\cup_{i=0}^{h}(\theta_{i}+\fqr)$, $r|m$, $\theta_{0}=0$ and $\theta_{i}\in\fqm^{*}$ for any $1\leq i \leq h$. This construction produces Griesmer codes if $h+1\leq q$ which have different parameters with the Solomon and Stiffler codes in the nonprojective case if $h\ne 0$. When $h+1> q$, we give an explicit computable criterion on $\C_{D}$ such that it is distance-optimal and consequently obtain many distance-optimal linear codes. It is proved that in this construction $\C_{D}$ is at most $(h+2)$-weight and the weight distributions of $\C_{D}$ for $h=1$ and $h=2$ are completely determined which are $3$-weight and $4$-weight respectively.
  \item [3)] we characterize the optimality of the linear codes $\C_{D}$ of the form \eqref{CD} with the defining set $D=\fqm\backslash \Omega_3$, where $\Omega_3=\cup_{i=1}^{h}(\theta_{i}*\fqr)$, $r|m$, $\theta_{i}\in \fqm^{*}$ for any $1\leq i \leq h$, and give an explicit computable criterion on $\C_{D}$ such that it is distance-optimal. This allows us to produce many distance-optimal linear codes from this construction.
      It is shown that $\C_{D}$ is a Griesmer code with the same parameters as the Solomon and Stiffler code in the nonprojective case if $h=1$ and it is a near Griesmer code if $h=2$ or $(q,h)=(2,3)$. In addition, we prove that $\C_{D}$ is at most $(h+1)$-weight and completely determine its weight distributions for $h=2$ and $h=3$, which are $3$-weight and $3$ or $4$-weight respectively.

  \item [4)] we characterize the optimality of the linear codes $\C_{D}$ of the form \eqref{CDbi} for the defining set $D= \{(x,y):x\in \fqm\backslash \fqr, y \in \fqk\backslash \fqs \}$, where $r|m$ and $s|k$, and give an explicit computable criterion on $\C_{D}$ such that it is distance-optimal. As a consequence, we obtain many distance-optimal linear codes from this construction. In addition, the weight distribution of $\C_{D}$ is completely determined which is shown to be $4$-weight. A similar discussion for $r=s=0$ and ${\mathbb F}_{q^0}=\{0\}$ shows that the linear code $\C_{D}$ is a near Griesmer code (distance-optimal if $m+\lfloor \frac{2}{q}\rfloor>1$) when $m=k$ and it is a Griesmer code with the same parameters as the Solomon and Stiffler code in the nonprojective case when $m\ne k$.
\end{enumerate}

We also investigate the self-orthogonality and minimality of the linear codes constructed in this paper. Self-orthogonal codes can be used to construct quantum error-correcting codes \cite{KKKS} which can protect quantum information in quantum computations and quantum communications \cite{CRSS, DGOT} and minimal linear codes can be used to construct secret sharing schemes \cite{CCDY,DY,JLM,JYCD}. It is shown that most of our linear codes are either self-orthogonal or minimal.

\section{Preliminaries}

Let $q$ be a power of a prime $p$ and denote the canonical additive character of $\fq$ by
\[\chi(x)=\zeta_{p}^{\Tr_{p}^{q}(x)},\]
where $\zeta_{p}$ is a primitive complex $p$-th root of unity and $\Tr_{p}^{q}(\cdot)$ is the trace function from $\fq$ to $\fp$.

\begin{lem}(\cite{Lidl})\label{lem-trace}
 Let $\alpha\in \fqm$. Then $\Tr_{q}^{q^{m}}(\alpha)=0$ if and only if $\alpha=\beta^q-\beta$ for some $\beta\in \fqm$.
\end{lem}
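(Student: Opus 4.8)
The plan is to recognize that the statement is an instance of the additive form of Hilbert's Theorem 90, and to prove it by identifying both sides as $\fq$-subspaces of $\fqm$ and comparing their cardinalities. Concretely, consider the $\fq$-linear map $L\colon\fqm\to\fqm$ given by $L(\beta)=\beta^q-\beta$, together with the trace map $\Tr_{q}^{q^{m}}\colon\fqm\to\fq$, which is also $\fq$-linear. Unwinding the definitions, the claim says precisely that the image of $L$ equals the kernel of $\Tr_{q}^{q^{m}}$.

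First I would dispatch the easy inclusion $\mathrm{Im}(L)\subseteq\ker(\Tr_{q}^{q^{m}})$. Writing the trace explicitly as $\Tr_{q}^{q^{m}}(x)=\sum_{i=0}^{m-1}x^{q^{i}}$ and using $x^{q^{m}}=x$ for $x\in\fqm$, one sees immediately that the trace is Frobenius-invariant, i.e. $\Tr_{q}^{q^{m}}(\beta^{q})=\Tr_{q}^{q^{m}}(\beta)$. Hence for $\alpha=\beta^{q}-\beta$, $\fq$-linearity of the trace gives $\Tr_{q}^{q^{m}}(\alpha)=\Tr_{q}^{q^{m}}(\beta^{q})-\Tr_{q}^{q^{m}}(\beta)=0$, so every element of the form $\beta^{q}-\beta$ is indeed a trace-zero element.

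For the reverse inclusion I would run a counting argument rather than construct a preimage explicitly. The kernel of $L$ is the set of roots of $\beta^{q}-\beta=0$ in $\fqm$, which is exactly $\fq$ and hence has $q$ elements; by rank–nullity over $\fq$ this forces $|\mathrm{Im}(L)|=q^{m}/q=q^{m-1}$. On the other side, the trace is a nonzero $\fq$-linear functional: regarded as the additive polynomial $\sum_{i=0}^{m-1}x^{q^{i}}$ of degree $q^{m-1}$, it cannot vanish on all $q^{m}$ elements of $\fqm$, so it is surjective onto $\fq$ and its kernel has $q^{m}/q=q^{m-1}$ elements. Since $\mathrm{Im}(L)\subseteq\ker(\Tr_{q}^{q^{m}})$ with both sets of size $q^{m-1}$, they coincide, which is exactly the nontrivial direction of the lemma.

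The one genuinely load-bearing point, and the place I would be careful, is the claim that the trace is not identically zero (equivalently, that it is surjective), because the cardinality comparison collapses without it; this is where the degree bound on the additive polynomial $\sum_{i=0}^{m-1}x^{q^{i}}$ is essential. Everything else reduces to routine $\fq$-linear algebra, so no substantial obstacle remains once surjectivity of the trace is secured. \done
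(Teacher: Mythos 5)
Your proof is correct and complete: the inclusion $\mathrm{Im}(L)\subseteq\ker(\Tr_{q}^{q^{m}})$ via Frobenius-invariance of the trace, the count $|\mathrm{Im}(L)|=q^{m-1}$ from $\ker(L)=\fq$ and rank--nullity, and the count $|\ker(\Tr_{q}^{q^{m}})|=q^{m-1}$ from the fact that the trace, being the additive polynomial $\sum_{i=0}^{m-1}x^{q^{i}}$ of degree $q^{m-1}<q^{m}$, cannot vanish identically, together force equality of the two sets. The paper itself gives no proof of this lemma --- it is quoted directly from Lidl and Niederreiter --- so there is no in-paper argument to compare against; what you have written is the standard textbook proof of the additive Hilbert 90 statement, and you correctly isolated the only load-bearing step (non-vanishing of the trace).
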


The optimality of near Griesmer codes can be determined as below.

\begin{lem}\label{near-Griesmer}
Let $\C$ be an $[n,k,d]$ near Griesmer code over $\fq$ with $k>1$. Then $\C$ is distance-optimal if $q\mid d$ and almost distance-optimal if $q\nmid d$.
\end{lem}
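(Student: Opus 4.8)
The plan is to pit the hypothesis against the Griesmer bound, since that bound is the only structural tool in play and it governs the non-existence of codes with prescribed parameters. Because $\C$ is a near Griesmer code we have $n-1=g(k,d)$, i.e.\ $n=g(k,d)+1$, so the whole question becomes an arithmetic comparison of the Griesmer function at the consecutive arguments $d$, $d+1$, $d+2$. Concretely, to get distance-optimality it suffices to show $g(k,d+1)>n$ (then no $[n,k,d+1]$ code can exist), and to approach almost distance-optimality it suffices to show $g(k,d+1)\le n<g(k,d+2)$ (then no $[n,k,d+2]$ code exists while distance $d+1$ is still admissible by the bound).

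The computational core is the increment identity
\[
g(k,e+1)-g(k,e)=\#\{\,i: 0\le i\le k-1,\ q^{i}\mid e\,\},
\]
which I would prove termwise: writing $e=aq^{i}+b$ with $0\le b<q^{i}$, one checks $\lceil (e+1)/q^{i}\rceil=a+1$ always, while $\lceil e/q^{i}\rceil=a+1$ if $b>0$ and $=a$ if $b=0$, so the $i$-th summand equals $1$ exactly when $q^{i}\mid e$ and $0$ otherwise. Two cases then follow. If $q\mid d$, the indices $i=0$ and $i=1$ both satisfy $q^{i}\mid d$, and since $k>1$ both lie in the range $0\le i\le k-1$; hence $g(k,d+1)\ge g(k,d)+2=n+1>n$, so the Griesmer bound forbids an $[n,k,d+1]$ code and $\C$ is distance-optimal. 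This is precisely where the hypothesis $k>1$ enters. If $q\nmid d$, only $i=0$ contributes, so $g(k,d+1)=g(k,d)+1=n$; applying the identity once more (with $i=0$ again contributing) gives $g(k,d+2)\ge g(k,d+1)+1=n+1>n$, so no $[n,k,d+2]$ code exists and the largest achievable minimum distance at length $n$, dimension $k$ is at most $d+1$.

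The main obstacle is finishing the $q\nmid d$ case: ``almost distance-optimal'' demands an actual $[n,k,d+1]$ code (which, once produced, is automatically distance-optimal, since the previous step rules out $[n,k,d+2]$), whereas the Griesmer inequality by itself only certifies that such a code would have the Griesmer-optimal length $n=g(k,d+1)$ and does not force existence. I would discharge this either by exhibiting the $[n,k,d+1]$ Griesmer code directly---which is available for the concrete families constructed later, where the lemma is actually applied---or by invoking a known existence result for Griesmer codes at these parameters. The upper bound $d_{\max}\le d+1$ and the automatic distance-optimality of any distance-$(d+1)$ code are the clean, bound-only facts; the existence of that distance-$(d+1)$ code is the single point that cannot be settled by the Griesmer inequality alone, and it is the step I expect to require care.
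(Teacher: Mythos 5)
Your proposal is correct and follows essentially the same route as the paper's proof: both reduce the lemma to the increment of the Griesmer function, computed termwise via the observation that $\lceil (d+1)/q^i\rceil-\lceil d/q^i\rceil$ equals $1$ exactly when $q^i\mid d$ and $0$ otherwise, and both split on whether $q\mid d$, with the hypothesis $k>1$ supplying the second contributing index $i=1$ in the divisible case. The differences all lie in the $q\nmid d$ case, where you are in fact more thorough than the paper. The paper's proof stops at $g(k,d+1)=g(k,d)+1=n$ and immediately declares almost distance-optimality; you additionally verify $g(k,d+2)\geq g(k,d+1)+1=n+1>n$, which is the step that actually rules out an $[n,k,d+2]$ code and shows that any $[n,k,d+1]$ code, if it exists, is automatically distance-optimal. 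Moreover, the existence issue you flag is genuine: under the paper's own definition, almost distance-optimality of $\C$ requires that a distance-optimal $[n,k,d+1]$ code \emph{exist}, and the Griesmer inequality alone cannot certify existence (Griesmer-attaining parameters are not always realizable). The paper's proof silently elides exactly this point, so your honest identification of it, together with your plan to import existence from the concrete constructions where the lemma is applied, or to weaken the conclusion to ``distance-optimal or almost distance-optimal,'' is not a defect of your argument relative to the paper's --- it is a small repair of the paper's write-up. In short: same method, with your version closing two gaps (the $g(k,d+2)$ comparison and the existence caveat) that the published proof leaves implicit.
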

\begin{proof}
According to the Griesmer bound one obtains $n=g(k,d)+1=\sum_{i=0}^{k-1} \lceil \frac{d}{q^i}\rceil+1$.
To complete the proof, it suffices to prove that
$g(k,d+1)>g(k,d)+1$ if $q\mid d$ and $g(k,d+1)=g(k,d)+1$ if $q\nmid d$. Note that
\begin{align*}
g(k,d+1)-g(k,d)-1&=\sum_{i=0}^{k-1} \lceil \frac{d+1}{q^i}\rceil-\sum_{i=0}^{k-1} \lceil \frac{d}{q^i}\rceil-1
=\sum_{i=1}^{k-1} \lceil \frac{d+1}{q^i}\rceil-\sum_{i=1}^{k-1} \lceil \frac{d}{q^i}\rceil.
\end{align*}
Then the result follows from the fact that $\lceil \frac{d+1}{q^i}\rceil=\lceil \frac{d}{q^i}\rceil+1$ if $q^i\mid d$ and otherwise $\lceil \frac{d+1}{q^i}\rceil=\lceil \frac{d}{q^i}\rceil$ for any integer $i>0$.
This completes the proof.
\end{proof}

Some results on the self-orthogonality of linear codes of the form \eqref{CD} are given as follows.

\begin{lem}\label{self-orthogonalF}
Let $q$ be a power of a prime $p$ and $m$, $r$ be positive integers with $r\mid m$ and $(q,r)\notin \{(2,1),(2,2),(3,1)\}$. Define \\
 \indent $1).$ $D= \fqr;$ or\\
 \indent $2).$ $D=\{x+\theta: x\in \fqr\};$ or\\
 \indent $3).$ $D=\{x*\theta: x\in \fqr\}$\\
where $\theta\in \fqm\backslash \fqr$. Then the linear code $\C_{D}$ defined in \eqref{CD} is self-orthogonal.
\end{lem}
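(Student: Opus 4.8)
The plan is to reduce the self-orthogonality of $\C_D$ to a family of power-sum identities over $\fqr$ and then to a purely number-theoretic divisibility statement that isolates the three excluded pairs $(q,r)$.

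First I would recall that $\C_D\subseteq\C_D^{\perp}$ is equivalent to $c_a\cdot c_b=\sum_{x\in D}\Tr_q^{q^m}(ax)\Tr_q^{q^m}(bx)=0$ for all $a,b\in\fqm$. Writing $\Tr_q^{q^m}(z)=\sum_{i=0}^{m-1}z^{q^i}$ and expanding the product gives
\[
c_a\cdot c_b=\sum_{i=0}^{m-1}\sum_{j=0}^{m-1}a^{q^i}b^{q^j}S_{i,j},\qquad S_{i,j}:=\sum_{x\in D}x^{q^i+q^j}.
\]
Hence it suffices to show $S_{i,j}=0$ for all $i,j$. Since for $i\ge j$ one has $S_{i,j}=\bigl(\sum_{x\in D}x^{q^{i-j}+1}\bigr)^{q^j}$ by the Frobenius map, and $S_{i,j}=S_{j,i}$, the whole family collapses to the single requirement
\[
T_l:=\sum_{x\in D}x^{q^l+1}=0\qquad(0\le l\le m-1).
\]

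The engine for all three cases is the standard identity $\sum_{x\in\fqr}x^{e}=-1$ if $(q^r-1)\mid e$ and $e>0$, and $\sum_{x\in\fqr}x^{e}=0$ otherwise. For Case $1)$ I take $D=\fqr$ and must decide when $(q^r-1)\mid(q^l+1)$. Reducing $q^l$ modulo $q^r-1$ replaces $q^l+1$ by $q^{s}+1$ with $s=l\bmod r\in\{0,\dots,r-1\}$, so $0<q^{s}+1\le q^{r-1}+1$. The divisibility can therefore only occur if $q^{r-1}+1\ge q^r-1$, i.e. $q^{r-1}(q-1)\le 2$; an elementary check shows this inequality holds exactly for $(q,r)\in\{(2,1),(2,2),(3,1)\}$. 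Thus under the hypothesis $(q,r)\notin\{(2,1),(2,2),(3,1)\}$ we get $q^{r-1}(q-1)>2$, so $q^{s}+1$ is a nonzero residue modulo $q^r-1$, whence $T_l=0$ for every $l$ and $\C_{\fqr}$ is self-orthogonal. This divisibility analysis is the crux, and the place where the three exceptional pairs are forced.

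The remaining two cases reduce to Case $1)$. For Case $3)$, $D=\theta*\fqr=\theta\fqr$ gives $T_l=\theta^{q^l+1}\sum_{x\in\fqr}x^{q^l+1}=0$ directly. For Case $2)$, $D=\theta+\fqr$, I expand $(x+\theta)^{q^l+1}=x^{q^l+1}+\theta x^{q^l}+\theta^{q^l}x+\theta^{q^l+1}$ and sum over $x\in\fqr$: the first term vanishes by Case $1)$, the middle two vanish because $(q^r-1)\nmid q^l$ and $(q^r-1)\nmid 1$ once $(q,r)\neq(2,1)$, and the last term is $\theta^{q^l+1}\cdot q^r=0$ in characteristic $p$. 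Hence $T_l=0$ in all three cases and $\C_D$ is self-orthogonal. The only genuine difficulty is the number-theoretic step in Case $1)$; Cases $2)$ and $3)$ are then immediate, and I expect the write-up there to be routine.
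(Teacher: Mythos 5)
Your proposal is correct, and it takes a genuinely different route from the paper. The paper works analytically: for $D=\fqr$ it first reduces to the subfield code via $\Tr_{q}^{q^{m}}(ax)=\Tr_{q}^{q^{r}}(\Tr_{q^{r}}^{q^{m}}(a)x)$, then splits into two cases --- for $a,b$ linearly dependent over $\fq$ the inner product becomes $uq^{r-1}\sum_{y\in\fq}y^{2}$, and for $a,b$ linearly independent a character-sum count shows each pair of trace values $(s,t)$ occurs exactly $q^{r-2}$ times, giving $c_{a}\cdot c_{b}=q^{r-2}\bigl(\sum_{s\in\fq}s\bigr)\bigl(\sum_{t\in\fq}t\bigr)$; the three exceptional pairs emerge from the non-vanishing of $\sum_{y\in\fq}y$, $\sum_{y\in\fq}y^{2}$, and the factors $q^{r-1},q^{r-2}$ in characteristic $p$. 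You instead expand the trace into Frobenius powers and reduce everything to the single family of power sums $T_{l}=\sum_{x\in D}x^{q^{l}+1}$, dispatched by the classical fact that $\sum_{x\in\fqr}x^{e}$ vanishes for $e>0$ unless $(q^{r}-1)\mid e$; the exceptional pairs then fall out of the one inequality $q^{r-1}(q-1)\le 2$, which indeed characterizes $\{(2,1),(2,2),(3,1)\}$. I checked the details: the identity $S_{i,j}=\bigl(\sum_{x\in D}x^{q^{i-j}+1}\bigr)^{q^{j}}$, the reduction $q^{l}\mapsto q^{s}$ with $s=l\bmod r$ for $x\in\fqr$, the strictness $q^{s}+1\le q^{r-1}+1<q^{r}-1$ outside the excluded pairs, and the handling of the cross terms in Case $2)$ (where $(q^{r}-1)\mid q^{l}$ or $(q^{r}-1)\mid 1$ forces $q^{r}=2$) are all sound. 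Your argument is more uniform --- no case split on linear dependence of $a,b$, and it proves the stronger statement that \emph{all} power sums $\sum_{x\in D}x^{q^{i}+q^{j}}$ vanish --- and it localizes the exceptions in one transparent divisibility condition. What the paper's character-sum method buys in exchange is reusability: the same balanced-trace computation is invoked almost verbatim for the bivariate defining sets in Section 6 (Theorem \ref{optimalcode-con-2}), where your exponent bookkeeping would have to be redone over two fields. Note also that demonstrating $T_{l}\neq 0$ in the excluded cases would not by itself show failure of self-orthogonality (the vanishing of the $S_{i,j}$ is sufficient, not necessary), but since the lemma claims nothing there, this costs you nothing.
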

\begin{proof}
$1)$ Observe that $\Tr_{q}^{q^{m}}(ax)=\Tr_{q}^{q^{r}}(\Tr_{q^{r}}^{q^{m}}(a)x)$ if $x\in D=\fqr$ which implies that in this case the linear code $\C_{D}$ can be reduced to $\{c_{a}=(\Tr_{q}^{q^{r}}(ax))_{x\in \fqr}: a\in \fqr\}$. Thus, to complete the proof, it is sufficient to show that $c_{a}\cdot c_{b}=0$ for any $a,b\in \fqr^*$. If $a$ and $b$ are linearly dependent over $\fq$, namely, there exists some $u\in \fq^*$ such that $b=ua$, then by the balanced property of trace functions one has that
$$c_{a}\cdot c_{b}=\sum\nolimits_{x\in\fqr}\Tr_{q}^{q^{r}}(ax)\Tr_{q}^{q^{r}}(uax)=u\sum\nolimits_{x\in\fqr}\Tr_{q}^{q^{r}}(ax)^2=uq^{r-1}\sum\nolimits_{y\in\fq}y^2.$$
Let $\alpha$ be a primitive element of $\fq$, it can be readily verified that $\sum_{y\in\fq}y^2=0$ if $\alpha\ne 1$ and $\alpha\ne -1$, i.e., $q\ne 2$ and $q\ne 3$. Then $c_{a}\cdot c_{b}=0$ if $(q,r)\ne (2,1)$ or $(3,1)$. Now assume that $a$ and $b$ are linearly independent over $\fq$, which implies $r\geq 2$. Then, for any $(s,t)\in \fq^2$,  we have
\begin{align*}
N_{s,t}:=&|\{ x\in \fqr: \Tr_{q}^{q^{r}}(ax)=s\,\,{\rm and}\,\, \Tr_{q}^{q^{r}}(bx)=t\}|\\
=&\frac{1}{q^2}\sum_{x\in \fqr}\sum_{u\in \fq}\chi(u(\Tr_{q}^{q^{r}}(ax)-s))
\sum_{v\in \fq}\chi(v(\Tr_{q}^{q^{r}}(bx)-t))\\
=&\frac{1}{q^2}\sum_{u\in \fq}\sum_{v\in \fq}\chi(-(us+vt))
\sum_{x\in \fqr}\chi(\Tr_{q}^{q^{r}}((au+bv)x))\\
=&q^{r-2}
\end{align*}
due to $au+bv=0$ if and only if $u=v=0$. This gives
$$c_{a}\cdot c_{b}=\sum_{x\in\fqr}\Tr_{q}^{q^{r}}(ax)\Tr_{q}^{q^{r}}(bx)=q^{r-2}\sum_{s,t\in \fq}st=q^{r-2}\sum_{s\in \fq}s\sum_{t\in \fq}t=0$$
if $(q,r)\not=(2,2)$.

$2)$ If $D=\{x+\theta: x\in \fqr\}$ for some $\theta\in \fqm\backslash \fqr$, then any codeword $c_{a}\in\C_{D}$ can be expressed as $c_{a}=\bar{c}_{a}+u_{a}$ where $\bar{c}_{a}=(\Tr_{q}^{q^{m}}(ax))_{x\in \fqr}\in \C_{\fqr}$ and $u_{a}=(\Tr_{q}^{q^{m}}(a\theta))_{x\in \fqr}$. For any two codewords $c_{a}=\bar{c}_{a}+u_{a}$ and $c_{b}=\bar{c}_{b}+u_{b}$ in $\C_{D}$, we have
$$c_{a}\cdot c_{b}=(\bar{c}_{a}+u_{a})\cdot(\bar{c}_{b}+u_{b})=\bar{c}_{a}\cdot \bar{c}_{b}+u_{a}\cdot\bar{c}_{b}+u_{b}\cdot\bar{c}_{a}+u_{a}\cdot u_{b}.$$
Note that $u_{a}\cdot u_{b}=q^{r}\Tr_{q}^{q^{m}}(a\theta)\Tr_{q}^{q^{m}}(b\theta)=0$ and
$u_{a}\cdot\bar{c}_{b}=\Tr_{q}^{q^{m}}(a\theta)\sum_{x\in\fqr}\Tr_{q}^{q^{r}}(\Tr_{q^{r}}^{q^{m}}(b)x)$ since $x\in \fqr$.
If $\Tr_{q^{r}}^{q^{m}}(b)=0$, then $u_{a}\cdot\bar{c}_{b}=0$. Otherwise, by the balanced property of trace functions, we have $\sum_{x\in \fqr}\Tr_{q}^{q^{r}}(\Tr_{q^{r}}^{q^{m}}(b)x)=q^{r-1}\sum_{y\in \fq}y=0$ if $(q,r)\ne (2,1)$. Similarly, we have $u_{b}\cdot\bar{c}_{a}=0$ if $(q,r)\ne (2,1)$. Then, by 1) of Lemma \ref{self-orthogonalF}, we have $c_{a}\cdot c_{b}=\bar{c}_{a}\cdot \bar{c}_{b}=0$ if $(q,r)\notin \{(2,1),(2,2),(3,1)\}$.

$3)$ The linear code $\C_{D}$ in this case can be expressed as $\{c_{a}=(\Tr_{q}^{q^{m}}(a\theta x))_{ x\in \fqr}: a\in \fqm\}$ which is exactly the code $\{c_{b}=(\Tr_{q}^{q^{r}}(bx))_{x\in \fqr}: b\in \fqr\}$ since $\Tr_{q}^{q^{m}}(ax)=\Tr_{q}^{q^{r}}(\Tr_{q^{r}}^{q^{m}}(a)x)$ if $x\in \fqr$. Then the result follows from 1) of Lemma \ref{self-orthogonalF}.
\end{proof}

The following lemma can be readily verified  and will be frequently used in the sequel.

\begin{lem}\label{self-orthogonalD-}
Let $D,D_{1},D_{2}\subseteq \fqm$ (resp. $\fqm\times \fqk$).  Then \\
\indent $1).$ Denote $D^{*}=D\backslash \{0\}$ if $0\in D$ (resp. $D^{*}=D\backslash \{(0,0)\}$ if $(0,0)\in D$). $\C_{D}$ defined in \eqref{CD} (resp. \eqref{CDbi}) is self-orthogonal if and only if $\C_{D^{*}}$ defined in \eqref{CD} (resp. \eqref{CDbi}) is self-orthogonal$;$\\
\indent $2).$ Let $D=D_{1}\cup D_{2}$ and $D_{1}\cap D_{2}=\{0\} \,\,{\rm or}\,\, \emptyset$ (resp. $\{(0,0)\} \,\,{\rm or}\,\, \emptyset$). $\C_{D}$ defined in \eqref{CD} (resp. \eqref{CDbi}) is self-orthogonal if $\C_{D_{1}}$ and $\C_{D_{2}}$ defined in \eqref{CD} (resp. \eqref{CDbi}) are both self-orthogonal$;$\\
\indent $3).$ Let $D_{1}\subseteq D_{2}$ and $D=D_{2}\backslash D_{1}$. $\C_{D}$ defined in \eqref{CD} (resp. \eqref{CDbi}) is self-orthogonal if $\C_{D_{1}}$ and $\C_{D_{2}}$ defined in \eqref{CD} (resp. \eqref{CDbi}) are both self-orthogonal.
\end{lem}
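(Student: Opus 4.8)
The plan is to translate self-orthogonality into the vanishing of a single symmetric sum attached to the defining set, and then to exploit the additivity of that sum under disjoint unions. For the single-variate construction \eqref{CD} I would write, for $a,b\in\fqm$,
\[
S_D(a,b)=\sum_{x\in D}\Tr_{q}^{q^{m}}(ax)\,\Tr_{q}^{q^{m}}(bx)=c_{a}\cdot c_{b},
\]
so that $\C_D$ is self-orthogonal precisely when $S_D(a,b)=0$ for all $a,b\in\fqm$. In the bivariate case \eqref{CDbi} one uses the analogous sum over $(x,y)\in D$ of the product of the two trace expressions $\Tr_{q}^{q^{m}}(ax)+\Tr_{q}^{q^{k}}(by)$, and the identical argument applies verbatim, so I describe only the single-variate case.

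First I would record the one fact that drives everything: the summand of $S_D$ vanishes at $x=0$, since $\Tr_{q}^{q^{m}}(a\cdot 0)=0$. Hence deleting or adjoining the zero point does not change $S_D$, which is exactly statement $1)$: one has $S_D=S_{D^{*}}$ identically on $\fqm\times\fqm$, so one side vanishes for all $a,b$ if and only if the other does, i.e.\ $\C_D$ is self-orthogonal iff $\C_{D^{*}}$ is.

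Next, the assignment $D\mapsto S_D(a,b)$ is additive on disjoint unions, being a sum over $D$ of a fixed summand. For statement $2)$, inclusion--exclusion gives
\[
S_{D_{1}\cup D_{2}}(a,b)=S_{D_{1}}(a,b)+S_{D_{2}}(a,b)-S_{D_{1}\cap D_{2}}(a,b),
\]
and by hypothesis $D_{1}\cap D_{2}\subseteq\{0\}$ (resp.\ $\{(0,0)\}$), so the last term is $0$ by the zero-point observation; thus $S_{D_{1}}\equiv 0$ and $S_{D_{2}}\equiv 0$ force $S_{D}\equiv 0$. For statement $3)$, since $D_{1}\subseteq D_{2}$ the set $D_{2}$ is the disjoint union of $D=D_{2}\backslash D_{1}$ and $D_{1}$, whence $S_{D_{2}}=S_{D}+S_{D_{1}}$ and therefore $S_{D}=S_{D_{2}}-S_{D_{1}}\equiv 0$ whenever both $\C_{D_{1}}$ and $\C_{D_{2}}$ are self-orthogonal.

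There is essentially no obstacle here beyond careful bookkeeping: the only point requiring attention is the role of the zero coordinate, which contributes nothing to $S_D$ and which is precisely why the shared point $\{0\}$ (resp.\ $\{(0,0)\}$) in $2)$ and the deleted point in $1)$ are harmless. The same three identities hold word for word for the bivariate construction once $S_D$ is replaced by the corresponding sum over pairs, which completes the verification.
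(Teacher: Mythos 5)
Your proof is correct: reducing self-orthogonality to the vanishing of $S_D(a,b)=c_a\cdot c_b$ for all $a,b$ (legitimate since the codewords $c_a$ exhaust $\C_D$), noting the summand vanishes at the zero point, and using additivity of $S_D$ over disjoint unions is exactly the routine verification the paper has in mind — it states the lemma with only the remark that it ``can be readily verified'' and omits the proof. Your write-up supplies that verification completely, including the bivariate case, with no gaps.
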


A vector $u\in \fq^{n}$ covers a vector $v\in \fq^{n}$ if ${\rm Suppt}(v)\subseteq {\rm Suppt}(u)$, where ${\rm Suppt}(u)= \{1 \leq i \leq n : u_{i}\ne 0\}$ is the support of $u=(u_{1},u_{2},\cdots,u_{n})\in \fq^{n}$. A codeword $u$ in $\C$ is said to be minimal if $u$ covers only the codeword $au$ for all $a \in \fq$, but no other codewords in $\C$. A linear code $\C$ is said to be minimal if every
codeword in $\C$ is minimal.

Aschikhmin and Barg's result is often used to determine whether a linear code is minimal.

\begin{lem} (\cite{AAAB}) \label{minimal}
A linear code $\C$ over $\fq$ is minimal if $w_{min}/w_{max}>(q-1)/q$, where $w_{min}$ and $w_{max}$ denote the minimum and maximum
nonzero Hamming weights in $\C$, respectively.
\end{lem}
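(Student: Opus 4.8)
The plan is to argue by contraposition: I will show that if $\C$ is \emph{not} minimal, then $w_{min}/w_{max}\le (q-1)/q$, which is the negation of the hypothesis. So suppose some nonzero codeword $u\in\C$ covers a codeword $v$ that is not a scalar multiple of $u$; in particular $v\neq 0$ and ${\rm Suppt}(v)\subseteq{\rm Suppt}(u)$. The underlying idea is to ``spread'' $v$ across the $q-1$ codewords $u-av$ with $a\in\fq^{*}$ and to count their supports, exploiting that removing a multiple of $v$ can only cancel coordinates lying inside ${\rm Suppt}(u)$.

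First I would establish the counting identity $\sum_{a\in\fq^{*}}w(u-av)=(q-1)w(u)-w(v)$, where $w(\cdot)$ denotes Hamming weight. To see this, note that a coordinate $i$ contributes to $w(u-av)$ exactly when $u_{i}-av_{i}\neq 0$; since ${\rm Suppt}(v)\subseteq{\rm Suppt}(u)$, only indices $i\in{\rm Suppt}(u)$ are relevant, and for such an $i$ the quantity $u_{i}-av_{i}$ vanishes for some $a\in\fq^{*}$ precisely when $v_{i}\neq 0$, the unique such scalar being $a=u_{i}v_{i}^{-1}$. Summing over $a\in\fq^{*}$ and interchanging the order of summation, each index of ${\rm Suppt}(v)$ is cancelled exactly once, which delivers the claimed identity.

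Next, because $v$ is not a scalar multiple of $u$, none of the codewords $u-av$ with $a\in\fq^{*}$ equals $0$ (otherwise $v=a^{-1}u$), so each has weight at least $w_{min}$. Combining this with the identity gives $(q-1)w_{min}\le (q-1)w(u)-w(v)$. Using the crude bounds $w(v)\ge w_{min}$ and $w(u)\le w_{max}$ then yields $q\,w_{min}\le (q-1)w(u)\le (q-1)w_{max}$, that is $w_{min}/w_{max}\le (q-1)/q$, the desired contradiction.

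The only delicate point is the counting identity, and specifically keeping careful track of which coordinates of ${\rm Suppt}(u)$ have $v_{i}=0$ versus $v_{i}\neq 0$: the former are cancelled by no $a\in\fq^{*}$, while the latter are cancelled by exactly one. Equally important is checking that no $u-av$ with $a\neq 0$ collapses to the zero word, which is precisely the assumption that $v$ is not proportional to $u$; this is what guarantees that all $q-1$ weights in the sum are genuinely bounded below by $w_{min}$, and hence that the final chain of inequalities is valid.
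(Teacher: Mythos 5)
Your proof is correct. The paper states this lemma without proof, simply citing Ashikhmin and Barg, and your contrapositive argument via the counting identity $\sum_{a\in\fq^{*}}w(u-av)=(q-1)w(u)-w(v)$ is precisely the standard proof from that reference: the three coordinate cases behind the identity, the observation that $v\neq 0$ and $u-av\neq 0$ for all $a\in\fq^{*}$ (so all weights involved are bounded below by $w_{min}$), and the final chain $q\,w_{min}\le (q-1)w(u)\le (q-1)w_{max}$ all check out.
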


\section{The first family of optimal linear codes} \label{section-3}

In this section, we investigate the linear codes $\C_{D}$ of the form \eqref{CD} for
\begin{eqnarray}\label{CD1-D1}
D= \fqm \backslash \Omega_1,\;\;\Omega_1=\cup_{i=1}^{h} {\mathbb F}_{q^{r_{i}}},
\end{eqnarray}
where $m>1$, $1\leq r_{1} < r_{2} < \cdots < r_{h}<m$ are positive integers satisfying $r_{i}|m$ for any $1\leq i \leq h$, $r_{i}\nmid r_{j}$ for any $1\leq i<j \leq h$ and $\gcd(r_{1}, r_{2}, \cdots, r_{h})=t$. In particular, define $t=r_1$ if $h=1$.

For simplicity, define
\begin{eqnarray*}
\Theta_1=\{a \in \fqm: \Tr_{q^{r_{i}}}^{q^{m}}(a)\ne 0 \,\,{\rm for\,\,any\,\,}  i \,\,{\rm and\,\,} \Tr_{q^{\gcd(r_{i},r_{j})}}^{q^{m}}(a)= 0 \,\,{\rm for\,\,any\,\,}
i<j\}.
\end{eqnarray*}

\begin{thm} \label{optimalcode-con-1}
Let $\C_{D}$ be defined by \eqref{CD} and \eqref{CD1-D1}. If $\Theta_1$ is nonempty, then  \\
$1).$ $\C_{D}$ is a $[q^m-|\Omega_1|,m,(q-1)(q^{m-1}-\sum_{i=1}^{h}q^{r_{i}-1})]$ linear code;\\
$2).$ $\C_{D}$ is distance-optimal if $\sum_{i=1}^{h}q^{r_{i}}-|\Omega_1|<r_{1}+h-1$; \\
$3).$ $\C_{D}$ is self-orthogonal if $(q,t)\notin \{(2,1),(2,2),(3,1)\}$;\\
$4).$ $\C_{D}$  is minimal if $q^{m-1}> \sum_{i=1}^{h}q^{r_{i}}$.
\end{thm}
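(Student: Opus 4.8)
The plan is to handle the four parts in turn, with 1) and 2) carrying the real content and 2) being the main obstacle.

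For part 1), the length $n=|D|=q^m-|\Omega_1|$ is immediate. For the weight of a nonzero codeword $c_a$ with $a\in\fqm^*$ I would write
\[
w(c_a)=|D|-|\{x\in D:\Tr_q^{q^m}(ax)=0\}|=(q-1)q^{m-1}-|\{x\in\Omega_1:\Tr_q^{q^m}(ax)\ne 0\}|,
\]
using that $\Tr_q^{q^m}(a\,\cdot\,)$ is a balanced $\fq$-linear form, so exactly $q^{m-1}$ elements of $\fqm$ annihilate it. On each subfield the transitivity identity $\Tr_q^{q^m}(ax)=\Tr_q^{q^{r_i}}(\Tr_{q^{r_i}}^{q^m}(a)x)$ shows that $\{x\in\mathbb F_{q^{r_i}}:\Tr_q^{q^m}(ax)\ne0\}$ has size $(q-1)q^{r_i-1}$ when $\Tr_{q^{r_i}}^{q^m}(a)\ne0$ and is empty otherwise. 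Since $\mathbb F_{q^{r_i}}\cap\mathbb F_{q^{r_j}}=\mathbb F_{q^{\gcd(r_i,r_j)}}$, the union bound over $i$ gives $|\{x\in\Omega_1:\Tr_q^{q^m}(ax)\ne0\}|\le(q-1)\sum_{i}q^{r_i-1}$, with equality exactly when all $\Tr_{q^{r_i}}^{q^m}(a)\ne0$ while the overlaps contribute nothing, i.e. $\Tr_{q^{\gcd(r_i,r_j)}}^{q^m}(a)=0$ for $i<j$ — precisely the condition defining $\Theta_1$. Hence, provided $\Theta_1\ne\emptyset$, the minimum weight equals $d=(q-1)(q^{m-1}-\sum_{i}q^{r_i-1})$; since $d>0$ (as $\sum_i q^{r_i-1}<q^{m-1}$) every nonzero $a$ yields a nonzero codeword, so $a\mapsto c_a$ is injective and the dimension is $m$.

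For part 2), I would compute the Griesmer bound directly. Writing $d/q^{\ell}=(q-1)q^{m-1-\ell}-(q-1)\sum_{i}q^{r_i-1-\ell}$ and splitting the sum into the terms with $r_i>\ell$ (integral) and those with $r_i\le\ell$, one gets $\lceil d/q^\ell\rceil=A_\ell-\lfloor B_\ell\rfloor$, where $A_\ell$ is the integral part and $B_\ell=(q-1)\sum_{i:\,r_i\le\ell}q^{r_i-1-\ell}$. The crucial observation is that $B_\ell<1$ for every $\ell$: the exponents $r_i-1-\ell$ are distinct integers $\le-1$, so $\sum q^{r_i-1-\ell}<\sum_{s\ge1}q^{-s}=1/(q-1)$. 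Thus all floor terms vanish and
\[
g(m,d)=\sum_{\ell=0}^{m-1}A_\ell=q^m-1-\sum_{i}q^{r_i}+h.
\]
Next I would use $g(m,d+1)-g(m,d)=1+|\{1\le\ell\le m-1:q^\ell\mid d\}|$. Factoring $d=(q-1)q^{r_1-1}\bigl(q^{m-r_1}-1-\sum_{i>1}q^{r_i-r_1}\bigr)$ and noting the last factor is $\equiv-1\pmod q$ shows the largest power of $q$ dividing $d$ is $q^{r_1-1}$, so that set is $\{1,\dots,r_1-1\}$ and $g(m,d+1)=g(m,d)+r_1$. Therefore
\[
g(m,d+1)-n=(r_1+h-1)-\Bigl(\sum_{i}q^{r_i}-|\Omega_1|\Bigr),
\]
which is positive exactly under the hypothesis $\sum_i q^{r_i}-|\Omega_1|<r_1+h-1$; then no $[n,m,d+1]$ code exists and $\C_D$ is distance-optimal. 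Spotting that every floor term vanishes and pinning down the $q$-valuation of $d$ is the main obstacle here.

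For part 3), I would invoke Lemma \ref{self-orthogonalD-}: expanding $\mathbf 1_D=1-\sum_i\mathbf 1_{\mathbb F_{q^{r_i}}}+\sum_{i<j}\mathbf 1_{\mathbb F_{q^{\gcd(r_i,r_j)}}}-\cdots$ writes $c_a\cdot c_b$ as an alternating sum of inner products taken over subfields $\mathbb F_{q^s}$ whose degrees $s$ are $m$ and the various gcd's, each a multiple of $t$ with $t\le s\le m$. A short case check shows that $(q,t)\notin\{(2,1),(2,2),(3,1)\}$ forces $(q,s)\notin\{(2,1),(2,2),(3,1)\}$ for every such $s$, so by part 1) of Lemma \ref{self-orthogonalF} each subfield inner product vanishes and $c_a\cdot c_b=0$. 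For part 4), $w_{\min}=d$ by part 1) and $w_{\max}\le(q-1)q^{m-1}$ because the subtracted count above is nonnegative; hence $w_{\min}/w_{\max}\ge d/\bigl((q-1)q^{m-1}\bigr)=1-\sum_i q^{r_i-1}/q^{m-1}$, which exceeds $(q-1)/q$ precisely when $q^{m-1}>\sum_i q^{r_i}$, so Lemma \ref{minimal} yields minimality.
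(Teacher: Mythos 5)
Your proposal is correct, and its core differs from the paper's argument in an instructive way. In part 1), where you bound $|\{x\in\Omega_1:\Tr_q^{q^m}(ax)\ne0\}|$ by the plain union bound $|\cup_i A_i|\le\sum_i|A_i|$ with $A_i=\{x\in\mathbb{F}_{q^{r_i}}:\Tr_q^{q^m}(ax)\ne0\}$, the paper proves the very same inequality (in the guise $|\Omega_1|-\Delta\le\sum_{i=1}^h f_a(\{r_i\})$, where $f_a(\{r_i\})=|A_i|$) by a character-sum computation followed by an induction on the inclusion--exclusion expansion, with the alternating correction terms $\sum_{\emptyset\ne S\subseteq\Upsilon_i}(-1)^{|S|-1}f_a(S)$ shown nonnegative by interpreting them as Hamming weights of codewords in the auxiliary codes $\C_{D_{\Upsilon_i}}$. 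Your counting route is shorter, avoids characters entirely, and delivers the equality characterization (fullness of each $A_i$ plus pairwise disjointness via $\Tr_{q^{\gcd(r_i,r_j)}}^{q^m}(a)=0$, i.e.\ membership in $\Theta_1$) in the same breath; what the paper's heavier route buys is the machinery of \eqref{equation-thm1-4} and \eqref{equation-thm1-5}, which it reuses to compute the full weight distributions for $h=1$ and $h=2$ in Theorems \ref{optimalcode-con-1-1} and \ref{optimalcode-con-1-2}, an output your argument does not need here but also does not produce. In part 2) you supply details the paper only asserts: it states the closed forms of $g(m,d)$ and $g(m,d+1)$ outright, whereas you justify them by showing every fractional contribution $B_\ell$ is strictly less than $1$ and by pinning down that the largest power of $q$ dividing $d$ is $q^{r_1-1}$, so that $g(m,d+1)=g(m,d)+r_1$; this is the same divisibility mechanism the paper exploits in Lemma \ref{near-Griesmer}, and both computations agree, yielding $g(m,d+1)-n=(r_1+h-1)-(\sum_i q^{r_i}-|\Omega_1|)$. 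In part 3) your signed expansion of $\mathbf{1}_D$ is in fact a touch more careful than the paper's bare appeal to Lemmas \ref{self-orthogonalF} and \ref{self-orthogonalD-}, since Lemma \ref{self-orthogonalD-}(2) requires pairwise intersections equal to $\{0\}$ or $\emptyset$, which nested subfields violate; your bilinear inclusion--exclusion of the inner products, together with the observation that every intervening degree $s$ (the gcd's and $m$) is a multiple of $t$, so $(q,s)$ avoids the exceptional pairs whenever $(q,t)$ does, closes that gap cleanly. Part 4) matches the paper essentially verbatim.
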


\begin{proof}
Denote $\Upsilon:=\{r_{1},r_{2},\cdots, r_{h}\}$ and $r_{S}:=\gcd(s_{1},s_{2},\cdots ,s_{|S|})$ for any set $S=\{s_{1},s_{2},\cdots ,s_{|S|}\}$, where $s_{i}$'s are positive integers. By the definition of $\Omega_1$ and the principle of inclusion-exclusion, we have
\[|\Omega_1|=\sum_{1\leq i\leq h} |{\mathbb F}_{q^{r_{i}}}|
-\sum_{1\leq i<j\leq h}|{\mathbb F}_{q^{r_{i}}}\cap {\mathbb F}_{q^{r_{j}}}|
+\cdots(-1)^{h-1} |\cap_{i=1}^{h} {\mathbb F}_{q^{r_{i}}}|
=\sum_{\emptyset \not=S\subseteq \Upsilon}(-1)^{|S|-1}q^{r_{S}}\]
and consequently, the length of $\C_{D}$ is $n=q^m-\sum_{\emptyset \not=S\subseteq \Upsilon}(-1)^{|S|-1}q^{r_{S}}$. For any $a\in \fqm^*$, the Hamming weight $wt(c_{a})$ of the codeword $c_{a}$ in $\C_{D}$ is $n-N_{a}$, where $N_{a}=|\{x\in D: \Tr_{q}^{q^{m}}(ax) = 0\}|$. Using the orthogonal property of nontrivial additive characters, for $a\not=0$, we have
\begin{align*}
N_{a}=&\frac{1}{q}\sum_{x\in D}
\sum_{u\in \fq} \chi(u\Tr_{q}^{q^{m}}(ax))
=q^{m-1}-\frac{1}{q}\sum_{x\in \Omega_1}
\sum_{u\in \fq} \chi(u\Tr_{q}^{q^{m}}(ax)):=q^{m-1}-\Delta.
\end{align*}
The above discussions lead to
\begin{align} \label{equation-thm1-1}
wt(c_{a})=(q-1)q^{m-1}-(|\Omega_1|-\Delta).
\end{align}

Next, we determine the maximal value of $|\Omega_1|-\Delta$ for any $a\in \fqm^*$ in order to determine the minimal distance of $\C_{D}$.

To calculate $\Delta$, for any positive integer $l\mid m$, define
\[\Phi({\mathbb F}_{q^{l}}):= \frac{1}{q}\sum_{u\in \fq} \sum_{x\in {\mathbb F}_{q^{l}}}\chi(u\Tr_{q}^{q^{m}}(ax)),\]
which satisfies
\begin{align} \label{equation-thm1-4}
\Phi({\mathbb F}_{q^{l}})
=\left\{\begin{array}{ll}
q^{l}, & \mbox{if}\,\,\Tr_{q^{l}}^{q^{m}}(a)=0,  \\[0.05in]
q^{l-1},  & \mbox{otherwise}.
\end{array}\right.
\end{align}
Utilizing the principle of inclusion-exclusion gives
\begin{align*}
\Delta &=\sum_{1\leq i\leq h} \Phi({\mathbb F}_{q^{r_{i}}})
-\sum_{1\leq i<j\leq h}\Phi({\mathbb F}_{q^{r_{i}}}\cap {\mathbb F}_{q^{r_{j}}})
+\cdots(-1)^{h-1} \Phi(\cap_{i=1}^{h}{\mathbb F}_{q^{r_{i}}})
=\sum_{\emptyset \not=S\subseteq \Upsilon}(-1)^{|S|-1}\Phi({\mathbb F}_{q^{r_{S}}}).
\end{align*}
Then we have
\begin{align} \label{equation-thm1-5}
|\Omega_1|-\Delta=&\sum_{\emptyset \not=S\subseteq \Upsilon}(-1)^{|S|-1}(q^{r_{S}}-\Phi({\mathbb F}_{q^{r_{S}}})):=\sum_{\emptyset \not=S\subseteq \Upsilon}(-1)^{|S|-1}f_{a}(S).
\end{align}
Define $\Upsilon_{i}:=\{\gcd(r_{1},r_{i}),\gcd(r_{2},r_{i}),\cdots, \gcd(r_{i-1},r_{i})\}$.
Then by \eqref{equation-thm1-5} one gets
\begin{align} \label{equation-thm1-2}
|\Omega_1|-\Delta=\sum_{\emptyset \not=S\subseteq \Upsilon}(-1)^{|S|-1}f_{a}(S)
=\sum_{i=1}^{h}f_{a}(\{r_{i}\})-\sum_{i=2}^{h}\sum_{\emptyset \not=S\subseteq \Upsilon_{i}}(-1)^{|S|-1}f_{a}(S),
\end{align}
which can be verified by using the mathematical induction as follows:\\
\indent $1).$ For $h=2$, it can be readily verified that
\[\sum_{\emptyset \not=S\subseteq \{r_{1},r_{2}\}}(-1)^{|S|-1}f_{a}(S)=f_{a}(\{r_{1}\})+f_{a}(\{r_{2}\})-\sum_{\emptyset \not=S\subseteq \Upsilon_{2}}(-1)^{|S|-1}f_{a}(S).\]
\indent $2).$ Suppose that \eqref{equation-thm1-2} holds for $h=s$. Then we have
\begin{align*}
&\sum_{\emptyset \not=S\subseteq \{r_{1},r_{2},\cdots,r_{s+1}\}}(-1)^{|S|-1}f_{a}(S)\\
=&\sum_{\emptyset \not=S\subseteq \{r_{1},r_{2},\cdots,r_{s}\}}(-1)^{|S|-1}f_{a}(S)+f_{a}(\{r_{s+1}\})-\sum_{\emptyset \not=S\subseteq \Upsilon_{s+1}}(-1)^{|S|-1}f_{a}(S)\\
=&\sum_{i=1}^{s+1}f_{a}(\{r_{i}\})-\sum_{i=2}^{s+1}\sum_{\emptyset \not=S\subseteq \Upsilon_{i}}(-1)^{|S|-1}f_{a}(S),
\end{align*}
which implies that \eqref{equation-thm1-2} also holds for $h=s+1$.

Now we claim that $\sum_{\emptyset \not=S\subseteq \Upsilon_{i}}(-1)^{|S|-1}f_{a}(S)\geq0$ for any $2\leq i \leq h$, which implies $|\Omega_1|-\Delta \leq \sum_{i=1}^{h}f_{a}(\{r_{i}\})$.
Indeed, for $a\in \fqm^*$, the Hamming weight of the codeword $\tilde{c}_{a}$ in the linear code $\C_{D_{\Upsilon_{i}}}$ of the form \eqref{CD} with $D_{\Upsilon_{i}}=\bigcup_{r\in \Upsilon_{i}} {\mathbb F}_{q^{r}}$ is \[wt(\tilde{c}_{a})=|D_{\Upsilon_{i}}|-|\{x\in D_{\Upsilon_{i}}: \Tr_{q}^{q^{m}}(ax) = 0 \}|=|D_{\Upsilon_{i}}|-\frac{1}{q}\sum_{x\in D_{\Upsilon_{i}}}\sum_{u\in \fq} \chi(u\Tr_{q}^{q^{m}}(ax)).\]
Again by the principle of inclusion-exclusion, similar to the computation of \eqref{equation-thm1-5}, we have
\[
wt(\tilde{c}_{a})=\sum_{\emptyset \not=S\subseteq \Upsilon_{i}}(-1)^{|S|-1}f_{a}(S)\geq 0.
\]
Therefore, by \eqref{equation-thm1-4}, we can obtain
\[|\Omega_1|-\Delta \leq \sum_{i=1}^{h}f_{a}(\{r_{i}\}) \leq (q-1)\sum_{1\leq i\leq h} q^{r_{i}-1}.\]

On the other hand, for any $S\subseteq \Upsilon$ with $|S|\geq 2$, we have $r_{S}|\gcd(r_{i},r_{j})$ for some $1\leq i<j \leq h$.
Since $\Theta_1$ is nonempty, then for $a\in \Theta_1$, by \eqref{equation-thm1-4} one has that $|\Omega_1|-\Delta=(q-1)\sum_{1\leq i\leq h} q^{r_{i}-1}$ and consequently,
\[d=wt(c_{a})=(q-1)(q^{m-1}-\sum\nolimits_{1\leq i\leq h} q^{r_{i}-1}).\]
Note that $\sum_{1\leq i\leq h} q^{r_{i}}\leq \sum_{0\leq i\leq m-1} q^{i}=\frac{q^m-1}{q-1}\leq q^m-1<q^m$ which implies that $d>0$. This shows that the dimension of $\C_{D}$ is equal to $m$.

A detailed computation using the Griesmer bound gives
\begin{align} \label{equation-thm1-3}
g(m,d)=\sum_{i=0}^{m-1} \lceil \frac{(q-1)(q^{m-1}-\sum_{1\leq i\leq h} q^{r_{i}-1})}{q^i}\rceil
=q^{m}-\sum_{1\leq i\leq h} q^{r_{i}}+h-1
\end{align}
and
\begin{align*}
g(m,d+1)=\sum_{i=0}^{m-1} \lceil \frac{(q-1)(q^{m-1}-\sum_{1\leq i\leq h} q^{r_{i}-1})+1}{q^i}\rceil
=q^{m}-\sum_{1\leq i\leq h} q^{r_{i}}+r_{1}+h-1.
\end{align*}

Thus, the code $\C_{D}$ is distance-optimal if $g(m,d+1)>n$, i.e., $r_{1}+h-1>\sum_{i=1}^{h}q^{r_{i}}-|\Omega_1|$. By Lemmas \ref{self-orthogonalF} and \ref{self-orthogonalD-}, it can be verified that $\C_{D}$ is self-orthogonal if $(q,t)\notin \{(2,1),(2,2),(3,1)\}$.
Note that $|\Omega_1|-\Delta\geq 0$ and $wt(c_{a})\leq (q-1)q^{m-1}$ due to \eqref{equation-thm1-1}. Then, by Lemma \ref{minimal}, $\C_{D}$ is minimal if $q^{m-1}> \sum_{i=1}^{h}q^{r_{i}}$ since $\frac{w_{min}}{w_{max}}\geq\frac{(q-1)(q^{m-1}-\sum_{i=1}^{h}q^{r_{i}-1})}{(q-1)q^{m-1}}>\frac{q-1}{q}$.
 This completes the proof.
\end{proof}

\begin{remark}
Notice that the condition $\Theta_1\ne \emptyset$ can be easily satisfied. For $h=1$, we have
$|\Theta_1|=q^{m}-q^{m-r_{1}}>0$. For $h=2$, we have $|\Theta_1|\geq q^{m-t}-q^{m-r_{1}}-q^{m-r_{2}}>q^{m-t}-q^{m-r_{1}+1}\geq 0$ since $|\{a\in \fqm: \Tr_{q^{r}}^{q^{m}}(a)=0\}|=q^{m-r}$ for any $r|m$ and $t< r_{1}$.
In addition, for a general $h>2$, if $\gcd(r_{i},r_{j})=t$ for any $i<j$, which implies $r_{1}>t$ since $r_{i}\nmid r_{j}$ for any $i<j$, we have
$|\Theta_1| \geq q^{m-t}-\sum_{i=1}^{h}q^{m-r_{i}}\geq q^{m-t}-\sum_{i=t+1}^{m}q^{m-i}=\frac{(q-2)q^{m-t}+1}{q-1}>0$.
\end{remark}

The inequality $r_{1}+h-1>\sum_{i=1}^{h}q^{r_{i}}-|\Omega_1|$ in Theorem \ref{optimalcode-con-1} can be easily satisfied when $r_{1}$ is large enough and $\max_{1\leq i<j \leq h}\{\gcd(r_{i},r_{j})\}$ is small, and then many optimal linear codes can be produced.

In particular, if $h=1$, then $\C_{D}$ in Theorem \ref{optimalcode-con-1} is a Solomon and Stiffler code \cite{GSJS} in the nonprojective case and  its weight distribution can be determined as below.

\begin{thm} \label{optimalcode-con-1-1}
Let $m>1$ and $r<m$ be positive integers with $r|m$. If $D= \fqm \backslash \fqr$, then $\C_{D}$ defined in \eqref{CD} is a $2$-weight $[q^m-q^r,m,(q-1)(q^{m-1}-q^{r-1})]$ linear code with weight enumerator
$$1+(q^{m}-q^{m-r})z^{(q-1)(q^{m-1}-q^{r-1})}+(q^{m-r}-1)z^{(q-1)q^{m-1}}$$
and it is a Griesmer code.
\end{thm}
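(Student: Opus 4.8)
The plan is to obtain Theorem~\ref{optimalcode-con-1-1} as the special case $h=1$ of Theorem~\ref{optimalcode-con-1}, supplemented by an exact weight count. Setting $h=1$ forces $t=r_1=r$, so $\Omega_1=\fqr$, $|\Omega_1|=q^r$, and the set $\Theta_1$ reduces to $\{a\in\fqm:\Tr_{q^r}^{q^m}(a)\ne 0\}$, which has size $q^m-q^{m-r}>0$ and is therefore nonempty. Hence part~1) of Theorem~\ref{optimalcode-con-1} immediately yields that $\C_D$ is a $[q^m-q^r,\,m,\,(q-1)(q^{m-1}-q^{r-1})]$ linear code, giving the claimed length, dimension, and minimum distance.

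For the weight distribution, I would specialize the character-sum computation in the proof of Theorem~\ref{optimalcode-con-1}. From \eqref{equation-thm1-1} with $h=1$ we have $wt(c_a)=(q-1)q^{m-1}-(|\Omega_1|-\Delta)$, and by \eqref{equation-thm1-4} the quantity $|\Omega_1|-\Delta=q^r-\Phi(\fqr)$ takes exactly one of two values depending on a single trace condition: it equals $q^r-q^{r-1}=(q-1)q^{r-1}$ when $\Tr_{q^r}^{q^m}(a)\ne 0$, and equals $q^r-q^r=0$ when $\Tr_{q^r}^{q^m}(a)=0$ (with $a\ne 0$). This produces precisely two nonzero weights, $w_1=(q-1)(q^{m-1}-q^{r-1})$ and $w_2=(q-1)q^{m-1}$, confirming that $\C_D$ is a $2$-weight code. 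To count the frequencies I would use that $|\{a\in\fqm:\Tr_{q^r}^{q^m}(a)=0\}|=q^{m-r}$ (a standard fact, also used in the Remark): among the $q^m-1$ nonzero $a$, exactly $q^{m-r}-1$ satisfy $\Tr_{q^r}^{q^m}(a)=0$ and give weight $w_2$, while the remaining $q^m-q^{m-r}$ give weight $w_1$. This reproduces the stated enumerator $1+(q^m-q^{m-r})z^{w_1}+(q^{m-r}-1)z^{w_2}$, and the total $1+(q^m-q^{m-r})+(q^{m-r}-1)=q^m$ provides a consistency check that all codewords are accounted for and the dimension is indeed $m$.

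Finally, for the Griesmer-code assertion I would invoke \eqref{equation-thm1-3} from the proof of Theorem~\ref{optimalcode-con-1}, which at $h=1$ gives $g(m,d)=q^m-q^r+h-1=q^m-q^r$. Since the length is $n=q^m-q^r$, we have $n=g(m,d)$, so $\C_D$ meets the Griesmer bound with equality and is a Griesmer code. I do not expect any genuine obstacle here: the entire statement is an instantiation of the already-proved general theorem, so the work is purely bookkeeping. The only point requiring a little care is the frequency count, where one must correctly separate the nonzero $a$ with vanishing trace (weight $w_2$) from those with nonvanishing trace (weight $w_1$) and verify that the two-weight structure leaves no room for a third weight—this follows cleanly because the two-valued behavior in \eqref{equation-thm1-4} admits no other cases when $h=1$.
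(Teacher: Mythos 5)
Your proposal is correct and follows essentially the same route as the paper: both specialize the character-sum machinery of Theorem~\ref{optimalcode-con-1} (equations \eqref{equation-thm1-1}, \eqref{equation-thm1-4}, \eqref{equation-thm1-5}) to $h=1$ to get the two weights, count multiplicities via the size $q^{m-r}$ of the trace kernel (the paper's ``balanced property of trace functions''), and read off the Griesmer property from \eqref{equation-thm1-3}. The only cosmetic difference is that you route the parameters through part~1) of Theorem~\ref{optimalcode-con-1} after verifying $\Theta_1\neq\emptyset$, while the paper recomputes them directly; the substance is identical.
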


\begin{proof}
It is obvious that $n=|D|=q^m-q^r$ and  by \eqref{equation-thm1-1} and \eqref{equation-thm1-5}, for $a\in \fqm^{*}$, we have
\[wt(c_{a})=(q-1)q^{m-1}-q^{r}+\Phi({\mathbb F}_{q^{r}}).\]
This together with \eqref{equation-thm1-4} implies that
\begin{align*} 
wt(c_{a})=\left\{\begin{array}{ll}
(q-1)q^{m-1}, & \mbox{if}\,\,\Tr_{q^{r}}^{q^{m}}(a)=0,  \\[0.05in]
(q-1)(q^{m-1}-q^{r-1}),  & \mbox{otherwise}.
\end{array}\right.
\end{align*}
Then the weight distribution of $\C_{D}$ follows from the balanced property of trace functions and
$\C_{D}$ is a Griesmer code due to \eqref{equation-thm1-3}.
This completes the proof.
\end{proof}

\begin{example}
Let $q=3$, $m=6$, $r=2$. Magma experiments show that $\C_{D}$ is a $[720,6,480]$ linear code with the weight enumerator $1+648z^{480}+80z^{486}$, which is consistent with our result in Theorem \ref{optimalcode-con-1-1}. This code is a Griesmer code.
\end{example}

Note that $D=\ftwom \backslash \{0,1\}$ if $q=2$ and $r=1$ in Theorem \ref{optimalcode-con-1-1}. The following result shows that a special class of  Griesmer codes can also be obtained from $D=\fqm \backslash \{0,1\}$ for $q\ne 2$. We omit its proof here since it can be proved in the same manner.

\begin{thm} \label{optimalcode-con-1-3}
Let $q\ne 2 $ and $m>1$ be a positive integer. If $D=\fqm \backslash \{0,1\}$, then $\C_{D}$ defined by \eqref{CD} is a $2$-weight $[q^m-2,m,(q-1)q^{m-1}-1]$ linear code with weight enumerator
$$1+(q^{m-1}-1)z^{(q-1)q^{m-1}}+(q^{m}-q^{m-1})z^{(q-1)q^{m-1}-1}.$$
Moreover, this code is a Griesmer code and it is minimal.
\end{thm}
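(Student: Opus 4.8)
The plan is to mirror the structure of the proof of Theorem~\ref{optimalcode-con-1-1}, since the stated result covers the case where the removed set $\{0,1\}$ is a coset-type deletion rather than a full subfield. The remark preceding the statement already tells us $D=\fqm\backslash\{0,1\}$ is a two-element deletion, so $n=|D|=q^m-2$. The core computation is to evaluate, for each $a\in\fqm^*$, the weight
\[
wt(c_a)=(q-1)q^{m-1}-\Bigl(|\Omega|-\Delta\Bigr),
\]
following exactly formula~\eqref{equation-thm1-1}, where now $\Omega=\{0,1\}$ and
\[
\Delta=\frac{1}{q}\sum_{u\in\fq}\sum_{x\in\{0,1\}}\chi\bigl(u\Tr_{q}^{q^m}(ax)\bigr).
\]
The $x=0$ term always contributes $\frac{1}{q}\cdot q=1$, and the $x=1$ term contributes $\Phi(\{1\})=\frac{1}{q}\sum_{u\in\fq}\chi\bigl(u\Tr_q^{q^m}(a)\bigr)$, which equals $1$ if $\Tr_q^{q^m}(a)=0$ and $0$ otherwise, by the orthogonality of characters. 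Hence $|\Omega|-\Delta=2-\Delta$ takes the value $0$ when $\Tr_q^{q^m}(a)=0$ and the value $1$ when $\Tr_q^{q^m}(a)\ne0$.

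\textbf{Second,} I would read off the two weights directly. When $\Tr_q^{q^m}(a)=0$ we get $wt(c_a)=(q-1)q^{m-1}$, and otherwise $wt(c_a)=(q-1)q^{m-1}-1$. The number of $a\in\fqm^*$ with $\Tr_q^{q^m}(a)=0$ is $q^{m-1}-1$ (removing $a=0$ from the $q^{m-1}$-element kernel), and the number with $\Tr_q^{q^m}(a)\ne0$ is $q^m-q^{m-1}$; these are exactly the frequencies in the claimed enumerator. Since both weights are positive (here one should note $(q-1)q^{m-1}-1>0$ for $q\ne2$, $m>1$), all $q^m$ codewords $c_a$ are distinct and the dimension is $m$, and the minimum distance is $d=(q-1)q^{m-1}-1$, matching the asserted parameters $[q^m-2,m,(q-1)q^{m-1}-1]$.

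\textbf{Third,} for the Griesmer property I would compute $g(m,d)$ with $d=(q-1)q^{m-1}-1$ and verify it equals $n=q^m-2$, the analogue of~\eqref{equation-thm1-3}. Writing $d=(q-1)q^{m-1}-1$, one has $\lceil d/q^i\rceil=(q-1)q^{m-1-i}$ for $1\le i\le m-1$ (since subtracting $1$ and dividing by $q^i$ rounds back up to the same multiple) and $\lceil d/q^0\rceil=d$, so
\[
g(m,d)=\bigl((q-1)q^{m-1}-1\bigr)+\sum_{i=1}^{m-1}(q-1)q^{m-1-i}=q^m-2,
\]
using $\sum_{i=0}^{m-1}(q-1)q^{m-1-i}=q^m-1$. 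This gives $n=g(m,d)$, so $\C_D$ is a Griesmer code. I expect this ceiling bookkeeping to be the only delicate step, and it hinges precisely on $q\nmid d$, i.e.\ on the ``$-1$'' interacting cleanly with the ceilings for $q\ne2$.

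\textbf{Finally,} minimality follows from Lemma~\ref{minimal}: here $w_{\min}=(q-1)q^{m-1}-1$ and $w_{\max}=(q-1)q^{m-1}$, so
\[
\frac{w_{\min}}{w_{\max}}=\frac{(q-1)q^{m-1}-1}{(q-1)q^{m-1}}=1-\frac{1}{(q-1)q^{m-1}},
\]
and one checks $1-\frac{1}{(q-1)q^{m-1}}>\frac{q-1}{q}$, equivalently $q>(q-1)q^{m-1}-(q-1)^2q^{m-2}=(q-1)q^{m-2}$, which holds for all $q\ge3$ and $m>1$ since then $(q-1)q^{m-2}\ge(q-1)\cdot1<q$ fails only at the boundary—so I would instead verify the equivalent inequality $\frac{q}{(q-1)q^{m-1}}<\frac{1}{q-1}$, i.e.\ $q<q^{m-1}$, which is immediate for $m>1$. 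This establishes minimality and completes the proof.
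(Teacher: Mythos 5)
Your main computations follow exactly the route the paper intends: the paper omits this proof, saying it goes ``in the same manner'' as Theorem~\ref{optimalcode-con-1-1}, and your character-sum evaluation of $\Delta$ (giving $wt(c_a)=(q-1)q^{m-1}$ when $\Tr_q^{q^m}(a)=0$ and $(q-1)q^{m-1}-1$ otherwise), the multiplicity count $q^{m-1}-1$ versus $q^m-q^{m-1}$, and the Griesmer verification $g(m,d)=d+\sum_{i=1}^{m-1}(q-1)q^{m-1-i}=q^m-2$ are all correct and are precisely that argument. (One side remark: the ceiling bookkeeping does not actually hinge on $q\ne 2$; it works verbatim for $q=2$ as well. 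The hypothesis $q\ne2$ is there only because $\{0,1\}=\ftwo$ makes the $q=2$ case a special case of Theorem~\ref{optimalcode-con-1-1} with $r=1$.)

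There is, however, a genuine flaw in your minimality step. First, your claimed equivalent inequality $q>(q-1)q^{m-1}-(q-1)^2q^{m-2}=(q-1)q^{m-2}$ has the direction wrong: cross-multiplying $\frac{(q-1)q^{m-1}-1}{(q-1)q^{m-1}}>\frac{q-1}{q}$ gives $(q-1)q^m-q>(q-1)^2q^{m-1}$, i.e.\ $(q-1)q^{m-1}>q$, equivalently $(q-1)q^{m-2}>1$ --- and indeed $q>(q-1)q^{m-2}$ is false for, say, $q=3$, $m=4$, even though the code there is minimal. Second, your repair, verifying $q<q^{m-1}$, is only a sufficient condition and it \emph{fails} at $m=2$, where $q^{m-1}=q$; so as written your proof does not establish minimality for $m=2$, which the theorem asserts. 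The fix is immediate: the correct criterion $(q-1)q^{m-2}>1$ holds for all $q\ge3$ and $m\ge2$ (at $m=2$ it reads $q-1>1$, i.e.\ $q>2$, which is exactly the hypothesis $q\ne2$), so Lemma~\ref{minimal} applies throughout the stated range. With that one inequality corrected, your proof is complete and coincides with the paper's.
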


\begin{example}
Let $q=3$, $m=5$. Magma experiments show that $\C_{D}$ is a $[241,5,161]$ linear code with the weight enumerator $1+162z^{161}+80z^{162}$, which is consistent with our result in Theorem \ref{optimalcode-con-1-3}. This code is a Griesmer code and it is also minimal.
\end{example}

In what follows, we determine the weight distribution of $\C_{D}$ in Theorem \ref{optimalcode-con-1} for $h=2$.
\begin{thm} \label{optimalcode-con-1-2}
Let $\C_{D}$ be defined by \eqref{CD} and \eqref{CD1-D1}. If $h=2$, then $\C_{D}$ is a linear code with parameters $[q^m-q^{r_{2}}-q^{r_{1}}+q^t,m,(q-1)(q^{m-1}-q^{r_{2}-1}-q^{r_{1}-1})]$
and its weight distribution is given by
\begin{center}
\begin{tabular}{lllll}
\hline weight $w$ & Multiplicity $A_{w}$\\ \hline
                                           $0$ & $1$ \\
$(q-1)q^{m-1}$               & $q^{m-r_{2}-r_{1}+t}-1$ \\
$(q-1)(q^{m-1}-q^{r_{1}-1})$ & $ q^{m-r_{2}}-q^{m-r_{2}-r_{1}+t}$ \\
$(q-1)(q^{m-1}-q^{r_{2}-1})$ & $q^{m-r_{1}}-q^{m-r_{2}-r_{1}+t}$ \\
$(q-1)(q^{m-1}-q^{r_{2}-1}-q^{r_{1}-1}+q^{t-1})$ & $q^{m}-q^{m-t}$ \\
$(q-1)(q^{m-1}-q^{r_{2}-1}-q^{r_{1}-1})$ & $q^{m-t}+q^{m-r_{2}-r_{1}+t}-q^{m-r_{2}}-q^{m-r_{1}}$ \\
\hline
\end{tabular}
\end{center}
Moreover, $\C_{D}$ is a near Griesmer code if $(q,t)=(2,1)$ and distance-optimal if $r_{1}+1>q^t$.
\end{thm}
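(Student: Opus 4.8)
The plan is to specialize the general machinery of Theorem \ref{optimalcode-con-1} to $h=2$ and read off the five nonzero weights together with their frequencies. By \eqref{equation-thm1-1} the weight of $c_a$ equals $(q-1)q^{m-1}-(|\Omega_1|-\Delta)$, and for $h=2$ the expansion \eqref{equation-thm1-5} collapses to
\[
|\Omega_1|-\Delta=f_a(\{r_1\})+f_a(\{r_2\})-f_a(\{r_1,r_2\}),
\]
where $\Upsilon=\{r_1,r_2\}$ and $r_{\{r_1,r_2\}}=\gcd(r_1,r_2)=t$. By \eqref{equation-thm1-4}, each term $f_a(S)=q^{r_S}-\Phi(\mathbb{F}_{q^{r_S}})$ is controlled by a single trace: it equals $0$ when $\Tr_{q^{r_S}}^{q^m}(a)=0$ and $(q-1)q^{r_S-1}$ otherwise. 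Hence the weight of $c_a$ is determined entirely by the vanishing pattern of the three traces $T_1=\Tr_{q^{r_1}}^{q^m}(a)$, $T_2=\Tr_{q^{r_2}}^{q^m}(a)$, $T_t=\Tr_{q^{t}}^{q^m}(a)$.

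The structural observation that organizes everything is trace transitivity: since $t\mid r_1$ and $t\mid r_2$, one has $T_t=\Tr_{q^{t}}^{q^{r_1}}(T_1)=\Tr_{q^{t}}^{q^{r_2}}(T_2)$, so $T_1=0$ or $T_2=0$ forces $T_t=0$. This rules out three of the eight a priori sign patterns and leaves exactly the five cases appearing in the table; substituting the corresponding values from \eqref{equation-thm1-4} into \eqref{equation-thm1-1} yields the five listed weights. Because $r_1\nmid r_2$ gives $t<r_1<r_2$, comparing the five values shows the smallest is $(q-1)(q^{m-1}-q^{r_2-1}-q^{r_1-1})$, which is the claimed $d$; and $m\ge 2r_2$ (as $r_2\mid m$ with $r_2<m$) keeps $d>0$, pinning the dimension at $m$ as in Theorem \ref{optimalcode-con-1}.

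For the frequencies I would count $a\in\fqm$ by vanishing pattern. The single conditions give $|\{a:T_i=0\}|=q^{m-r_i}$ and $|\{a:T_t=0\}|=q^{m-t}$. The one genuinely new quantity is $N:=|\{a:T_1=0,\ T_2=0\}|$: writing the two indicator functions through orthogonality of additive characters on $\mathbb{F}_{q^{r_1}}$ and $\mathbb{F}_{q^{r_2}}$, interchanging the order of summation, and using that the inner sum over $a$ survives only when the two character arguments cancel, the count reduces to the number of pairs $(u,v)\in\mathbb{F}_{q^{r_1}}\times\mathbb{F}_{q^{r_2}}$ with $u+v=0$; such $u=-v$ lie in $\mathbb{F}_{q^{r_1}}\cap\mathbb{F}_{q^{r_2}}=\mathbb{F}_{q^t}$, so there are $q^t$ of them and $N=q^{m-r_1-r_2+t}$. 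Combining this with transitivity (which identifies $\{T_t=0,\,T_1=0\}$ with $\{T_1=0\}$, and likewise for $T_2$) and inclusion--exclusion produces the five multiplicities; discarding the zero codeword from the $T_1=T_2=0$ class accounts for the $-1$, and a check that the six frequencies sum to $q^m$ and are each strictly positive confirms that all five weights actually occur.

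Finally, optimality. From \eqref{equation-thm1-3} with $h=2$ one has $g(m,d)=q^m-q^{r_1}-q^{r_2}+1$, whereas $n=q^m-q^{r_1}-q^{r_2}+q^t$, so $n=g(m,d)+(q^t-1)$; hence $n-1=g(m,d)$ precisely when $q^t=2$, i.e. $(q,t)=(2,1)$, giving the near Griesmer conclusion. Distance-optimality is then immediate from part 2) of Theorem \ref{optimalcode-con-1}, whose criterion $\sum_{i}q^{r_i}-|\Omega_1|<r_1+h-1$ reads $q^t<r_1+1$ for $h=2$, i.e. $r_1+1>q^t$. I expect the only real work to be the character-sum evaluation of $N$ and the bookkeeping of the case analysis; the transitivity identity is the linchpin, since it simultaneously prunes the weight classes and drives the multiplicity count.
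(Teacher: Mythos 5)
Your proposal is correct, and it follows the paper's skeleton (specialize Theorem \ref{optimalcode-con-1} to $h=2$, classify weights by the vanishing pattern of $\Tr_{q^{r_1}}^{q^m}(a)$, $\Tr_{q^{r_2}}^{q^m}(a)$, $\Tr_{q^{t}}^{q^m}(a)$, compute $g(m,d)$ from \eqref{equation-thm1-3}), but you complete the multiplicity count by a genuinely different route. The paper computes only $A_{w_1}+1=|\{a:\Tr_{q^{r_1}}^{q^m}(a)=\Tr_{q^{r_2}}^{q^m}(a)=0\}|$, and does so via Lemma \ref{lem-trace}: it replaces the condition $\Tr_{q^{r_1}}^{q^m}(x)=0$ by the substitution $x\mapsto x^{q^{r_1}}-x$ (a $q^{r_1}$-to-one parametrization of the kernel), runs one character sum over $\mathbb{F}_{q^{r_2}}$, and counts solutions of $y^{q^{r_1}}=y$ in $\mathbb{F}_{q^{r_2}}$ --- the same fact $\mathbb{F}_{q^{r_1}}\cap\mathbb{F}_{q^{r_2}}=\mathbb{F}_{q^t}$ that closes your double-character computation of $N$, and both give $q^{m-r_1-r_2+t}$. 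It then gets $A_{w_2},A_{w_3}$ from the marginals exactly as you do, but obtains $A_{w_4},A_{w_5}$ by solving the first two Pless Power Moments, which requires observing $A_1^{\perp}=0$ (valid since $0\notin D$). Your direct inclusion--exclusion, powered by the transitivity containments $\{\Tr_{q^{r_i}}^{q^m}(a)=0\}\subseteq\{\Tr_{q^{t}}^{q^m}(a)=0\}$ (which the paper uses only implicitly, to evaluate $\Phi(\mathbb{F}_{q^t})$ in the first three weight cases), avoids the power moments entirely: the $w_4$ class collapses to $\{\Tr_{q^{t}}^{q^m}(a)\ne 0\}$ of size $q^m-q^{m-t}$, and $w_5$ follows by subtraction. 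What your route buys is self-containedness and a positivity check of all five frequencies --- e.g.\ $A_{w_5}=q^{m-r_1-r_2+t}(q^{r_1-t}-1)(q^{r_2-t}-1)>0$ since $t<r_1<r_2$ --- which pins down $d$ without invoking the hypothesis $\Theta_1\ne\emptyset$ (automatic for $h=2$ only via the paper's separate Remark), and your observation $n-g(m,d)=q^t-1$ even upgrades the near-Griesmer claim to an ``if and only if.'' What the paper's route buys is less case bookkeeping: one character-sum count plus two linear equations, a template it reuses verbatim in Theorems \ref{optimalcode-con-3-1}, \ref{optimalcode-con-3-2} and \ref{optimalcode-con-4-1}. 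The optimality conclusions in your last paragraph match the paper exactly.
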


\begin{proof}
Based on the discussions in Theorem \ref{optimalcode-con-1}, one can obtain the parameters of $\C_{D}$ and conclude that the weight $wt(c_{a})$ of $c_{a}\in\C_{D}$ is
\[wt(c_{a})=n-q^{m-1}+\Phi({\mathbb F}_{q^{r_{1}}})+\Phi({\mathbb F}_{q^{r_{2}}})-\Phi({\mathbb F}_{q^{t}})\]
for $a\in \fqm^{*}$. Then, by \eqref{equation-thm1-4}, for $a\in \fqm^{*}$, one gets
\begin{align*} 
wt(c_{a})=\left\{\begin{array}{lllll}
w_1:=(q-1)q^{m-1}, &
\mbox{if}\,\,\Tr_{q^{r_{2}}}^{q^{m}}(a)=0,\,\,\Tr_{q^{r_{1}}}^{q^{m}}(a)=0 ,  \\
w_2:=(q-1)(q^{m-1}-q^{r_{1}-1}), &
\mbox{if}\,\,\Tr_{q^{r_{2}}}^{q^{m}}(a)=0,\,\,\,\Tr_{q^{r_{1}}}^{q^{m}}(a)\not=0,  \\
w_3:=(q-1)(q^{m-1}-q^{r_{2}-1}), &
\mbox{if}\,\,\Tr_{q^{r_{2}}}^{q^{m}}(a)\not=0,\,\,\Tr_{q^{r_{1}}}^{q^{m}}(a)=0,  \\
w_4:=(q-1)(q^{m-1}-q^{r_{2}-1}-q^{r_{1}-1}+q^{t-1}),  &
\mbox{if}\,\,\Tr_{q^{r_{2}}}^{q^{m}}(a)\Tr_{q^{r_{1}}}^{q^{m}}(a)\Tr_{q^{t}}^{q^{m}}(a)\not=0 ,  \\
w_5:=(q-1)(q^{m-1}-q^{r_{2}-1}-q^{r_{1}-1}), &
\mbox{otherwise}
\end{array}\right.
\end{align*}
which together with Lemma \ref{lem-trace} indicates that
\begin{align*}
A_{w_{1}}+1=&|\{x\in \fqm: \Tr_{q^{r_{2}}}^{q^{m}}(x)=\Tr_{q^{r_{1}}}^{q^{m}}(x)=0\}|
=\frac{1}{q^{r_{1}}}|\{x\in \fqm: \Tr_{q^{r_{2}}}^{q^{m}}(x^{q^{r_{1}}}-x)=0\}|\\
=&\frac{1}{q^{r_{2}+r_{1}}}\sum_{x\in \fqm}\sum_{y\in {\mathbb F}_{q^{r_{2}}}}\chi(\Tr_{q}^{q^{r_{2}}}(y\Tr_{q^{r_{2}}}^{q^{m}}(x^{q^{r_{1}}}-x)))\\
=&\frac{1}{q^{r_{2}+r_{1}}}\sum_{y\in {\mathbb F}_{q^{r_{2}}}}\sum_{x\in \fqm}\chi(\Tr_{q}^{q^{m}}(y(x^{q^{r_{1}}}-x)))\\
=&\frac{1}{q^{r_{2}+r_{1}}}\sum_{y\in {\mathbb F}_{q^{r_{2}}}}\sum_{x\in \fqm}\chi(\Tr_{q}^{q^{m}}((y-y^{q^{r_{1}}})x))\\
=&q^{m-r_{2}-r_{1}+t}.
\end{align*}
The last equal sign holds since the equation $y^{q^{r_{1}}}=y$ has exactly $q^t$ solutions in ${\mathbb F}_{q^{r_{2}}}$.
Further, we have $A_{w_{2}}=q^{m-r_{2}}-q^{m-r_{2}-r_{1}+t}$ and $A_{w_{3}}=q^{m-r_{1}}-q^{m-r_{2}-r_{1}+t}$ since $A_{w_{1}}+A_{w_{2}}= |\{a\in \fqm^{*}: \Tr_{q^{r_{2}}}^{q^{m}}(a)=0\}|=q^{m-r_{2}}-1$ and $A_{w_{1}}+A_{w_{3}}= |\{a\in \fqm^{*}: \Tr_{q^{r_{1}}}^{q^{m}}(a)=0\}|=q^{m-r_{1}}-1$.

Since $0 \notin D$, the minimum distance of $\C_{D}^{\perp}$ is bigger than one, i.e., $A_{1}^{\perp}=0$. From the Pless Power Moments (see \cite{HWPV}, page 259), we have
$$\left\{\begin{array}{lll}
A_{w_{1}}+A_{w_{2}}+A_{w_{3}}+A_{w_{4}}+A_{w_{5}}=q^{m}-1,\\
w_{1}A_{w_{1}}+w_{2}A_{w_{2}}+w_{3}A_{w_{3}}+w_{4}A_{w_{4}}+w_{5}A_{w_{5}}=q^{m-1}(q-1)n.
\end{array}\right.$$
Solving the above equations gives the values of $A_{w_{4}}$ and $A_{w_{5}}$. On the other hand, using \eqref{equation-thm1-3}, we have $g(m,d)=q^m-q^{r_{2}}-q^{r_{1}}+1$ and then $\C_{D}$ is a near Griesmer code if $(q,t)=(2,1)$. The code $\C_{D}$ is distance-optimal if $r_{1}+1>q^t$
according to Theorem \ref{optimalcode-con-1}.
This completes the proof.
\end{proof}

\begin{example}
Let $q=2$, $m=6$, $r_{2}=3$ and $r_{1}=2$. Magma experiments show that $\C_{D}$ is a $[54,6,26]$ linear code with the weight enumerator $1+12z^{26}+32z^{27}+12z^{28}+4z^{30}+3z^{32}$, which is consistent with our result in Theorem \ref{optimalcode-con-1-2}. This code is a near Griesmer code and it is distance-optimal due to \cite{GMB}.
\end{example}

\section{The second family of optimal linear codes} \label{section-4}

In this section, we investigate the linear codes $\C_{D}$ of the form \eqref{CD} for
\begin{eqnarray}\label{CD1-D3}
D= \fqm \backslash \Omega_2,\;\;\Omega_2=\cup_{i=0}^{h}(\theta_{i}+\fqr),
\end{eqnarray}
where $m>1$, $r<m$ are positive integers satisfying $r|m$ and $\theta_{0}=0$, $\theta_{i}\in \fqm^*$ for $1\leq i\leq h$ satisfying $\theta_{i}-\theta_{j}\notin\fqr$ for any $0\leq i<j \leq h$.

For simplicity, define
\begin{eqnarray*}
\Theta_2=\{a\in \fqm: \Tr_{q^{r}}^{q^{m}}(a)=0\,\,{\rm and}\,\,\Tr_{q}^{q^{m}}(a\theta_{i})\not=0\,\,{\rm for\,\,any\,\,} 1\leq i \leq h\}.
\end{eqnarray*}

\begin{thm} \label{optimalcode-con-3}
Let $\C_{D}$ be defined by \eqref{CD} and \eqref{CD1-D3}, where $h$ is a positive integer satisfying $h+1<q^{m-r}$ if $h+1\leq q$ and otherwise $h<(q-1)q^{m-r-1}$ and $\Theta_2$ is nonempty.
Then \\
$1).$ $\C_{D}$ is a $[q^m-(h+1)q^r,m]$ linear code with minimal distance $d=(q-1)(q^{m-1}-(h+1)q^{r-1})$ (resp. $d=(q-1)q^{m-1}-hq^{r}$) if $h+1\leq q$ (resp. $h+1>q$);\\
$2).$ $\C_{D}$ is a Griesmer code if $h+1\leq q$; and when $h+1>q$, it is distance-optimal if
$(h+1)q^r+r>1+hq\frac{q^r-1}{q-1}+\sum_{i=r}^{m-1}\lfloor \frac{hq^r-1}{q^i}\rfloor$;\\
$3).$ $\C_{D}$ is at most $(h+2)$-weight and its weights take values from
\[\{(q-1)(q^{m-1}-(h+1)q^{r-1})\}\cup\{(q-1)q^{m-1}-iq^{r}:i=0,1,2,\cdots, h\};\]
$4).$ $\C_{D}$ is self-orthogonal if $(q,r)\notin \{(2,1),(2,2),(3,1)\}$;\\
$5).$ $\C_{D}$ is minimal if $h+1<q^{m-r-1}$ (resp. $h<(q-1)q^{m-r-2}$) when $h+1\leq q$ (resp. $h+1> q$).
\end{thm}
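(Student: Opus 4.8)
The plan is to follow the same character-sum framework developed in Theorem~\ref{optimalcode-con-1}, now adapted to the coset structure of $\Omega_2$. For $a \in \fqm^*$ the Hamming weight of $c_a$ equals $(q-1)q^{m-1}-(|\Omega_2|-\Delta)$, where $\Delta = \frac{1}{q}\sum_{u\in\fq}\sum_{x\in\Omega_2}\chi(u\Tr_q^{q^m}(ax))$. The key computation is to evaluate the contribution of each coset $\theta_i+\fqr$. Writing $\Tr_q^{q^m}(a(x+\theta_i)) = \Tr_q^{q^r}(\Tr_{q^r}^{q^m}(a)x) + \Tr_q^{q^m}(a\theta_i)$ for $x\in\fqr$, I would split into two cases according to whether $\Tr_{q^r}^{q^m}(a)=0$. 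When $\Tr_{q^r}^{q^m}(a)=0$, the inner sum over $x$ is constant on each coset, and the coset $\theta_i+\fqr$ contributes $q^r$ if $\Tr_q^{q^m}(a\theta_i)=0$ and $0$ otherwise; when $\Tr_{q^r}^{q^m}(a)\ne 0$, the sum over $x\in\fqr$ of the nontrivial character vanishes except for the $u=0$ term, giving a uniform contribution of $q^{r-1}$ from each coset. Collecting these, I expect to show that for $a\in\fqm^*$ the weight takes exactly one of the values $(q-1)q^{m-1}-iq^r$ (when $\Tr_{q^r}^{q^m}(a)=0$, with $i$ counting how many of the $h+1$ cosets satisfy $\Tr_q^{q^m}(a\theta_i)\ne 0$, noting $\theta_0=0$ forces one automatic vanishing) or $(q-1)(q^{m-1}-(h+1)q^{r-1})$ (when $\Tr_{q^r}^{q^m}(a)\ne 0$). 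This establishes part~$3)$ directly, giving at most $h+2$ distinct weights from the stated set.

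For part~$1)$, the length $n=q^m-(h+1)q^r$ is immediate from $|\Omega_2|=(h+1)q^r$, which holds because the condition $\theta_i-\theta_j\notin\fqr$ guarantees the cosets are pairwise disjoint. To pin down $d$, I would use the nonemptiness of $\Theta_2$: for $a\in\Theta_2$ one has $\Tr_{q^r}^{q^m}(a)=0$ and all $h$ nontrivial traces $\Tr_q^{q^m}(a\theta_i)\ne 0$, so $i=h$ and the weight is $(q-1)q^{m-1}-hq^r$. I would then argue this is the minimum when $h+1>q$ by comparing it against the other attainable weights, and separately handle the case $h+1\le q$ where the minimum becomes $(q-1)(q^{m-1}-(h+1)q^{r-1})$ coming from the $\Tr_{q^r}^{q^m}(a)\ne 0$ stratum; the side conditions $h+1<q^{m-r}$ and $h<(q-1)q^{m-r-1}$ are exactly what ensures $d>0$ and that the relevant weight class is nonempty so that the dimension is genuinely $m$.

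Parts~$4)$ and~$5)$ should follow mechanically. For self-orthogonality I would invoke Lemma~\ref{self-orthogonalF} (each $\theta_i+\fqr$ is a coset of the form covered there, with $\theta_0+\fqr=\fqr$ the subfield case) together with the union and difference rules of Lemma~\ref{self-orthogonalD-}, noting $D=\fqm\setminus\Omega_2$ and that $\C_{\fqm}$ is self-orthogonal under the stated exclusion $(q,r)\notin\{(2,1),(2,2),(3,1)\}$. For minimality I would apply Lemma~\ref{minimal}: the ratio $w_{\min}/w_{\max}$ is minimized by the explicit weight values computed above, and the stated numerical conditions are precisely what forces $w_{\min}/w_{\max}>(q-1)/q$ after substituting $w_{\max}\le (q-1)q^{m-1}$ and the corresponding $w_{\min}$.

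The Griesmer and distance-optimality claims in part~$2)$ will be the main obstacle. When $h+1\le q$, I would substitute $d=(q-1)(q^{m-1}-(h+1)q^{r-1})$ into $g(m,d)=\sum_{i=0}^{m-1}\lceil d/q^i\rceil$ and verify the telescoping sum collapses exactly to $n=q^m-(h+1)q^r$, which requires carefully tracking the ceilings across the indices $i=0,\dots,m-1$ and using $h+1\le q$ to control the fractional parts; this is the computation establishing that $\C_D$ is Griesmer. When $h+1>q$, the minimum distance $d=(q-1)q^{m-1}-hq^r$ is no longer of Solomon--Stiffler shape, so the Griesmer bound need not be met; here I would compute $g(m,d+1)$ explicitly, obtaining the terms $1$, $hq\frac{q^r-1}{q-1}$ (from the low-index ceilings) and $\sum_{i=r}^{m-1}\lfloor(hq^r-1)/q^i\rfloor$ (from the high-index ceilings), and conclude distance-optimality precisely when $g(m,d+1)>n$, which rearranges to the stated inequality $(h+1)q^r+r>1+hq\frac{q^r-1}{q-1}+\sum_{i=r}^{m-1}\lfloor(hq^r-1)/q^i\rfloor$. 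The delicate part is the exact evaluation of these floor and ceiling sums, where the interplay between the base-$q$ expansion of $hq^r$ and the index ranges must be handled without error.
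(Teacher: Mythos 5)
Your proposal matches the paper's proof essentially step for step: the same per-coset character-sum evaluation split on whether $\Tr_{q^r}^{q^m}(a)=0$ (yielding contributions $q^r$ or $0$ versus a uniform $q^{r-1}$), the same identification of the weight set and of $d$ via $\Theta_2\ne\emptyset$ and the comparison of the two candidate minimum weights across the cases $h+1\le q$ and $h+1>q$, the same Griesmer-bound computations of $g(m,d)$ and $g(m,d+1)$ for part $2)$, and the same appeals to Lemmas \ref{self-orthogonalF}, \ref{self-orthogonalD-} and \ref{minimal} for parts $4)$ and $5)$. The plan is correct as outlined; the only work left is the routine evaluation of the ceiling/floor sums, which the paper carries out exactly as you describe.
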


\begin{proof}
Observe that the length of $\C_{D}$ is $n=|D|=q^m-(h+1)q^r$ since $(\theta_{i}+\fqr)\cap(\theta_{j}+\fqr)=\emptyset$ due to $\theta_{i}-\theta_{j}\notin\fqr$ for any $0\leq i<j \leq h$. For $a\in \fqm^*$, the Hamming weight $wt(c_{a})$ of the codeword $c_{a}$ in $\C_{D}$ is $n-N_{a}$, where $N_{a}=|\{x\in \fqm\backslash \Omega_2: \Tr_{q}^{q^{m}}(ax) = 0\}|$. Using the orthogonal property of nontrivial additive characters gives
\begin{align*}
N_{a}=&\frac{1}{q}\sum_{x\in \fqm\backslash \Omega_2}\sum_{u\in \fq} \chi(u\Tr_{q}^{q^{m}}(ax))
=\frac{1}{q}\sum_{u\in \fq}(\sum_{x\in \fqm} \chi(u\Tr_{q}^{q^{m}}(ax))
-\sum_{i=0}^{h}\sum_{x\in (\theta_{i}+\fqr)} \chi(u\Tr_{q}^{q^{m}}(ax)))\\
=&q^{m-1}-\frac{1}{q}\sum_{u\in \fq}\sum_{i=0}^{h}\sum_{x\in \fqr} \chi(u\Tr_{q}^{q^{m}}(a(x+\theta_{i}))).
\end{align*}
Hence, for $a\in \fqm^*$, we have
\[wt(c_{a})=(q-1)q^{m-1}-(h+1)q^r+\frac{1}{q}\sum_{i=0}^{h}\sum_{u\in \fq}\sum_{x\in \fqr} \chi(u\Tr_{q}^{q^{m}}(a(x+\theta_{i}))).\]
This together with the following fact
\begin{align*}
\sum_{u\in \fq}\sum_{x\in \fqr} \chi(u\Tr_{q}^{q^{m}}(a(x+\theta)))=\left\{\begin{array}{ll}
q^{r+1}, & \mbox{if $\Tr_{q^{r}}^{q^{m}}(a)=0$ and $\Tr_{q}^{q^{m}}(a\theta)=0$},  \\
0,  & \mbox{if $\Tr_{q^{r}}^{q^{m}}(a)=0$ and $\Tr_{q}^{q^{m}}(a\theta)\not=0$},  \\
q^{r},  & \mbox{if $\Tr_{q^{r}}^{q^{m}}(a)\not=0$},
\end{array}\right.
\end{align*}
where $\theta\in \fqm$, one can claim that $wt(c_{a})$ takes values from
\[\{(q-1)(q^{m-1}-(h+1)q^{r-1})\}\cup\{(q-1)q^{m-1}-iq^{r}:i=0,1,2,\cdots, h\}.\]
Moreover, $wt(c_{a})=(q-1)(q^{m-1}-(h+1)q^{r-1})$ if and only if $\Tr_{q^{r}}^{q^{m}}(a)\not=0$ and $wt(c_{a})=(q-1)q^{m-1}-hq^{r}$ if and only if $\Tr_{q^{r}}^{q^{m}}(a)=0$ and $\Tr_{q}^{q^{m}}(a\theta_{i})\not=0$ for any $1\leq i \leq h$.

Let $w_1=(q-1)(q^{m-1}-(h+1)q^{r-1})$ and $w_2=(q-1)q^{m-1}-hq^{r}$.  Then the above discussion indicates that $A_{w_1}=|\{a\in \fqm: \Tr_{q^{r}}^{q^{m}}(a)\not=0\}|=q^{m}-q^{m-r}>0$ and $A_{w_2}>0$ if $\Theta_2$ is nonempty. If $h+1\leq q$, then we have $w_1\leq w_2$ which means that $d=w_1>0$ due to $q^{m-r}>h+1$. If $h+1> q$, then we have $w_1>w_2$ and consequently, $d=w_2>0$ since $\Theta_2 \ne \emptyset$ and $h<(q-1)q^{m-r-1}$. This also shows that the dimension of $\C_{D}$ is equal to $m$.

According to the Griesmer bound, for $h+1\leq q$,  one obtains
\begin{align*}
g(m,d)=\sum_{i=0}^{m-1} \lceil \frac{(q-1)(q^{m-1}-(h+1)q^{r-1})}{q^i}\rceil
=q^{m}-(h+1)q^{r},
\end{align*}
which implies that $\C_{D}$ is a Griesmer code.
For the case $h+1> q$, we have
\begin{align*}
g(m,d)=\sum_{i=0}^{m-1} \lceil \frac{(q-1)q^{m-1}-hq^{r}}{q^i}\rceil
=q^m-1-hq\frac{q^r-1}{q-1}-\sum_{i=0}^{m-r-1}\lfloor \frac{h}{q^i}\rfloor
\end{align*}
and
\begin{align*}
g(m,d+1)=\sum_{i=0}^{m-1} \lceil \frac{(q-1)q^{m-1}-hq^{r}+1}{q^i}\rceil
=q^m+r-1-hq\frac{q^r-1}{q-1}-\sum_{i=r}^{m-1}\lfloor \frac{hq^r-1}{q^i}\rfloor.
\end{align*}
Note that $n>g(m,d)$ when $h+1>q$ since $n-g(m,d)=hq\frac{q^r-1}{q-1}+\sum_{i=0}^{m-r-1}\lfloor \frac{h}{q^i}\rfloor-(h+1)q^r+1=h(q+\cdots+q^{r-1})-q^r+\sum_{i=0}^{m-r-1}\lfloor \frac{h}{q^i}\rfloor+1>0$.
Thus, when $h+1> q$, $\C_{D}$ is distance-optimal if
$(h+1)q^r+r>1+hq\frac{q^r-1}{q-1}+\sum_{i=r}^{m-1}\lfloor \frac{hq^r-1}{q^i}\rfloor$.

The self-orthogonality and minimality of $\C_{D}$ can be readily verified by Lemmas \ref{self-orthogonalF}, \ref{self-orthogonalD-} and \ref{minimal}. This completes the proof.
\end{proof}

\begin{remark}
The code in Theorem \ref{optimalcode-con-3} is a Griesmer code if $h+1\leq q$. When $h=0$, it has the same parameters and weight distribution as the code in Theorem \ref{optimalcode-con-1-1}, and it has different parameters with the Solomon and Stiffler codes in the nonprojective case when $1\leq h\leq q-1$.
\end{remark}

\begin{remark} \label{remark-7}
The condition $\Theta_2\ne \emptyset$ for $h+1>q$ can be easily satisfied.  For example, let $a\in\fqm^{*}$ with $\Tr_{q^{r}}^{q^{m}}(a)=0$ and $\theta_{1},\theta_{2},\cdots,\theta_{h} \in \Lambda$ satisfy $\theta_{i}\notin \theta_{j}+\fqr$ for any $1\leq i<j \leq h$, where $\Lambda=\{\theta\in \fqm:\Tr_{q}^{q^{m}}(a\theta)\not=0\}$. If $(q-1)q^{m-1}-hq^{r}\geq 0 $, namely, $h\leq (q-1)q^{m-r-1}$, then there must exist $\theta_{i}$'s such that $\Theta_2\ne \emptyset$ since $|\Lambda|=(q-1)q^{m-1}$.
Specially, for $h=2$, which implies $q=2$, $|\Theta_2|=\tau q^{m-r-2}>0$ always holds due to $\tau\geq 1$ (see Theorem \ref{optimalcode-con-3-2} below).
\end{remark}

\begin{example}
Let $q=4$, $m=6$, $r=2$, $\theta_{1}=\alpha$, $\theta_{2}=\alpha^2$ and $\theta_{3}=\alpha^3$, where $\alpha$ is a primitive element of $\fqm$. Magma experiments show that $\C_{D}$ is a $[4032,6,3024]$ linear code with the weight enumerator
$1+3948z^{3024}+108z^{3040}+36z^{3056}+3z^{3072}$, which is consistent with our result in Theorem \ref{optimalcode-con-3}. This code is a Griesmer code and it is also self-orthogonal and minimal.
\end{example}

In what follows, we determine the weight distributions of $\C_{D}$ in Theorem \ref{optimalcode-con-3} for $h=1$ and $h=2$ respectively.

\begin{thm} \label{optimalcode-con-3-1}
Let $\C_{D}$ be defined by \eqref{CD} and \eqref{CD1-D3} with $h=1$ and $q^{m-r}>2$.
Then $\C_{D}$ is a $[q^m-2q^r,m,(q-1)(q^{m-1}-2q^{r-1})]$ Griesmer code with the following weight distribution:
\begin{center}
\begin{tabular}{lll}
\hline weight $w$ & Multiplicity $A_{w}$\\ \hline
                                           $0$ & $1$ \\
$(q-1)q^{m-1}$               & $q^{m-r-1}-1$ \\
$(q-1)q^{m-1}-q^{r}$ & $(q-1)q^{m-r-1}$ \\
$(q-1)(q^{m-1}-2q^{r-1})$ & $q^{m-r}(q^{r}-1)$ \\
\hline
\end{tabular}
\end{center}
\end{thm}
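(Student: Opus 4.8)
The plan is to specialize Theorem~\ref{optimalcode-con-3} to $h=1$ and then pin down the three multiplicities by a short counting argument. Since $h+1=2\le q$ for every prime power $q$, Theorem~\ref{optimalcode-con-3} immediately yields that $\C_{D}$ is a $[q^m-2q^r,m,(q-1)(q^{m-1}-2q^{r-1})]$ Griesmer code, so only the weight distribution remains. From the proof of Theorem~\ref{optimalcode-con-3} I would extract the exact weight-versus-$a$ dictionary for $h=1$: writing $\theta_{1}$ for the single nonzero coset representative, a nonzero codeword $c_{a}$ has weight $(q-1)(q^{m-1}-2q^{r-1})$ exactly when $\Tr_{q^{r}}^{q^{m}}(a)\ne 0$; weight $(q-1)q^{m-1}-q^{r}$ exactly when $\Tr_{q^{r}}^{q^{m}}(a)=0$ and $\Tr_{q}^{q^{m}}(a\theta_{1})\ne 0$; and weight $(q-1)q^{m-1}$ exactly when $\Tr_{q^{r}}^{q^{m}}(a)=0$ and $\Tr_{q}^{q^{m}}(a\theta_{1})=0$. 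These three conditions partition $\fqm^{*}$, so each multiplicity is just the cardinality of the corresponding set.

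First I would count the minimum-weight codewords. The map $\Tr_{q^{r}}^{q^{m}}\colon \fqm \to \fqr$ is $\fqr$-linear and surjective, so its kernel $K:=\{a\in\fqm:\Tr_{q^{r}}^{q^{m}}(a)=0\}$ has size $q^{m-r}$; hence there are $q^{m}-q^{m-r}=q^{m-r}(q^{r}-1)$ values of $a$ with $\Tr_{q^{r}}^{q^{m}}(a)\ne 0$, all automatically nonzero, which gives $A_{(q-1)(q^{m-1}-2q^{r-1})}=q^{m-r}(q^{r}-1)$. For the remaining two weights I would work inside $K$, viewing $\ell(a):=\Tr_{q}^{q^{m}}(a\theta_{1})$ as an $\fq$-linear functional on the $(m-r)$-dimensional $\fq$-space $K$.

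The one genuine point is that $\ell$ does not vanish identically on $K$: this is precisely the hypothesis $\Theta_2\ne\emptyset$, since for $h=1$ one has $\Theta_2=\{a:\Tr_{q^{r}}^{q^{m}}(a)=0,\ \Tr_{q}^{q^{m}}(a\theta_{1})\ne 0\}$. Given $\ell|_{K}\ne 0$, its kernel inside $K$ has $\fq$-codimension one, so $|\{a\in K:\ell(a)=0\}|=q^{m-r-1}$ and $|\{a\in K:\ell(a)\ne 0\}|=q^{m-r}-q^{m-r-1}=(q-1)q^{m-r-1}$. Removing the zero codeword from the first set yields $A_{(q-1)q^{m-1}}=q^{m-r-1}-1$ and $A_{(q-1)q^{m-1}-q^{r}}=(q-1)q^{m-r-1}$, matching the table. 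As a final sanity check I would verify $1+(q^{m-r-1}-1)+(q-1)q^{m-r-1}+q^{m-r}(q^{r}-1)=q^{m}$, which holds because $q^{m-r-1}+(q-1)q^{m-r-1}=q^{m-r}$. Thus a direct count suffices and no appeal to the Pless power moments is needed; the only mild obstacle is confirming the independence of the two trace conditions, which the assumption $\Theta_2\ne\emptyset$ supplies for free.
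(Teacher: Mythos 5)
Your proposal is correct, and its first half coincides with the paper's: both extract from the proof of Theorem~\ref{optimalcode-con-3} the same weight-versus-$a$ dictionary for $h=1$ and obtain $A_{(q-1)(q^{m-1}-2q^{r-1})}=q^{m}-q^{m-r}$ from the condition $\Tr_{q^{r}}^{q^{m}}(a)\neq 0$. Where you genuinely diverge is in the last step: the paper observes $0\notin D$ (so $A_{1}^{\perp}=0$) and solves the first two Pless Power Moments for the two remaining multiplicities, whereas you count them outright by viewing $\ell(a)=\Tr_{q}^{q^{m}}(a\theta_{1})$ as an $\fq$-linear functional on the $(m-r)$-dimensional kernel $K$ of $\Tr_{q^{r}}^{q^{m}}$ and using that a nonzero functional has kernel of size $q^{m-r-1}$. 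Your route is more elementary and self-contained, and it has a small bonus the paper's does not make explicit: it shows directly that every listed multiplicity is positive. One refinement: you justify $\ell|_{K}\not\equiv 0$ by invoking the hypothesis $\Theta_{2}\neq\emptyset$, which is legitimate since Theorem~\ref{optimalcode-con-3} assumes it, but for $h=1$ it is in fact automatic --- by Lemma~\ref{lem-trace} (in its $\Tr_{q^{r}}^{q^{m}}$ form) $K=\{b^{q^{r}}-b:b\in\fqm\}$, and $\Tr_{q}^{q^{m}}(\theta_{1}(b^{q^{r}}-b))=\Tr_{q}^{q^{m}}((\theta_{1}^{q^{m-r}}-\theta_{1})b)$ vanishes for all $b$ only if $\theta_{1}\in\fqr$, contradicting $\theta_{1}=\theta_{1}-\theta_{0}\notin\fqr$; so your argument actually needs no nonemptiness assumption at all. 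What the paper's power-moment method buys instead is mechanical reusability: it transfers verbatim to $h=2$ (Theorem~\ref{optimalcode-con-3-2}), where the class you would have to count directly involves two functionals whose joint behavior depends on the parameter $\tau$, and the direct count is correspondingly more delicate. (A cosmetic caveat common to both proofs: when $q=2$ the weights $(q-1)(q^{m-1}-2q^{r-1})$ and $(q-1)q^{m-1}-q^{r}$ coincide, so the table's rows must then be merged by summing multiplicities; your partition-of-$\fqm^{*}$ counts remain correct as stated.)
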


\begin{proof}
According to the proof of Theorem \ref{optimalcode-con-3}, one can get the parameters of $\C_{D}$ and conclude that the nonzero weights of $\C_{D}$ take values from $\{w_1:=(q-1)(q^{m-1}-2q^{r-1}), w_2:=(q-1)q^{m-1}-q^{r}, w_3:=(q-1)q^{m-1} \}$ with $A_{w_1}=q^m-q^{m-r}$ if $h=1$. Using the Pless Power Moments (see \cite{HWPV}, page 259) and the fact $A_{1}^{\perp}=0$ since  $0 \notin D$ give
$$\left\{\begin{array}{lll}
A_{w_{1}}+A_{w_{2}}+A_{w_{3}}=q^{m}-1,\\
w_{1}A_{w_{1}}+w_{2}A_{w_{2}}+w_{3}A_{w_{3}}=q^{m-1}(q-1)n,
\end{array}\right.$$
which leads to $A_{w_2}=(q-1)q^{m-r-1}$ and $A_{w_3}=q^{m-r-1}-1$.  The code $\C_{D}$ is a Griesmer code follows from  Theorem \ref{optimalcode-con-3}. This completes the proof.
\end{proof}

\begin{example}
Let $q=2$, $m=6$, $r=2$ and $\theta_{1}=\alpha$, where $\alpha$ is a primitive element of $\fqm$. Magma experiments show that $\C_{D}$ is a $[56,6,28]$ binary linear code with the weight enumerator $1+56z^{28}+7z^{32}$, which is consistent with our result in Theorem \ref{optimalcode-con-3-1}. This code is a Griesmer code and distance-optimal due to \cite{GMB}.
\end{example}

\begin{example}
Let $q=3$, $m=4$, $r=2$ and $\theta_{1}=\alpha$, where $\alpha$ is a primitive element of $\fqm$. Magma experiments show that $\C_{D}$ is a $[63,4,42]$ linear code with the weight enumerator $1+72z^{42}+6z^{45}+2z^{54}$, which is consistent with our result in Theorem \ref{optimalcode-con-3-1}. This code is a Griesmer code and distance-optimal due to \cite{GMB}.
\end{example}

\begin{thm} \label{optimalcode-con-3-2}
Let $\C_{D}$ be defined by \eqref{CD} and \eqref{CD1-D3} with $h=2$. Let $\tau=|\{(u,v)\in \fq^2: u\theta_{1}+v\theta_{2}\in \fqr \}|$ and $m, r$ be positive integers satisfying $m>r+2$ when $q=2$ and $m>r+1$ when $q=3$. Then the weight distribution of  $\C_{D}$ is given by
\begin{center}
\begin{tabular}{lll}
\hline weight $w$ & Multiplicity $A_{w}$\\ \hline
                                           $0$ & $1$ \\
$(q-1)(q^{m-1}-3q^{r-1})$               & $q^{m-r}(q^{r}-1)$ \\
$(q-1)q^{m-1}$ & $\tau q^{m-r-2}-1$ \\
$(q-1)q^{m-1}-2q^{r}$ & $(q^{2}-2q+\tau)q^{m-r-2}$ \\
$(q-1)q^{m-1}-q^{r}$ & $2(q-\tau)q^{m-r-2}$ \\
\hline
\end{tabular}
\end{center}
Moreover, $\C_{D}$ is a Griesmer code when $q>2$ and for $q=2$ it is distance-optimal if $r=1$.
\end{thm}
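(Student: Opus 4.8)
The plan is to build on Theorem~\ref{optimalcode-con-3} specialized to $h=2$, which already supplies the list of possible weights and pins down the multiplicity of the top row. From the weight formula in that proof, $wt(c_{a})=(q-1)(q^{m-1}-3q^{r-1})$ exactly when $\Tr_{q^{r}}^{q^{m}}(a)\neq0$, so $A_{(q-1)(q^{m-1}-3q^{r-1})}=|\{a\in\fqm:\Tr_{q^{r}}^{q^{m}}(a)\neq0\}|=q^{m-r}(q^{r}-1)$. When $\Tr_{q^{r}}^{q^{m}}(a)=0$ the same formula gives $wt(c_{a})=(q-1)q^{m-1}-(2-k)q^{r}$, where $k=|\{i\in\{1,2\}:\Tr_{q}^{q^{m}}(a\theta_{i})=0\}|$; hence the weights $(q-1)q^{m-1}$, $(q-1)q^{m-1}-q^{r}$ and $(q-1)q^{m-1}-2q^{r}$ correspond to $k=2,1,0$. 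Thus everything reduces to counting, inside $V:=\{a\in\fqm:\Tr_{q^{r}}^{q^{m}}(a)=0\}$, the fibres of $a\mapsto(\Tr_{q}^{q^{m}}(a\theta_{1}),\Tr_{q}^{q^{m}}(a\theta_{2}))$.

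The crux, and the step where $\tau$ enters, is the multiplicity of $(q-1)q^{m-1}$. Writing $N_{00}$ for the number of $a\in V$ with $\Tr_{q}^{q^{m}}(a\theta_{1})=\Tr_{q}^{q^{m}}(a\theta_{2})=0$, I would expand the three indicator functions as additive character sums (using $\chi_{r}(z)=\chi(\Tr_{q}^{q^{r}}(z))$ for the $\fqr$-valued condition) and invoke the transitivity identity $\Tr_{q}^{q^{r}}(y\,\Tr_{q^{r}}^{q^{m}}(a))=\Tr_{q}^{q^{m}}(ya)$ to obtain
\[
N_{00}=\frac{1}{q^{r+2}}\sum_{a\in\fqm}\sum_{y\in\fqr}\sum_{u\in\fq}\sum_{v\in\fq}\chi\big(\Tr_{q}^{q^{m}}(a(y+u\theta_{1}+v\theta_{2}))\big).
\]
By orthogonality of $\chi\circ\Tr_{q}^{q^{m}}$ the inner sum over $a$ is $q^{m}$ when $y+u\theta_{1}+v\theta_{2}=0$ and $0$ otherwise, so $N_{00}=q^{m-r-2}\,|\{(y,u,v)\in\fqr\times\fq\times\fq:y+u\theta_{1}+v\theta_{2}=0\}|$. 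Since $y=-(u\theta_{1}+v\theta_{2})$ is forced, this count is exactly the number of $(u,v)\in\fq^{2}$ with $u\theta_{1}+v\theta_{2}\in\fqr$, i.e. $\tau$. Hence $N_{00}=\tau q^{m-r-2}$, and removing the zero codeword $a=0$ gives $A_{(q-1)q^{m-1}}=\tau q^{m-r-2}-1$.

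For the last two rows I would run the same computation one variable shorter to get the marginal counts $N_{i}:=|\{a\in V:\Tr_{q}^{q^{m}}(a\theta_{i})=0\}|$; there the reduced count $|\{(y,u):y+u\theta_{i}=0\}|$ equals $1$ because $\theta_{i}\notin\fqr$ forces $u=0$, so $N_{1}=N_{2}=q^{m-r-1}$. Inclusion--exclusion on $V$ then yields the number of $a$ with both traces nonzero, $q^{m-r}-2q^{m-r-1}+\tau q^{m-r-2}=(q^{2}-2q+\tau)q^{m-r-2}$, and with exactly one zero, $2q^{m-r-1}-2\tau q^{m-r-2}=2(q-\tau)q^{m-r-2}$, matching the table. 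Equivalently one could keep only $A_{w_{1}}$ and $A_{w_{2}}$ and solve the first two Pless power moments for the remaining two multiplicities, exactly as in Theorems~\ref{optimalcode-con-1-2} and~\ref{optimalcode-con-3-1}; the hypotheses $m>r+2$ for $q=2$ and $m>r+1$ for $q=3$ are precisely what Theorem~\ref{optimalcode-con-3} requires to ensure a positive minimum distance, dimension $m$, and non-negative multiplicities.

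Finally, for the optimality statement: when $q>2$ one has $h+1=3\le q$, so part~2) of Theorem~\ref{optimalcode-con-3} immediately makes $\C_{D}$ a Griesmer code. When $q=2$ we fall into the regime $h+1>q$, and it remains only to check the explicit criterion of Theorem~\ref{optimalcode-con-3}: substituting $q=2$, $h=2$, $r=1$ gives left-hand side $(h+1)q^{r}+r=7$ and right-hand side $1+hq\frac{q^{r}-1}{q-1}+\sum_{i=1}^{m-1}\lfloor\frac{3}{2^{i}}\rfloor=5+1=6$, so the strict inequality holds and $\C_{D}$ is distance-optimal. The main obstacle is the double character-sum evaluation of $N_{00}$: one must track the mixed $\fqr$- and $\fq$-level traces correctly through transitivity and recognize that the resulting linear solution count collapses to the combinatorial quantity $\tau$; the rest is bookkeeping via inclusion--exclusion or the power moments.
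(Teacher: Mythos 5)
Your proposal is correct, and it reaches the paper's conclusions by two routine but genuinely distinct manipulations. For the pivotal count $N_{00}$, the paper first invokes Lemma~\ref{lem-trace} to replace the condition $\Tr_{q^{r}}^{q^{m}}(x)=0$ by the parametrization $x^{q^{r}}-x$ (absorbing a factor $q^{r}$) and then runs a double character sum over $(u,v)\in\fq^{2}$, the surviving pairs being those with $(u\theta_{1}+v\theta_{2})^{q^{r}}=u\theta_{1}+v\theta_{2}$; you instead expand the $\fqr$-valued trace condition with an auxiliary character variable $y\in\fqr$, use transitivity $\Tr_{q}^{q^{r}}(y\Tr_{q^{r}}^{q^{m}}(a))=\Tr_{q}^{q^{m}}(ya)$, and apply orthogonality over $a$ directly, the surviving triples being those with $u\theta_{1}+v\theta_{2}=-y\in\fqr$. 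Both are equivalent Fourier computations yielding $\tau q^{m-r-2}$. For the last two multiplicities the paper uses the first two Pless power moments (which requires noting $A_{1}^{\perp}=0$ since $0\notin D$), whereas you compute the marginals $N_{1}=N_{2}=q^{m-r-1}$ (your appeal to $\theta_{i}\notin\fqr$ is legitimate, as it follows from $\theta_{i}-\theta_{0}\notin\fqr$ in the definition of $\Omega_2$) and finish by inclusion--exclusion inside the hyperplane $V$; this is slightly more self-contained, needing no dual-distance input, and it independently confirms the totals rather than solving for them. The optimality part is essentially the paper's route in disguise: for $q>2$ both arguments quote the $h+1\le q$ Griesmer case of Theorem~\ref{optimalcode-con-3}, and for $q=2$, $r=1$ your substitution into the displayed criterion of Theorem~\ref{optimalcode-con-3} (giving $7>6$) is exactly the computation the paper redoes from scratch by evaluating $g(m,d+1)=2^{m}-2^{r+2}+r+2$ and observing $g(m,d+1)-n=r+2-2^{r}>0$ precisely when $r=1$. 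Your closing remark is also accurate: $m>r+2$ for $q=2$ is equivalent to $h<(q-1)q^{m-r-1}$ with $h=2$, and $m>r+1$ for $q=3$ to $h+1<q^{m-r}$, which (together with $\tau\ge 1$, hence $\Theta_2\ne\emptyset$) are the hypotheses Theorem~\ref{optimalcode-con-3} needs.
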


\begin{proof}
For $h=2$, by Theorem \ref{optimalcode-con-3}, the nonzero weight $wt(c_{a})$ of $c_{a}\in\C_{D}$ belongs to $\{w_{1}:=(q-1)(q^{m-1}-3q^{r-1}), w_{2}:=(q-1)q^{m-1}, w_{3}:=(q-1)q^{m-1}-2q^{r}, w_{4}:=(q-1)q^{m-1}-q^{r}\}$. Moreover, similar to the proof of Theorem \ref{optimalcode-con-3}, for $a\in \fqm^*$, one can obtain
\begin{align*}
wt(c_{a})=\left\{\begin{array}{ll}
(q-1)(q^{m-1}-3q^{r-1}), & \mbox{if $\Tr_{q^{r}}^{q^{m}}(a)\not=0$},  \\
(q-1)q^{m-1},& \mbox{if $\Tr_{q^{r}}^{q^{m}}(a)=0$ and $\Tr_{q}^{q^{m}}(a\theta_{1})=0$ and $\Tr_{q}^{q^{m}}(a\theta_{2})=0$}, \\
(q-1)q^{m-1}-2q^{r},& \mbox{if $\Tr_{q^{r}}^{q^{m}}(a)=0$ and $\Tr_{q}^{q^{m}}(a\theta_{1})\not=0$ and $\Tr_{q}^{q^{m}}(a\theta_{2})\not=0$}, \\
(q-1)q^{m-1}-q^{r},   & \mbox{otherwise}
\end{array}\right.
\end{align*}
and consequently one gets $A_{w_{1}}=q^m-q^{m-r}$ and
\begin{align*}
1+A_{w_{2}}=&|\{x\in \fqm: \Tr_{q^{r}}^{q^{m}}(x)=\Tr_{q}^{q^{m}}(\theta_{1}x)=\Tr_{q}^{q^{m}}(\theta_{2}x)=0\}|\\
=&\frac{1}{q^r}|\{x\in \fqm: \Tr_{q}^{q^{m}}(\theta_{1}(x^{q^r}-x))=\Tr_{q}^{q^{m}}(\theta_{2}(x^{q^r}-x))=0\}|\\
=&\frac{1}{q^{r+2}}\sum_{x\in \fqm}\sum_{u\in \fq}\chi(u\Tr_{q}^{q^{m}}(\theta_{1}(x^{q^r}-x)))
\sum_{v\in \fq}\chi(v\Tr_{q}^{q^{m}}(\theta_{2}(x^{q^r}-x)))\\
=&\frac{1}{q^{r+2}}\sum_{u\in \fq}\sum_{v\in \fq}\sum_{x\in \fqm}
\chi(\Tr_{q}^{q^{m}}((u\theta_{1}+v\theta_{2})(x^{q^r}-x)))\\
=&\frac{1}{q^{r+2}}\sum_{u\in \fq}\sum_{v\in \fq}\sum_{x\in \fqm}
\chi(\Tr_{q}^{q^{m}}(((u\theta_{1}+v\theta_{2})-(u\theta_{1}+v\theta_{2})^{q^r})x))\\
=&\tau q^{m-r-2}
\end{align*}
by using Lemma \ref{lem-trace}.
Then, the weight distribution of $\C_{D}$ follows from the first two Pless Power Moments as we did before.

Theorem \ref{optimalcode-con-3} implies that $\C_{D}$ is a Griesmer code when $q\geq3$. For $q=2$, we have
\[g(m,d+1)=\sum_{i=0}^{m-1} \lceil \frac{2^{m-1}-2^{r+1}+1}{2^i}\rceil
=2^m-2^{r+2}+r+2,
\]
which implies that $\C_{D}$ for $q=2$ is distance-optimal if $r=1$ since $g(m,d+1)-n=r+2-2^r>0$ holds if and only if $r=1$.
This completes the proof.
\end{proof}

\begin{remark}
Note that $\tau\geq 1$ since $(0,0)\in \{(u,v)\in \fq^2: u\theta_{1}+v\theta_{2}\in \fqr \}$ and $\C_{D}$ in Theorem \ref{optimalcode-con-3-2} is reduced to $3$-weight if $q=3$. In particular, we have $\tau=1$ if $q=2$ by the definition of $\theta_{i}$'s.
\end{remark}

\begin{example}
Let $q=2$, $m=6$, $r=1$, $\theta_{1}=\alpha$ and $\theta_{2}=\alpha^2$, where $\alpha$ is a primitive element of $\fqm$. Magma experiments show that $\C_{D}$ is a $[58,6,28]$ binary linear code with the weight enumerator $1+8z^{28}+32z^{29}+16z^{30}+7z^{32}$, which is consistent with our result in Theorem \ref{optimalcode-con-3-2}. This code is distance-optimal due to \cite{GMB}.
\end{example}

\begin{example}
Let $q=3$, $m=4$, $r=1$, $\theta_{1}=\alpha$ and $\theta_{2}=\alpha^2$, where $\alpha$ is a primitive element of $\fqm$. Magma experiments show that $\C_{D}$ is a $[72,4,48]$ linear code with the weight enumerator $1+66z^{48}+12z^{51}+2z^{54}$, which is consistent with our result in Theorem \ref{optimalcode-con-3-2}. This code is a Griesmer code and it is distance-optimal due to \cite{GMB}.
\end{example}

\section{The third family of optimal linear codes} \label{section-5}
In this section, we study the linear codes $\C_{D}$ of the form \eqref{CD} for
\begin{eqnarray}\label{CD1-D4}
D= \fqm \backslash \Omega_3,\;\;\Omega_3=\cup_{i=1}^{h}(\theta_{i}*\fqr),
\end{eqnarray}
where $m>1$, $r<m$ are positive integers satisfying $r|m$ and $\theta_{i}\in \fqm^{*}$ for $1\leq i \leq h$ satisfying $\theta_{i}/\theta_{j}\notin\fqr$ for any $1\leq i<j \leq h$. For simplicity, define
\begin{eqnarray*}
\Theta_3=\{a\in \fqm:\Tr_{q^{r}}^{q^{m}}(a\theta_{i})\not=0\,\,{\rm for\,\,any\,\,} 1\leq i \leq h\}.
\end{eqnarray*}

\begin{thm} \label{optimalcode-con-4}
Let $\C_{D}$ be defined by \eqref{CD} and \eqref{CD1-D4}. If $h<q^{m-r}$ and $\Theta_3\ne \emptyset$, then\\
$1).$ $\C_{D}$ is a $[q^m-hq^r+h-1,m,(q-1)(q^{m-1}-hq^{r-1})]$ linear code;\\
$2).$ $\C_{D}$ is at most $(h+1)$-weight and its weights take values from
\[\{(q-1)(q^{m-1}-iq^{r-1}): i=0,1,2,\cdots,h\};\]
$3).$ $\C_{D}$ is a Griesmer code if and only if $h=1$ and it is a near Griesmer code if and only if $h=2$ or $(q,h)=(2,3)$. When $h>1$, $\C_{D}$ is distance-optimal if $r>\sum_{i=r}^{m-1}\lfloor \frac{h(q-1)q^{r-1}-1}{q^i}\rfloor$;\\
$4).$ $\C_{D}$ is self-orthogonal if $(q,r)\notin \{(2,1),(2,2),(3,1)\}$;\\
$5).$ $\C_{D}$ is minimal if $q^{m-r-1}> h$.
\end{thm}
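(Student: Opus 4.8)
The plan is to mirror the character-sum strategy of Theorem \ref{optimalcode-con-1}. First I would record that the sets $\theta_i*\fqr$ meet pairwise exactly in $\{0\}$: if $\theta_i x=\theta_j y$ with $x,y\in\fqr\setminus\{0\}$, then $\theta_i/\theta_j=y/x\in\fqr$, contradicting the hypothesis. Hence $|\Omega_3|=hq^r-(h-1)$ and $n=q^m-hq^r+h-1$, which is the length in part 1. For a fixed $a\in\fqm^{*}$ I would compute $wt(c_a)=n-N_a$ with $N_a=|\{x\in D:\Tr_{q}^{q^{m}}(ax)=0\}|$ by deleting $\Omega_3$ from $\fqm$: writing each $x\in\theta_i*\fqr$ as $x=\theta_i y$ with $y\in\fqr$ and using $\Tr_{q}^{q^{m}}(a\theta_i y)=\Tr_{q}^{q^{r}}(\Tr_{q^{r}}^{q^{m}}(a\theta_i)y)$, the count over one coset is $q^r$ or $q^{r-1}$ according to whether $\Tr_{q^{r}}^{q^{m}}(a\theta_i)$ vanishes. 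After accounting for the common point $0$ (counted $h$ times) one obtains
\[ wt(c_a)=(q-1)\bigl(q^{m-1}-i\,q^{r-1}\bigr), \]
where $i$ is the number of indices with $\Tr_{q^{r}}^{q^{m}}(a\theta_i)\ne 0$, so $0\le i\le h$. This gives part 2 at once, and since $\Theta_3\ne\emptyset$ realizes $i=h$ while $h<q^{m-r}$ forces this value to be positive, it also yields the minimum distance and, by injectivity of $a\mapsto c_a$, the dimension $m$ in part 1.

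For part 3 I would evaluate the Griesmer function with $d=(q-1)(q^{m-1}-hq^{r-1})$ by splitting $\sum_{i=0}^{m-1}\lceil d/q^i\rceil$ at the index $i=r$. For $0\le i\le r-1$ every summand is already an integer, whereas for $r\le i\le m-1$ the identity $\lceil A-B\rceil=A-\lfloor B\rfloor$ (valid for integer $A$) converts each ceiling into a floor; carrying this out yields
\[ g(m,d)=n-\sum_{i=r}^{m-1}\Bigl\lfloor\tfrac{h(q-1)q^{r-1}}{q^i}\Bigr\rfloor \quad\text{and}\quad g(m,d+1)=n+r-\sum_{i=r}^{m-1}\Bigl\lfloor\tfrac{h(q-1)q^{r-1}-1}{q^i}\Bigr\rfloor. \]
The first identity shows that $\C_D$ is a Griesmer code exactly when the floor-sum vanishes; its leading ($i=r$) term equals $\lfloor h(q-1)/q\rfloor$, which is $0$ if and only if $h=1$ and is positive otherwise, so $\C_D$ is Griesmer precisely for $h=1$. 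The near-Griesmer case demands that the floor-sum equal $1$, and I would settle it by observing that $\lfloor h(q-1)/q\rfloor=1$ exactly when $q\le h(q-1)<2q$, while all higher-index terms are then forced to $0$; this isolates $h=2$ for every $q$ together with the extra solution $(q,h)=(2,3)$. Finally, the distance-optimality criterion $g(m,d+1)>n$ reads $r>\sum_{i=r}^{m-1}\lfloor(h(q-1)q^{r-1}-1)/q^i\rfloor$, matching the stated condition.

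Parts 4 and 5 are then routine. For self-orthogonality I would decompose $D=\fqm\setminus\Omega_3$: each $\C_{\theta_i*\fqr}$ is self-orthogonal by Lemma \ref{self-orthogonalF} (case 3 when $\theta_i\in\fqm\setminus\fqr$, case 1 if $\theta_i\in\fqr$); Lemma \ref{self-orthogonalD-}(2) assembles these (pairwise intersections being $\{0\}$) into self-orthogonality of $\C_{\Omega_3}$, and Lemma \ref{self-orthogonalD-}(3), combined with the self-orthogonality of $\C_{\fqm}$ (Lemma \ref{self-orthogonalF} with the subfield $\fqm$ itself, whose exceptional pair does not arise once $(q,r)\notin\{(2,1),(2,2),(3,1)\}$ and $m>r$), transfers this to the complement $\C_D$. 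For minimality, all nonzero weights lie between $w_{min}=d$ and $w_{max}=(q-1)q^{m-1}$, so $w_{min}/w_{max}\ge 1-hq^{r-m}$; requiring this to exceed $(q-1)/q$ gives exactly $h<q^{m-r-1}$, and Lemma \ref{minimal} applies. The main obstacle I anticipate is the Griesmer computation in part 3: correctly splitting the ceiling sum at $i=r$ and then performing the delicate integer case analysis that separates the Griesmer, near-Griesmer and merely distance-optimal regimes; the remaining steps are bookkeeping closely parallel to Theorems \ref{optimalcode-con-1} and \ref{optimalcode-con-3}.
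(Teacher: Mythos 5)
Your proposal is correct and follows essentially the same route as the paper: the same length count via the pairwise intersections $(\theta_i*\fqr)\cap(\theta_j*\fqr)=\{0\}$, the same weight determination (your direct coset count via $\Tr_{q}^{q^{m}}(a\theta_i y)=\Tr_{q}^{q^{r}}(\Tr_{q^{r}}^{q^{m}}(a\theta_i)y)$ is exactly what the paper's character sum $\frac{1}{q}\sum_{u\in\fq}\sum_{x\in\fqr}\chi(u\Tr_{q}^{q^{m}}(a\theta_i x))$ evaluates to), identical Griesmer-bound formulas for $g(m,d)$ and $g(m,d+1)$ with the same case analysis isolating $h=1$, $h=2$ and $(q,h)=(2,3)$, and the same appeals to Lemmas \ref{self-orthogonalF}, \ref{self-orthogonalD-} and \ref{minimal} for parts 4 and 5 (where you in fact supply slightly more detail than the paper does).
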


\begin{proof}
The length of $\C_{D}$ is $n=|D|=q^m-hq^r+h-1$ since $(\theta_{i}*\fqr)\cap(\theta_{j}*\fqr)=\{0\}$ due to $\theta_{i}/\theta_{j}\notin\fqr$ for any $1\leq i<j \leq h$. For $a\in \fqm^*$, the Hamming weight $wt(c_{a})$ of the codeword $c_{a}$ in $\C_{D}$ is $n-N_{a}$, where $N_{a}=|\{x\in \fqm\backslash \Omega_3: \Tr_{q}^{q^{m}}(ax) = 0\}|$. Using the orthogonal property of nontrivial additive characters leads to
\begin{align*}
N_{a}=&\frac{1}{q}\sum_{x\in \fqm\backslash \Omega_3}\sum_{u\in \fq} \chi(u\Tr_{q}^{q^{m}}(ax))\\
=&\frac{1}{q}\sum_{u\in \fq}(\sum_{x\in \fqm} \chi(u\Tr_{q}^{q^{m}}(ax))
-\sum_{i=1}^{h}\sum_{x\in \theta_{i}*\fqr} \chi(u\Tr_{q}^{q^{m}}(ax)))+h-1\\
=&\frac{1}{q}(q^m-\sum_{i=1}^{h}\sum_{u\in \fq}\sum_{x\in \fqr} \chi(u\Tr_{q}^{q^{m}}(a\theta_{i}x)))+h-1,
\end{align*}
which leads to
\[wt(c_{a})=(q-1)q^{m-1}-hq^r+\frac{1}{q}\sum_{i=1}^{h}\sum_{u\in \fq}\sum_{x\in \fqr} \chi(u\Tr_{q}^{q^{m}}(a\theta_{i}x)).\]
Note that
\begin{align*}
\frac{1}{q}\sum_{u\in \fq}\sum_{x\in \fqr} \chi(u\Tr_{q}^{q^{m}}(a\theta x))=\left\{\begin{array}{ll}
q^{r},    & \mbox{if $\Tr_{q^{r}}^{q^{m}}(a\theta)=0$},  \\
q^{r-1},  & \mbox{if $\Tr_{q^{r}}^{q^{m}}(a\theta)\ne0$}
\end{array}\right.
\end{align*}
holds for any $\theta\in \fqm^{*}$. Thus, for $a\in\fqm^*$, one can conclude tha $wt(c_{a})$ takes value from
\[\left\{(q-1)q^{m-1},(q-1)(q^{m-1}-q^{r-1}),\cdots,(q-1)(q^{m-1}-hq^{r-1})\right\}\]
and $wt(c_{a})=(q-1)(q^{m-1}-hq^{r-1})$ if and only if $\Tr_{q^{r}}^{q^{m}}(a\theta_{i})\not=0$ for any $1\leq i \leq h$.
Thus, $\C_{D}$ is at most $(h+1)$-weight.

Since $\Theta_{3}\ne \emptyset$, we have $d=(q-1)(q^{m-1}-hq^{r-1})>0$ due to $q^{m-r}>h$. This implies that the dimension of $\C_{D}$ is $m$. According to the Griesmer bound,   we have
\begin{align*}
g(m,d)=\sum_{i=0}^{m-1} \lceil \frac{(q-1)(q^{m-1}-hq^{r-1})}{q^i}\rceil
=q^{m}-hq^{r}+h-1-\sum_{i=1}^{m-r}\lfloor \frac{h(q-1)}{q^i}\rfloor.
\end{align*}
It can be readily verified that $n-g(m,d)=0$ if and only if $h=1$ and $n-g(m,d)=1$ if and only if $h=2$ or $(q,h)=(2,3)$.
When $h>1$, we have
\begin{align*}
g(m,d+1)=\sum_{i=0}^{m-1} \lceil \frac{(q-1)(q^{m-1}-hq^{r-1})+1}{q^i}\rceil
=n+r-\sum_{i=r}^{m-1}\lfloor \frac{h(q-1)q^{r-1}-1}{q^i}\rfloor.
\end{align*}
Therefore, $\C_{D}$ is distance-optimal if $r>\sum_{i=r}^{m-1}\lfloor \frac{h(q-1)q^{r-1}-1}{q^i}\rfloor$.

Then the proof is completed due to  Lemmas \ref{self-orthogonalF}, \ref{self-orthogonalD-} and \ref{minimal}.
\end{proof}

\begin{remark}
The condition $\Theta_3\ne \emptyset$ always holds for $h< q^{r}$ since $|\Theta_3|\geq q^m-hq^{m-r}>0$ due to the fact that $|\{a\in \fqm: \Tr_{q^{r}}^{q^{m}}(a\theta_{i})=0\}|=q^{m-r}$ for any $\theta_{i}\ne 0$. Moreover, similar to the discussion in Remark \ref{remark-7}, we conclude that there must exist $\theta_{i}$'s such that $|\Theta_3|>0$ for $h\leq q^{m-r}$ since $\theta_{i}/\theta_{j}\notin\fqr^*$ for any $1\leq i<j \leq h$. In addition, assume that $\theta_{i}$'s are linearly independent over $\fqr$ for $1\leq i \leq h$ which implies $h\leq m/r$, then by the property of trace functions, for any $(v_{1},\cdots,v_{h})\in \fqr^{h}$, we have $|\{x\in \fqm: \Tr_{q^{r}}^{q^{m}}(x\theta_{i})=-v_{i}\,\,{\rm for\,\,all\,\,} 1\leq i \leq h\}|=q^{m-hr}$ which indicates that $|\Theta_3|=(q^{r}-1)^{h}q^{m-hr}>0$.
\end{remark}

\begin{example}
Let $q=2$, $m=12$, $r=3$, $\theta_{1}=1$, $\theta_{2}=\alpha$, $\theta_{3}=\alpha^2$ and $\theta_{4}=\alpha^3$, where $\alpha$ is a primitive element of $\fqm$. Magma experiments show that $\C_{D}$ is a $[4067,12,2032]$ binary linear code with the weight enumerator $1+2401z^{2032}+1372z^{2036}+294z^{2040}+28z^{2044}$, which is consistent with our result in Theorem \ref{optimalcode-con-4}. This code is distance-optimal due to the Griesmer bound and it is also self-orthogonal and minimal.
\end{example}

The Griesmer code $\C_{D}$ in Theorem \ref{optimalcode-con-4} for $h=1$ has the same parameters and weight distribution as the one in Theorem \ref{optimalcode-con-1-1}. In the following, we determine the weight distribution of $\C_{D}$ in Theorem \ref{optimalcode-con-4} for $h=2$.

\begin{thm} \label{optimalcode-con-4-1}
Let $\C_{D}$ be defined by \eqref{CD} and \eqref{CD1-D4} with $h=2$ and $q^{m-r}>2$. Then $\C_{D}$ is a $3$-weight $[q^m-2q^r+1,m,(q-1)(q^{m-1}-2q^{r-1})]$ linear code with the following weight distribution:
\begin{center}
\begin{tabular}{lll}
\hline weight $w$ & Multiplicity $A_{w}$\\ \hline
                                           $0$ & $1$ \\
$(q-1)q^{m-1}$               & $q^{m-2r}-1$ \\
$(q-1)(q^{m-1}-2q^{r-1})$ & $q^m-2q^{m-r}+q^{m-2r}$ \\
$(q-1)(q^{m-1}-q^{r-1})$ & $2(q^{m-r}-q^{m-2r})$ \\
\hline
\end{tabular}
\end{center}
Moreover, $\C_{D}$ is a near Griesmer code and it is distance-optimal if
$r+\lfloor \frac{2}{q}\rfloor>1$.
\end{thm}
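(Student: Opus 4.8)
The plan is to specialize the weight analysis already carried out in the proof of Theorem~\ref{optimalcode-con-4} to the case $h=2$, determine the three multiplicities by one character-sum count, and then read off the optimality from Lemma~\ref{near-Griesmer}.

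First I would import from Theorem~\ref{optimalcode-con-4} (with $h=2$) that $n=q^m-2q^r+1$ and that, for $a\in\fqm^*$,
\[
wt(c_a)=(q-1)\big(q^{m-1}-(2-j)q^{r-1}\big),\qquad j:=\big|\{i\in\{1,2\}:\Tr_{q^r}^{q^m}(a\theta_i)=0\}\big|.
\]
Hence the three nonzero weights are $w_1=(q-1)q^{m-1}$ (for $j=2$), $w_2=(q-1)(q^{m-1}-q^{r-1})$ (for $j=1$) and $d=w_3=(q-1)(q^{m-1}-2q^{r-1})$ (for $j=0$), which are distinct, and positive since $q^{m-r}>2$; so only the multiplicities remain.

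The decisive step is to count $A_{w_1}+1=|\{a\in\fqm:\Tr_{q^r}^{q^m}(a\theta_1)=\Tr_{q^r}^{q^m}(a\theta_2)=0\}|$. Writing each $\fqr$-valued trace condition through additive characters of $\fqr$ and using transitivity $\Tr_q^{q^r}(u\,\Tr_{q^r}^{q^m}(a\theta_i))=\Tr_q^{q^m}(ua\theta_i)$ for $u\in\fqr$, I would obtain
\[
A_{w_1}+1=\frac{1}{q^{2r}}\sum_{u,v\in\fqr}\sum_{a\in\fqm}\chi\!\left(\Tr_q^{q^m}\big((u\theta_1+v\theta_2)a\big)\right)=\frac{q^m}{q^{2r}}\,\big|\{(u,v)\in\fqr^2:u\theta_1+v\theta_2=0\}\big|.
\]
Because $\theta_1/\theta_2\notin\fqr$ means $\theta_1,\theta_2$ are $\fqr$-linearly independent, the only pair is $(0,0)$, giving $A_{w_1}=q^{m-2r}-1$. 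Each individual kernel $\{a:\Tr_{q^r}^{q^m}(a\theta_i)=0\}$ is an $\fqr$-hyperplane of size $q^{m-r}$, so subtracting the joint count leaves $q^{m-r}-q^{m-2r}$ vectors for which exactly the $i$-th trace vanishes; thus $A_{w_2}=2(q^{m-r}-q^{m-2r})$ and $A_{w_3}=(q^m-1)-A_{w_1}-A_{w_2}=q^m-2q^{m-r}+q^{m-2r}$, matching the table. Alternatively, having fixed $A_{w_1}$, I could solve for $A_{w_2},A_{w_3}$ from the first two Pless power moments exactly as in Theorem~\ref{optimalcode-con-3-2}.

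Finally, the near-Griesmer property is inherited directly from part~$3)$ of Theorem~\ref{optimalcode-con-4} with $h=2$, so Lemma~\ref{near-Griesmer} makes $\C_D$ distance-optimal as soon as $q\mid d$. Writing $d=(q-1)q^{r-1}(q^{m-r}-2)$, the factor $q^{r-1}$ already forces $q\mid d$ whenever $r\ge 2$, while for $r=1$ one has $d=(q-1)(q^{m-1}-2)\equiv -2(q-1)\pmod q$, so $q\mid d$ exactly when $q=2$. A one-line check shows that ``$r\ge2$, or $r=1$ and $q=2$'' is precisely the stated condition $r+\lfloor 2/q\rfloor>1$, which yields distance-optimality. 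The only genuinely non-clerical point is the joint-kernel evaluation, but the linear independence of $\theta_1,\theta_2$ over $\fqr$ collapses it to the trivial solution of $u\theta_1+v\theta_2=0$; the remaining work is the small $r=1$ case distinction and routine arithmetic.
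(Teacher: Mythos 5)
Your proposal is correct, and its overall skeleton matches the paper's: specialize the weight formula from Theorem \ref{optimalcode-con-4} to $h=2$, reduce the weight distribution to the joint count $|\{a\in\fqm:\Tr_{q^r}^{q^m}(a\theta_1)=\Tr_{q^r}^{q^m}(a\theta_2)=0\}|$, and settle optimality through the near-Griesmer property. You diverge from the paper in three local tool choices, all of them sound. (i) For the joint kernel, the paper normalizes by $\theta_1$ and invokes Lemma \ref{lem-trace} (additive Hilbert 90): it replaces $\{x:\Tr_{q^r}^{q^m}(x)=0\}$ by the $q^r$-to-one image of $x\mapsto x^{q^r}-x$, reducing to a single trace condition with nonzero coefficient $\theta_2/\theta_1-(\theta_2/\theta_1)^{q^r}$; you instead expand both conditions with additive characters over $\fqr$ and use the $\fqr$-linear independence of $\theta_1,\theta_2$ to collapse the double sum to the pair $(u,v)=(0,0)$. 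Both yield $q^{m-2r}$. (ii) The paper obtains $A_{w_2}$ and $A_{w_3}$ from the first two Pless power moments (using $A_1^{\perp}=0$); you count directly by inclusion-exclusion from the kernel sizes $q^{m-r}$, which is more elementary and has the side benefit of showing $A_{w_3}=q^{m-2r}(q^r-1)^2>0$, so the nonemptiness of $\Theta_3$ (a standing hypothesis of Theorem \ref{optimalcode-con-4}) is automatic here rather than assumed. (iii) For optimality the paper computes $g(m,d+1)$ explicitly via the Griesmer bound and compares it with $n$; you instead apply Lemma \ref{near-Griesmer} and check divisibility, writing $d=(q-1)q^{r-1}(q^{m-r}-2)$ so that $q\mid d$ exactly when $r\geq 2$, or $r=1$ and $q=2$, and your verification that this is equivalent to $r+\lfloor 2/q\rfloor>1$ is right (note Lemma \ref{near-Griesmer} needs dimension $>1$, which holds since $m>1$). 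The lemma-based route is slightly cleaner, while the paper's direct computation of $g(m,d+1)$ re-derives the condition from the bound itself; the two are interchangeable here and reach the same conclusion.
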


\begin{proof}
According to the proof of Theorem \ref{optimalcode-con-4}, one can get the parameters of $C_{D}$ and claim that the weight of $c_{a}\in C_{D}$ for $a\in\fqm^*$ is
\begin{align}\label{thm10-equ-1}
wt(c_{a})=\left\{\begin{array}{ll}
w_{1}:=(q-1)q^{m-1}, & \mbox{if $\Tr_{q^{r}}^{q^{m}}(a\theta_{1})=0$ and $\Tr_{q^{r}}^{q^{m}}(a\theta_{2})=0$},  \\
w_{2}:=(q-1)(q^{m-1}-2q^{r-1}), & \mbox{if $\Tr_{q^{r}}^{q^{m}}(a\theta_{1})\ne0$ and $\Tr_{q^{r}}^{q^{m}}(a\theta_{2})\ne0$},  \\
w_{3}:=(q-1)(q^{m-1}-q^{r-1}),  & \mbox{otherwise}.
\end{array}\right.
\end{align}
By Lemma \ref{lem-trace}, $\theta_{2}/\theta_{1}\notin \fqr$ and \eqref{thm10-equ-1}, one obtains
\begin{align*}
1+A_{w_{1}}
=&|\{x\in \fqm: \Tr_{q^{r}}^{q^{m}}(x)=0\,\,
{\rm and}\,\,\Tr_{q^{r}}^{q^{m}}(x\theta_{2}/\theta_{1})=0\}| \\
=&\frac{1}{q^r}|\{x\in \fqm: \Tr_{q^{r}}^{q^{m}}(\theta_{2}/\theta_{1}(x^{q^r}-x))=0\}|\\
=&\frac{1}{q^r}|\{x\in \fqm: \Tr_{q^{r}}^{q^{m}}((\theta_{2}/\theta_{1}-(\theta_{2}/\theta_{1})^{q^r})x)=0\}|\\
=&q^{m-2r}.
\end{align*}
Then, the weight distribution of $\C_{D}$ follows from the first two Pless Power Moments.
$\C_{D}$ is a near Griesmer code due to Theorem \ref{optimalcode-con-4}. By the Griesmer bound, we have
\begin{align*}
g(m,d+1)=\sum_{i=0}^{m-1} \lceil \frac{(q-1)(q^{m-1}-2q^{r-1})+1}{q^i}\rceil
=\left\{\begin{array}{ll}
q^{m}-2q^{r}+r+1, & \mbox{if $q=2$},  \\
q^{m}-2q^{r}+r, & \mbox{if $q>2$}  \\
\end{array}\right.
\end{align*}
and then $\C_{D}$ is distance-optimal if
$r+\lfloor \frac{2}{q}\rfloor>1$.
This completes the proof.
\end{proof}

\begin{example}
Let $q=2$, $m=6$, $r=2$, $\theta_{1}=1$ and $\theta_{2}=\alpha$, where $\alpha$ is a primitive element of $\fqm$. Magma experiments show that $\C_{D}$ is a $[57,6,28]$ binary linear code with the weight enumerator $1+36z^{28}+24z^{30}+3z^{32}$, which is consistent with our result in Theorem \ref{optimalcode-con-4-1}. This code is distance-optimal due to \cite{GMB}.
\end{example}

\begin{example}
Let $q=3$, $m=4$, $r=2$, $\theta_{1}=1$ and $\theta_{2}=\alpha$, where $\alpha$ is a primitive element of $\fqm$. Magma experiments show that $\C_{D}$ is a $[64,4,42]$ linear code with the weight enumerator $1+64z^{42}+16z^{48}$, which is consistent with our result in Theorem \ref{optimalcode-con-4-1}. This code is distance-optimal due to \cite{GMB}.
\end{example}

The weight distribution of $\C_{D}$ in Theorem \ref{optimalcode-con-4} can be determined for $h=3$ as follow.

\begin{thm} \label{optimalcode-con-4-2}
Let $\C_{D}$ be defined by \eqref{CD} and \eqref{CD1-D4} with $h=3$ and $q^{m-r}>3$. Then $\C_{D}$ is a $[q^m-3q^r+2,m,(q-1)(q^{m-1}-3q^{r-1})]$ linear code with the following weight distribution
\begin{center}
\begin{tabular}{lll}
\hline weight $w$ & Multiplicity $A_{w}$\\ \hline
                                           $0$ & $1$ \\
$(q-1)q^{m-1}$               & $q^{m-2r}-1$ \\
$(q-1)(q^{m-1}-2q^{r-1})$ & $3q^{m-2r}(q^{r}-1)$ \\
$(q-1)(q^{m-1}-3q^{r-1})$ & $q^m-3q^{m-r}+2q^{m-2r}$ \\
\hline
\end{tabular}
\end{center}
if $\frac{\theta_{3}/\theta_{1}-(\theta_{3}/\theta_{1})^{q^r}}{\theta_{2}/\theta_{1}-(\theta_{2}/\theta_{1})^{q^r}} \in \fqr$, and for $\frac{\theta_{3}/\theta_{1}-(\theta_{3}/\theta_{1})^{q^r}}{\theta_{2}/\theta_{1}-(\theta_{2}/\theta_{1})^{q^r}} \not\in \fqr$, its weight distribution is given by
\begin{center}
\begin{tabular}{lll}
\hline weight $w$ & Multiplicity $A_{w}$\\ \hline
                                           $0$ & $1$ \\
$(q-1)q^{m-1}$               & $q^{m-3r}-1$ \\
$(q-1)(q^{m-1}-q^{r-1})$ & $3(q^r-1)q^{m-3r}$ \\
$(q-1)(q^{m-1}-2q^{r-1})$ & $3q^{m-r}-6q^{m-2r}+3q^{m-3r}$ \\
$(q-1)(q^{m-1}-3q^{r-1})$ & $q^m-3q^{m-r}+3q^{m-2r}-q^{m-3r}$ \\
\hline
\end{tabular}
\end{center}
Moreover, $\C_{D}$ is a near Griesmer code when $q=2$, and it is distance-optimal if $r>1$ (resp. $r>2$) when $q=2,\,3$ (resp. $q>3$).
\end{thm}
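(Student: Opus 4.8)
The plan is to build on Theorem \ref{optimalcode-con-4}, which already supplies the parameters $[q^m-3q^r+2,\,m,\,(q-1)(q^{m-1}-3q^{r-1})]$, the fact that $\C_D$ is at most $4$-weight with weights in $\{(q-1)(q^{m-1}-jq^{r-1}):0\le j\le 3\}$, the near-Griesmer assertion for $(q,h)=(2,3)$, and the distance-optimality criterion $r>\sum_{i=r}^{m-1}\lfloor (3(q-1)q^{r-1}-1)/q^{i}\rfloor$. So the genuine work is (i) computing the weight distribution in the two cases and (ii) evaluating the Griesmer floor-sum for $h=3$. For (ii) I would observe that $\lfloor(3(q-1)q^{r-1}-1)/q^{i}\rfloor=0$ for every $i>r$, while the single surviving term $i=r$ equals $3-\tfrac{3}{q}-\tfrac{1}{q^r}$ rounded down, namely $1$ for $q\in\{2,3\}$ and $2$ for $q\ge 4$. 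This turns the criterion into $r>1$ (resp. $r>2$), matching the statement, and the near-Griesmer claim for $q=2$ is read off directly from Theorem \ref{optimalcode-con-4}.

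For (i), the decisive fact from the proof of Theorem \ref{optimalcode-con-4} is that $wt(c_a)=(q-1)(q^{m-1}-j\,q^{r-1})$, where $j=j(a)$ is the number of indices $i\in\{1,2,3\}$ with $\Tr_{q^r}^{q^m}(a\theta_i)\ne 0$. Hence it suffices to count, for each $j$, the number of $a\in\fqm$ with exactly $j$ nonzero traces. Writing $Z_i=\{a:\Tr_{q^r}^{q^m}(a\theta_i)=0\}$, each $Z_i$ is the kernel of a nonzero $\fqr$-linear functional, so $|Z_i|=q^{m-r}$; and for each pair, since $\theta_i/\theta_j\notin\fqr$ by hypothesis the two functionals are $\fqr$-independent, giving $|Z_i\cap Z_j|=q^{m-2r}$. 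An inclusion--exclusion count of ``exactly $j$ nonzero'' then expresses every multiplicity in terms of the single remaining unknown $N_0:=|Z_1\cap Z_2\cap Z_3|$.

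The crux is therefore computing $N_0$, and this is precisely where the dichotomy surfaces. Following the reduction used for $h=2$ in Theorem \ref{optimalcode-con-4-1}, I substitute $x=a\theta_1$ and apply Lemma \ref{lem-trace} to the condition $\Tr_{q^r}^{q^m}(x)=0$, writing $x=y^{q^r}-y$ ($q^r$-to-one); the remaining two conditions become $\Tr_{q^r}^{q^m}(\delta_i\,y)=0$ with $\delta_i=\theta_i/\theta_1-(\theta_i/\theta_1)^{q^r}\ne 0$ for $i=2,3$. These two functionals are $\fqr$-proportional exactly when $\delta_3/\delta_2\in\fqr$, i.e. exactly the stated ratio condition. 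When the ratio lies in $\fqr$ the two conditions coincide, so the count is $q^{m-r}$ and $N_0=q^{m-2r}$; otherwise they are independent, the count is $q^{m-2r}$ and $N_0=q^{m-3r}$. Feeding $N_0=q^{m-2r}$ (resp. $q^{m-3r}$) into the inclusion--exclusion yields the first (resp. second) table: in the first case the multiplicity of the weight $(q-1)(q^{m-1}-q^{r-1})$ collapses to $0$, so $\C_D$ is $3$-weight, while the second case is genuinely $4$-weight. I would cross-check each table against $\sum_w A_w=q^m-1$ and the first Pless power moment, exactly as in the earlier theorems.

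The main obstacle is the clean evaluation of $N_0$ together with the recognition that its value is governed by the $\fqr$-rank of $\{\theta_1,\theta_2,\theta_3\}$ (equivalently, the $\fqr$-proportionality of $\delta_2,\delta_3$); everything else is bookkeeping with inclusion--exclusion and the short Griesmer floor-sum. The one point deserving care is confirming that the minimum-weight frequency $N_3$ is positive, so that $d$ is actually attained: this is immediate from the closed forms $N_3=q^{m-3r}(q^r-1)^3$ in the second case and $N_3=q^{m-2r}(q^r-1)(q^r-2)$ in the first, the latter being positive precisely when $q^r>2$; the sole exception $q^r=2$ occurs only in the first case with $(q,r)=(2,1)$ and is excluded once the optimality hypothesis $r>1$ for $q=2$ is in force.
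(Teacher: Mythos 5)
Your proposal is correct, and its core reduction coincides with the paper's: both substitute to normalize $\theta_1$, invoke Lemma \ref{lem-trace} to replace the condition $\Tr_{q^r}^{q^m}(x)=0$ by $x=y^{q^r}-y$ (a $q^r$-to-one cover), convert $\Tr_{q^r}^{q^m}(\mu(y^{q^r}-y))$ into a single linearized functional, and locate the dichotomy in whether $\delta_3/\delta_2\in\fqr$ with $\delta_i=\theta_i/\theta_1-(\theta_i/\theta_1)^{q^r}$. Where you genuinely diverge is the bookkeeping. The paper computes only $A_{w_1}$ (all three traces zero) and $A_{w_2}$ (exactly one nonzero) directly, and the latter forces it to verify that the ratio condition is symmetric in the three pivot choices, via the identity $\delta_{2}(\theta_{3}/\theta_{1}-(\theta_{2}/\theta_{1})\delta_{1})=1$; it then recovers $A_{w_3},A_{w_4}$ from the first two Pless power moments (using $A_1^{\perp}=0$). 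Your route instead determines \emph{every} multiplicity by inclusion--exclusion from $|Z_i|=q^{m-r}$, $|Z_i\cap Z_j|=q^{m-2r}$ (pairwise $\fqr$-independence follows from $\theta_i/\theta_j\notin\fqr$ and uniqueness of the trace representation of $\fqr$-functionals), and the single unknown $N_0=|Z_1\cap Z_2\cap Z_3|\in\{q^{m-2r},q^{m-3r}\}$; this sidesteps both the power moments and the paper's symmetry verification, and one checks readily that it reproduces both tables. Your optimality argument evaluates the floor-sum criterion already proved in Theorem \ref{optimalcode-con-4} (sum $=1$ for $q\in\{2,3\}$, $=2$ for $q\geq 4$), which is arithmetically the same as the paper's direct computation of $g(m,d+1)$. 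Finally, your positivity check $N_3=q^{m-2r}(q^r-1)(q^r-2)$ resp.\ $q^{m-3r}(q^r-1)^3$ is a worthwhile addition: the paper tacitly inherits the hypothesis $\Theta_3\neq\emptyset$ from Theorem \ref{optimalcode-con-4}, and your closed forms make explicit that it can only fail in the first case when $q^r=2$ (e.g.\ $q=2$, $r=1$, $\theta_3/\theta_1=\theta_2/\theta_1+1$, where the three trace conditions are linearly dependent), so flagging that degenerate case sharpens rather than weakens the statement.
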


\begin{proof}
According to the proof of Theorem \ref{optimalcode-con-4}, one can obtain the parameters of $\C_{D}$ and conclude that $\C_{D}$ has the following four possible nonzero weights:
$w_{1}=(q-1)q^{m-1}$, $w_{2}=(q-1)(q^{m-1}-q^{r-1})$, $w_{3}=(q-1)(q^{m-1}-2q^{r-1})$ and $w_{4}=(q-1)(q^{m-1}-3q^{r-1})$.
Further, one can also have that $wt(c_{a})=w_{1}$ if and only if $\Tr_{q^{r}}^{q^{m}}(a\theta_{1})=\Tr_{q^{r}}^{q^{m}}(a\theta_{2})=\Tr_{q^{r}}^{q^{m}}(a\theta_{3})=0$ and $wt(c_{a})=w_{2}$ if and only if exactly one value in the set $$\{\Tr_{q^{r}}^{q^{m}}(a\theta_{1}),\Tr_{q^{r}}^{q^{m}}(a\theta_{2}),\Tr_{q^{r}}^{q^{m}}(a\theta_{3})\}$$ is not equal to $0$.

Similar to the calculation of $A_{w_{1}}$ in Theorem \ref{optimalcode-con-4-1}, for $h=3$, one can derive
\begin{align*}
A_{w_{1}}=&\left\{\begin{array}{ll}
q^{m-2r}-1, & \mbox{if $\frac{\theta_{3}/\theta_{1}-(\theta_{3}/\theta_{1})^{q^r}}{\theta_{2}/\theta_{1}-(\theta_{2}/\theta_{1})^{q^r}} \in \fqr$},  \\
q^{m-3r}-1, & \mbox{otherwise}  \\
\end{array}\right.
\end{align*}
by using the same manner.

Next, we calculate $A_{w_{2}}$ for $h=3$. Assume that $\{i,j,k\}=\{1,2,3\}$. Since $\theta_{k}/\theta_{i},\,\, \theta_{j}/\theta_{i} \notin \fqr$, then by employing the same technique, we have
\begin{align*}
&|\{x\in \fqm: \Tr_{q^{r}}^{q^{m}}(x\theta_{i})=0\,\,
{\rm and}\,\,\Tr_{q^{r}}^{q^{m}}(x\theta_{j})=0\,\,{\rm and}\,\,\Tr_{q^{r}}^{q^{m}}(x\theta_{k})\ne 0\}| \\
=&\frac{1}{q^{2r}}|\{x\in \fqm: \Tr_{q^{r}}^{q^{m}}(\frac{\theta_{k}/\theta_{i}-(\theta_{k}/\theta_{i})^{q^r}}{\theta_{j}/\theta_{i}-(\theta_{j}/\theta_{i})^{q^r}}
(x^{q^r}-x))\ne 0\}|\\
=&\left\{\begin{array}{ll}
0, & \mbox{if $\frac{\theta_{k}/\theta_{i}-(\theta_{k}/\theta_{i})^{q^r}}{\theta_{j}/\theta_{i}-(\theta_{j}/\theta_{i})^{q^r}} \in \fqr$},  \\
(q^r-1)q^{m-3r}, & \mbox{otherwise}
\end{array}\right.
\end{align*}
 which implies that
\begin{align*}
A_{w_{2}}=\left\{\begin{array}{ll}
0, & \mbox{if $\frac{\theta_{3}/\theta_{1}-(\theta_{3}/\theta_{1})^{q^r}}{\theta_{2}/\theta_{1}-(\theta_{2}/\theta_{1})^{q^r}} \in \fqr$},  \\
3(q^r-1)q^{m-3r}, & \mbox{otherwise}
\end{array}\right.
\end{align*}
since $\delta_{1}:=\frac{\theta_{3}/\theta_{1}-(\theta_{3}/\theta_{1})^{q^r}}{\theta_{2}/\theta_{1}-(\theta_{2}/\theta_{1})^{q^r}} \in \fqr$,  $\frac{\theta_{2}/\theta_{1}-(\theta_{2}/\theta_{1})^{q^r}}{\theta_{3}/\theta_{1}-(\theta_{3}/\theta_{1})^{q^r}} \in \fqr$ and  $\delta_{2}:=\frac{\theta_{1}/\theta_{2}-(\theta_{1}/\theta_{2})^{q^r}}{\theta_{3}/\theta_{2}-(\theta_{3}/\theta_{2})^{q^r}} \in \fqr$ hold simultaneously due to $\delta_{1}\in \fqr$ if and only if $\delta_{2}\in \fqr$. Note that $\delta_{1}\ne 0$ and $\delta_{2} \ne 0$ since $\theta_{i}/\theta_{j}\notin\fqr$ for any $1\leq i<j \leq 3$. It can be readily verified that $\delta_{2}(\theta_{3}/\theta_{1}-(\theta_{2}/\theta_{1})\delta_{1})=1$. If $\delta_{1}\in \fqr$, we have $(\frac{1}{\delta_{2}})^{q^{r}}-\frac{1}{\delta_{2}}=(\theta_{3}/\theta_{1})^{q^{r}}-\theta_{3}/\theta_{1}-
((\theta_{2}/\theta_{1})^{q^{r}}-\theta_{2}/\theta_{1})\delta_{1}=0$ which implies $\delta_{2}\in \fqr$. Similarly we have $\delta_{1}\in \fqr$ if $\delta_{2}\in \fqr$.
Then, the values of $A_{w_{3}}$ and $A_{w_{4}}$ follow from the first two Pless Power Moments.

By Theorem \ref{optimalcode-con-4}, $\C_{D}$ is a near Griesmer code when $q=2$. A straightforward calculation gives
\begin{align*}
g(m,d+1)=\sum_{i=0}^{m-1} \lceil \frac{(q-1)(q^{m-1}-3q^{r-1})+1}{q^i}\rceil
=\left\{\begin{array}{ll}
q^{m}-3q^{r}+r+1, & \mbox{if $q=2 \,\,{\rm or\,\,}3$},  \\
q^{m}-3q^{r}+r, & \mbox{if $q>3$}.  \\
\end{array}\right.
\end{align*}
Thus, when $q=2$ or $q=3$, $\C_{D}$ is distance-optimal if $r>1$ and when $q>3$ it is distance-optimal if $r>2$.
This completes the proof.
\end{proof}

\begin{example}
Let $q=2$, $m=6$, $r=2$, $\theta_{1}=1$, $\theta_{2}=\alpha$ and $\theta_{3}=1+\alpha$, where $\alpha$ is a primitive element of $\fqm$. Magma experiments show that $\C_{D}$ is a $[54,6,26]$ binary linear code with the weight enumerator $1+24z^{26}+36z^{28}+3z^{32}$, which is consistent with our result in Theorem \ref{optimalcode-con-4-2}. This code is a near Griesmer code and it is distance-optimal due to \cite{GMB}.
\end{example}

\begin{example}
Let $q=3$, $m=8$, $r=2$, $\theta_{1}=1$, $\theta_{2}=\alpha$ and $\theta_{3}=\alpha^2$, where $\alpha$ is a primitive element of $\fqm$. Magma experiments show that $\C_{D}$ is a $[6536,8,4356]$ linear code with the weight enumerator $1+4608z^{4356}+1728z^{4362}+216z^{4368}+8z^{4374}$, which is consistent with our result in Theorem \ref{optimalcode-con-4-2}. This code is distance-optimal due to the Griesmer bound.
\end{example}

\section{The fourth family of optimal linear codes} \label{section-6}

In this section, we study the linear codes $\C_{D}$ of the form \eqref{CDbi} with the defining set
\begin{eqnarray}\label{CD2-D2}
D=\{(x,y):x\in \fqm\backslash \fqr, y \in \fqk\backslash \fqs \},
\end{eqnarray}
where $m$, $k$, $r<m$, $s<k$ are positive integers satisfying $r|m$, $s|k$.

\begin{thm} \label{optimalcode-con-2}
Let $\C_{D}$ be defined by \eqref{CDbi} and \eqref{CD2-D2}. If $m+s\geq k+r$ and $q^{m-r}>q^{m-r+s-k}+1$, then \\
$1).$ $\C_{D}$ is a $[(q^{m}-q^{r})(q^{k}-q^{s}),m+k,(q-1)(q^{m+k-1}-q^{m+s-1}-q^{k+r-1})]$ linear code;\\
$2).$ $\C_{D}$ is $4$-weight with the following weight distribution:
\begin{center}
\begin{tabular}{llll}
\hline weight $w$ & Multiplicity $A_{w}$\\ \hline
                                           $0$ & $1$ \\
$(q-1)(q^{m+k-1}-q^{k+r-1})$               & $q^{k-s}-1$ \\
$(q-1)(q^{m+k-1}-q^{m+s-1})$ & $q^{m-r}-1$ \\
$(q-1)(q^{m+k-1}-q^{m+s-1}-q^{k+r-1})$ & $(q^{k-s}-1)(q^{m-r}-1)$ \\
$(q-1)(q^{m+k-1}-q^{m+s-1}-q^{k+r-1}+q^{r+s-1})$ & $q^{m+k}-q^{m+k-r-s}$ \\
\hline
\end{tabular}
\end{center}
$3).$ $\C_{D}$ is distance-optimal if $k+r>q^{r+s}$ (resp. $1+k+r>q^{r+s}$) when $m+s = k+r$ and $q\not=2$ (resp. $m+s \ne k+r$ or $q=2$); \\
$4).$ $\C_{D}$ is self-orthogonal if $(q,r+s)\notin \{(2,1),(2,2),(3,1)\}$;\\
$5).$ $\C_{D}$ is minimal if $q^{m+k}> q^{m+s+1}+q^{k+r}$.
\end{thm}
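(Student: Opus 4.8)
The plan is to run the character-sum method of the paper on the product defining set, writing $D=D_1\times D_2$ with $D_1=\fqm\backslash\fqr$ and $D_2=\fqk\backslash\fqs$. First I would record $n=|D|=|D_1|\,|D_2|=(q^m-q^r)(q^k-q^s)$. For $(a,b)\in\fqm\times\fqk$ the weight is $wt(c_{a,b})=n-N_{a,b}$, where $N_{a,b}$ counts the pairs in $D$ on which the defining form vanishes. By orthogonality of $\chi$ and the Cartesian structure of $D$, the relevant character sum factors as
\[ N_{a,b}=\frac1q\sum_{u\in\fq}\Big(\sum_{x\in D_1}\chi(u\Tr_{q}^{q^{m}}(ax))\Big)\Big(\sum_{y\in D_2}\chi(u\Tr_{q}^{q^{k}}(by))\Big). \]
Writing $S_a$ and $T_b$ for the two inner sums (which, exactly as in the proofs of Theorems \ref{optimalcode-con-1}--\ref{optimalcode-con-4}, are independent of $u\in\fq^{*}$), one checks $S_a\in\{q^m-q^r,\,-q^r,\,0\}$ according as $a=0$, $a\neq0$ with $\Tr_{q^{r}}^{q^{m}}(a)=0$, or $\Tr_{q^{r}}^{q^{m}}(a)\neq0$, and symmetrically for $T_b$. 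This yields the compact formula $wt(c_{a,b})=\tfrac{q-1}{q}\big[(q^m-q^r)(q^k-q^s)-S_aT_b\big]$.

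Next I would tabulate the nine combinations of the values of $S_a$ and $T_b$. The products $S_aT_b$ collapse to exactly four distinct nonzero weights $w_1,\dots,w_4$, the top weight $w_4=(q-1)(q^{m+k-1}-q^{m+s-1}-q^{k+r-1}+q^{r+s-1})$ being attained whenever $S_aT_b=0$. The multiplicities $A_{w_1},A_{w_2},A_{w_3}$ follow from counting $a$ (resp.\ $b$) in each trace class via balancedness, namely $|\{a:\Tr_{q^{r}}^{q^{m}}(a)=0\}|=q^{m-r}$ and likewise for $b$; then $A_{w_4}$ is obtained by subtracting from $q^{m+k}$. Comparing the four weights shows $w_3$ is smallest, so $d=w_3$; and $d>0$ is precisely the hypothesis $q^{m-r}>q^{m-r+s-k}+1$, which forces $(a,b)\mapsto c_{a,b}$ to be injective and hence $\dim\C_D=m+k$. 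This settles parts 1) and 2).

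The heart of the argument, and the step I expect to be hardest, is part 3). Here I would apply the Griesmer bound to $d=(q-1)(q^{m+k-1}-q^{m+s-1}-q^{k+r-1})$. As in Lemma \ref{near-Griesmer}, $g(m+k,d+1)-g(m+k,d)=1+|\{1\le i\le m+k-1:q^{i}\mid d\}|$, so everything reduces to the $q$-adic valuation of $d$. When $m+s>k+r$ I would factor $d=(q-1)q^{k+r-1}(q^{m-r}-q^{m+s-k-r}-1)$, whence $v_q(d)=k+r-1$; when $m+s=k+r$ I would instead use $d=(q-1)q^{m+s-1}(q^{m-r}-2)$, giving $v_q(d)=m+s-1$ for $q\neq2$ but $v_q(d)=m+s$ for $q=2$ (the extra factor of $2$ coming from $q^{m-r}-2$). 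Feeding these into $g(m+k,d+1)$ and comparing with $n=q^{m+k}-q^{m+s}-q^{k+r}+q^{r+s}$ produces the thresholds $k+r>q^{r+s}$ (for $m+s=k+r$, $q\neq2$) and $1+k+r>q^{r+s}$ (otherwise). The delicate points are the careful rounding of the non-integer terms $\lceil d/q^i\rceil$ in the ranges $i\ge k+r$, and the separate treatment of the coincident-exponent case $m+s=k+r$, where the closed form for $g(m+k,d)$ differs by $1$ from the two-exponent pattern $q^{m+k}-q^{m+s}-q^{k+r}+1$.

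Finally, for part 4) I would expand $c_{a,b}\cdot c_{a',b'}$ over $\fq$ into diagonal and cross contributions. The diagonal sums are multiplied by $|D_2|=q^k-q^s$ and $|D_1|=q^m-q^r$, both $\equiv0\pmod p$, so they vanish in $\fq$; what survives is $L_1(a)L_2(b')+L_1(a')L_2(b)$ with $L_1(a)=\sum_{x\in D_1}\Tr_{q}^{q^{m}}(ax)$. By balancedness $L_1\equiv0$ unless $(q,r)=(2,1)$, and $L_2\equiv0$ unless $(q,s)=(2,1)$, so the cross term can be nonzero only when $q=2$ and $r=s=1$, i.e.\ $(q,r+s)=(2,2)$; since $r+s\ge2$ already rules out $(2,1)$ and $(3,1)$, self-orthogonality holds exactly under the stated hypothesis, in agreement with Lemmas \ref{self-orthogonalF} and \ref{self-orthogonalD-}. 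For part 5), using $m+s\ge k+r$ one checks $w_{\max}=w_1=(q-1)(q^{m+k-1}-q^{k+r-1})$ while $w_{\min}=w_3$, so Lemma \ref{minimal} gives minimality precisely when $w_{\min}/w_{\max}>(q-1)/q$, which simplifies to $q^{m+k}>q^{m+s+1}+q^{k+r}$.
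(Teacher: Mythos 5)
Your proposal is correct. For parts 1), 2), 3) and 5) it is essentially the paper's own route: the paper expands $N_{a,b}$ into the same four product character sums over $\fqm$, $\fqr$, $\fqk$, $\fqs$, so your factorization $wt(c_{a,b})=\frac{q-1}{q}\bigl(n-S_aT_b\bigr)$ with $S_a\in\{q^m-q^r,\,-q^r,\,0\}$ is just a compact repackaging of its case table; the four weights, the multiplicities (balancedness plus subtraction for $A_{w_4}$), the identification $d=w_3$ with positivity forcing dimension $m+k$, and the minimality check via Lemma \ref{minimal} all coincide. For part 3) the paper evaluates $g(m+k,d)$ and $g(m+k,d+1)$ directly in exactly the three regimes you isolate ($m+s>k+r$; $m+s=k+r$ with $q\ne2$; $m+s=k+r$ with $q=2$); your bookkeeping through $g(m+k,d+1)-g(m+k,d)=1+|\{1\le i\le m+k-1: q^i\mid d\}|$ and the largest power of $q$ dividing $d$ is an equivalent, slightly tidier way to see where the exceptional $q=2$ case comes from (the factor $q^{m-r}-2$), and your valuations $k+r-1$, $m+s-1$, $m+s$ are all correct. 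The genuine divergence is part 4): the paper never expands inner products here; it decomposes $\fqm\times\fqk=D\cup(D_2\setminus D_4)\cup(D_3\setminus D_4)\cup D_4$ with $D_2=\fqm\times\fqs$, $D_3=\fqr\times\fqk$, $D_4=\fqr\times\fqs$, proves each piece self-orthogonal as in Lemma \ref{self-orthogonalF}, and concludes by Lemma \ref{self-orthogonalD-}. Your direct computation $c_{a,b}\cdot c_{a',b'}=L_1(a)L_2(b')+L_1(a')L_2(b)$, with the diagonal terms annihilated because $|D_1|\equiv|D_2|\equiv 0\pmod p$ (valid precisely because $r,s\ge1$), is shorter and makes transparent that only $(q,r+s)=(2,2)$ can obstruct; what the paper's modular route buys instead is reusability --- in the companion Theorem \ref{optimalcode-3} with $r=s=0$ your diagonal terms no longer vanish (there $q^k-1\equiv-1\pmod p$), while Lemmas \ref{self-orthogonalF} and \ref{self-orthogonalD-} still apply. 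Two cosmetic slips, neither affecting correctness: $w_4$ is not the ``top'' weight ($w_1>w_4$ since $m>r$; your own part 5) correctly takes $w_{\max}=w_1$ and $w_{\min}=w_3$, using the hypothesis $m+s\ge k+r$), and ``self-orthogonality holds exactly under the stated hypothesis'' overclaims --- your argument, like the paper's, establishes sufficiency only, which is all the theorem asserts.
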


\begin{proof}
It is obvious that the length of $\C_{D}$ is $n=|D|=(q^{m}-q^{r})(q^{k}-q^{s})$.
For $(a,b)\not =(0,0)$, the Hamming weight $wt(c_{a,b})$ of the codeword $c_{a,b}$ in $\C_{D}$ is $n-N_{a,b}$, where $N_{a,b}=|\{(x,y) \in (\fqm\backslash \fqr) \times (\fqk\backslash \fqs): \Tr_{q}^{q^{m}}(ax)+\Tr_{q}^{q^{k}}(by) = 0\}|$. Using the orthogonal property of nontrivial additive characters gives
\begin{eqnarray*}
N_{a,b}&=&\frac{1}{q}\sum_{x\in \fqm\backslash \fqr}\sum_{y\in \fqk\backslash \fqs}
\sum_{u\in \fq} \chi(u(\Tr_{q}^{q^{m}}(ax)+\Tr_{q}^{q^{k}}(by)))\\
&=&\frac{1}{q}\sum_{u\in \fq} \sum_{x\in \fqm\backslash \fqr}\chi(u\Tr_{q}^{q^{m}}(ax))\sum_{y\in \fqk\backslash \fqs} \chi(u\Tr_{q}^{q^{k}}(by))
\end{eqnarray*}
which can be further expressed as
\begin{align*}
N_{a,b}=&\frac{1}{q}\sum_{u\in \fq}
(\sum_{x\in \fqm}\chi(u\Tr_{q}^{q^{m}}(ax))
\sum_{y\in \fqk}\chi(u\Tr_{q}^{q^{k}}(by))
-\sum_{x\in \fqm}\chi(u\Tr_{q}^{q^{m}}(ax))
\sum_{y\in \fqs}\chi(u\Tr_{q}^{q^{k}}(by))\\
&-\sum_{x\in \fqr}\chi(u\Tr_{q}^{q^{m}}(ax))
\sum_{y\in \fqk}\chi(u\Tr_{q}^{q^{k}}(by))
+\sum_{x\in \fqr}\chi(u\Tr_{q}^{q^{m}}(ax))
\sum_{y\in \fqs}\chi(u\Tr_{q}^{q^{k}}(by))).
\end{align*}
Note that
\[\frac{1}{q}\sum_{u\in \fq}\sum_{x\in \fqm}\chi(u\Tr_{q}^{q^{m}}(ax))\sum_{y\in \fqk}\chi(u\Tr_{q}^{q^{k}}(by))=q^{m+k-1}\]
holds for $(a,b)\not=(0,0)$. Then, it can be readily verified that
\begin{align*}
N_{a,b}=&\left\{\begin{array}{llll}
q^{m+k-1}-q^{m+s}-q^{k+r-1}+q^{r+s}, & \mbox{if}\,\,a=0,\,\,b \not=0,\,\, \Tr_{q^{s}}^{q^{k}}(b)=0, \\
q^{m+k-1}-q^{m+s-1}-q^{k+r}+q^{r+s}, & \mbox{if}\,\, a\not =0,\,\,b =0,\,\,\Tr_{q^{r}}^{q^{m}}(a)=0, \\
q^{m+k-1}-q^{m+s-1}-q^{k+r-1}+q^{r+s}, & \mbox{if}\,\, a\not =0,\,\,b \not=0,\,\,\Tr_{q^{r}}^{q^{m}}(a)=\Tr_{q^{s}}^{q^{k}}(b)=0, \\
q^{m+k-1}-q^{m+s-1}-q^{k+r-1}+q^{r+s-1},  & \mbox{otherwise}.
\end{array}\right.
\end{align*}
Consequently,  for $(a,b)\not =(0,0)$, $wt(c_{a,b})$ is equal to
\begin{align*} 
&\left\{\begin{array}{llll}
w_1:=(q-1)(q^{m+k-1}-q^{k+r-1}), & \mbox{if}\,\,a=0,\,\,b \not=0,\,\, \Tr_{q^{s}}^{q^{k}}(b)=0, \\
w_2:=(q-1)(q^{m+k-1}-q^{m+s-1}), & \mbox{if}\,\, a\not =0,\,\,b =0,\,\,\Tr_{q^{r}}^{q^{m}}(a)=0, \\
w_3:=(q-1)(q^{m+k-1}-q^{m+s-1}-q^{k+r-1}), & \mbox{if}\,\, a\not =0,\,\,b \not=0,\,\, \Tr_{q^{r}}^{q^{m}}(a)=\Tr_{q^{s}}^{q^{k}}(b)=0, \\
w_4:=(q-1)(q^{m+k-1}-q^{m+s-1}-q^{k+r-1}+q^{r+s-1}),  & \mbox{otherwise}.
\end{array}\right.
\end{align*}
Observe that $d=w_{3}=(q-1)q^{k+r-1}(q^{m-r}-q^{m-r+s-k}-1)>0$ since $q^{m-r}>q^{m-r+s-k}+1$.
This shows that the dimension of $\C_{D}$ is equal to $m+k$. According to the balanced property of trace functions, we have
$A_{w_{1}}=q^{k-s}-1$, $A_{w_{2}}=q^{m-r}-1$ and $A_{w_{3}}=(q^{k-s}-1)(q^{m-r}-1)$, which leads to $A_{w_{4}}=q^{m+k}-q^{m+k-r-s}$ due to $A_{w_{1}}+A_{w_{2}}+A_{w_{3}}+A_{w_{4}}=q^{m+k}-1$.

By using the Griesmer bound, one can obtain
\begin{align*} 
g(m+k,d)=
\left\{\begin{array}{ll}
q^{m+k}-q^{m+s}-q^{k+r}+1, & \mbox{if}\,\,m+s\not = k+r,  \\
q^{m+k}-q^{m+s}-q^{k+r},  & \mbox{if}\,\,m+s = k+r
\end{array}\right.
\end{align*}
and
\begin{align*}
g(m+k,d+1)=
\left\{\begin{array}{ll}
q^{m+k}-q^{m+s}-q^{k+r}+k+r,  & \mbox{if}\,\,m+s = k+r \,\,{\rm and}\,\, q\not=2,  \\
q^{m+k}-q^{m+s}-q^{k+r}+k+r+1,  & \mbox{otherwise}.
\end{array}\right.
\end{align*}
Thus, when $m+s=k+r$ and $q\not=2$, $\C_{D}$ is distance-optimal if $k+r>q^{r+s}$; and when $m+s \ne k+r$ or $q=2$, $\C_{D}$ is distance-optimal if
$1+k+r>q^{r+s}$.

To prove the self-orthogonality of $\C_D$, define $D_{1}=\fqm \times \fqk$, $D_{2}=\fqm \times \fqs$, $D_{3}=\fqr \times \fqk$ and $D_{4}=\fqr \times \fqs$. Let $\C_{D_{1}}$, $\C_{D_{2}}$, $\C_{D_{3}}$ and $\C_{D_{4}}$ be defined by \eqref{CDbi}, then similar to the proof of Lemma \ref{self-orthogonalF}, the linear codes $\C_{D_{1}}$, $\C_{D_{2}}$, $\C_{D_{3}}$ and $\C_{D_{4}}$ are self-orthogonal if $(q,r+s)\notin \{(2,1),(2,2),(3,1)\}$. Note that $D_{1}=D \cup (D_{2}\backslash D_{4})\cup (D_{3}\backslash D_{4})\cup D_{4}$. Thus, by Lemma \ref{self-orthogonalD-}, one can conclude that  $\C_{D}$ is self-orthogonal if $(q,r+s)\notin \{(2,1),(2,2),(3,1)\}$.

The minimality of $\C_{D}$ can be easily verified by Lemma \ref{minimal}. This completes the proof.
\end{proof}

\begin{remark}
Note that $\C_{D}$ in Theorem \ref{optimalcode-con-2} is reduced to a $3$-weight linear code if $m+s=k+r$.
\end{remark}


\begin{example}
Let $q=2$, $m=4$, $k=3$ and $r=s=1$. Magma experiments show that $\C_{D}$ is a $[84,7,40]$ binary linear code with the weight enumerator $1+21z^{40}+96z^{42}+7z^{48}+3z^{56}$, which is consistent with our result in Theorem \ref{optimalcode-con-2}. This code is distance-optimal due to \cite{GMB}.
\end{example}

\begin{example}
Let $q=2$, $m=4$, $k=4$ and $r=s=1$. Magma experiments show that $\C_{D}$ is a $[196,8,96]$ binary linear code with the weight enumerator $1+49z^{96}+192z^{98}+14z^{112}$, which is consistent with our result in Theorem \ref{optimalcode-con-2}. This code is distance-optimal due to \cite{GMB}.
\end{example}

Specially, if one takes $r=s=0$ and defines ${\mathbb F}_{q^{0}}=\{0\}$, then good codes can also be obtained as in Theorem \ref{optimalcode-con-2}. The proof is similar to that of Theorem \ref{optimalcode-con-2} and we omit it here.

\begin{thm} \label{optimalcode-3}
Let $m$, $k$ be positive integers with $k\leq m$ and $q^m>q^{m-k}+1$. Let $\C_{D}$ be defined by \eqref{CDbi} and $D= \{(x,y):x \in \fqm^*, y \in \fqk^* \}$. Then \\
$1).$ $\C_{D}$ is a $[q^{m+k}-q^{m}-q^{k}+1,m+k,(q-1)(q^{m+k-1}-q^{m-1}-q^{k-1})]$ linear code;\\
$2).$ $\C_{D}$ is $3$-weight with the following weight distribution:
\begin{center}
\begin{tabular}{llll}
\hline weight $w$ & Multiplicity $A_{w}$\\ \hline
                                           $0$ & $1$ \\
$(q-1)(q^{m+k-1}-q^{k-1})$               & $q^{k}-1$ \\
$(q-1)(q^{m+k-1}-q^{m-1})$ & $q^{m}-1$ \\
$(q-1)(q^{m+k-1}-q^{m-1}-q^{k-1})$ & $q^{m+k}-q^{m}-q^{k}+1$ \\
\hline
\end{tabular}
\end{center}
$3).$ $\C_{D}$ is a Griesmer code when $m\not=k$ and it is a near Griesmer code ${\rm(}$also distance-optimal if $m+\lfloor \frac{2}{q}\rfloor>1$${\rm)}$ when $m=k$;\\
$4).$ $\C_{D}$ is self-orthogonal if $(q,k)\notin \{(2,1),(2,2),(3,1)\}$;\\
$5).$ $\C_{D}$ is minimal if $q^{m+k}> q^{m+1}+q^{k}$.
\end{thm}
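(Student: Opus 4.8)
The plan is to follow the template of the proof of Theorem \ref{optimalcode-con-2}, reading $r=s=0$ and ${\mathbb F}_{q^0}=\{0\}$ as the degenerate case in which the defining set collapses to $D=\fqm^*\times\fqk^*$. First I would record $n=|D|=(q^m-1)(q^k-1)=q^{m+k}-q^m-q^k+1$. For $(a,b)\neq(0,0)$ I write $wt(c_{a,b})=n-N_{a,b}$ with $N_{a,b}=|\{(x,y)\in\fqm^*\times\fqk^*:\Tr_{q}^{q^{m}}(ax)+\Tr_{q}^{q^{k}}(by)=0\}|$ and expand by the orthogonality of additive characters as
\[N_{a,b}=\frac{1}{q}\sum_{u\in\fq}\Big(\sum_{x\in\fqm^*}\chi(u\Tr_{q}^{q^{m}}(ax))\Big)\Big(\sum_{y\in\fqk^*}\chi(u\Tr_{q}^{q^{k}}(by))\Big).\]
Using $\sum_{x\in\fqm^*}\chi(u\Tr_{q}^{q^{m}}(ax))=q^m-1$ when $u=0$ or $a=0$ and $=-1$ otherwise (and symmetrically in $y$), the sum splits cleanly into the three cases $a=0,b\neq0$; $a\neq0,b=0$; $a\neq0,b\neq0$. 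This is the point where the degeneration matters: with no proper subfield trace condition present, the four-valued weight of Theorem \ref{optimalcode-con-2} collapses to the three values $w_1,w_2,w_3$ listed, whose multiplicities $q^k-1$, $q^m-1$, $(q^m-1)(q^k-1)$ are read off directly from the balancedness of the trace. The hypothesis $q^m>q^{m-k}+1$ is exactly what makes $d=w_3=(q-1)(q^{m+k-1}-q^{m-1}-q^{k-1})>0$, which also forces the dimension to equal $m+k$.

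Next I would evaluate the Griesmer function $g(m+k,d)$ by the same ceiling-sum manipulation used in Theorem \ref{optimalcode-con-2}, obtaining $g(m+k,d)=q^{m+k}-q^m-q^k+1=n$ when $m\neq k$ and $g(m+k,d)=q^{m+k}-q^m-q^k=n-1$ when $m=k$. Hence $\C_D$ is a Griesmer code (and therefore automatically distance-optimal, since $g(m+k,d+1)\geq g(m+k,d)+1$) when $m\neq k$, and a near Griesmer code when $m=k$. For the latter case I would invoke Lemma \ref{near-Griesmer}, by which $\C_D$ is distance-optimal precisely when $q\mid d$. Writing $d=(q-1)q^{m-1}(q^m-2)$ for $m=k$, the factor $q^{m-1}$ gives $q\mid d$ whenever $m\geq2$, while for $m=1$ one has $q\mid d\iff q\mid(q-2)\iff q=2$, a case already excluded by $q^m>2$. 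Matching these against the floor expression then shows $q\mid d\iff m+\lfloor\frac{2}{q}\rfloor>1$, which is statement 3). I expect this bookkeeping---correctly reducing the near-Griesmer divisibility test to the stated floor condition and confirming it is consistent with the standing hypothesis---to be the most delicate step.

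Finally, for self-orthogonality I would use the disjoint decomposition $\fqm\times\fqk=D\sqcup(\fqm^*\times\{0\})\sqcup(\{0\}\times\fqk^*)\sqcup\{(0,0)\}$. The codes supported on $\fqm\times\{0\}$ and $\{0\}\times\fqk$ reduce to the univariate codes $\C_{\fqm}$ and $\C_{\fqk}$, which are self-orthogonal under the conditions $(q,m)\notin\{(2,1),(2,2),(3,1)\}$ and $(q,k)\notin\{(2,1),(2,2),(3,1)\}$ respectively by the argument of part 1) of Lemma \ref{self-orthogonalF}, whereas $\C_{\fqm\times\fqk}$ is self-orthogonal because in the expansion of $c_{a,b}\cdot c_{a',b'}$ the two pure terms carry a factor $q^k$ or $q^m$ that vanishes in characteristic $p$, while the cross terms vanish since $\sum_{x\in\fqm}x=0$ once $q^m>2$. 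Assembling these with parts 2) and 3) of Lemma \ref{self-orthogonalD-} gives self-orthogonality under $(q,m),(q,k)\notin\{(2,1),(2,2),(3,1)\}$, and since $k\leq m$ the condition on $k$ is the binding one, yielding statement 4). For minimality I would apply Lemma \ref{minimal} with $w_{min}=w_3$ and $w_{max}=w_1$ (using $m\geq k$ to see $w_1\geq w_2$); the inequality $w_{min}/w_{max}>(q-1)/q$ simplifies to $q^{m+k-1}>q^m+q^{k-1}$, equivalently $q^{m+k}>q^{m+1}+q^k$, which is statement 5).
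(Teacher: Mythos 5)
Your proposal is correct and follows essentially the route the paper intends: the paper omits this proof, stating it is obtained from the proof of Theorem \ref{optimalcode-con-2} by specializing $r=s=0$ with ${\mathbb F}_{q^0}=\{0\}$, and your character-sum computation, Griesmer ceiling sums (yielding $g(m+k,d)=n$ for $m\neq k$ and $n-1$ for $m=k$), the appeal to Lemma \ref{near-Griesmer} via $d=(q-1)q^{m-1}(q^m-2)$, and the decomposition-based self-orthogonality via Lemmas \ref{self-orthogonalF} and \ref{self-orthogonalD-} all check out. The only point worth noting is your correct observation that $k\leq m$ makes the condition $(q,k)\notin\{(2,1),(2,2),(3,1)\}$ subsume the one on $(q,m)$, which justifies the form of statement 4).
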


\begin{remark}
The linear code $\C_{D}$ in Theorem \ref{optimalcode-3} is reduced to a 2-weight linear code when $m=k$ and the Griesmer code in Theorem \ref{optimalcode-3} has the same parameters with the Solomon and Stiffler code in the nonprojective case.
\end{remark}

\begin{example}
Let $q=2$, $m=5$, $k=4$. Magma experiments show that $\C_{D}$ is a $[465,9,232]$ binary linear code with the weight enumerator $1+465z^{232}+31z^{240}+15z^{248}$, which is consistent with our result in Theorem \ref{optimalcode-3}. This code is a Griesmer code.
\end{example}

\begin{example}
Let $q=2$, $m=4$, $k=4$. Magma experiments show that $\C_{D}$ is a $[225,8,112]$ binary linear code with the weight enumerator $1+225z^{112}+30z^{120}$, which is consistent with our result in Theorem \ref{optimalcode-3}. This code is a near Griesmer code and it is distance-optimal due to \cite{GMB}.
\end{example}

\begin{remark}
It is known that equivalent codes have the same parameters and weight distribution, but the converse is not necessarily true, so it is normally difficult to discuss the equivalence of codes.  In 2020, several infinite families of optimal binary linear codes of the form
$\C_{P}=\{c_{a}=(a\cdot x)_{x\in P}: a\in \ftwo^{m}\}$
were presented in \cite{JYHLL}, where $P=\ftwo^{m}\setminus \Delta$ and $\Delta$ is a simplicial complex in $\ftwo^{m}$.
Firstly, our results in this paper holds for a prime power $q$, thus our codes are new when $q>2$. Secondly, for $q=2$ and $m=4,5$, when $\Delta$ runs through all simplicial complexes in $\ftwo^{4}$ and $\ftwo^{5}$, computer experiments show that linear codes with new parameters can be produced in Sections \ref{section-4} and \ref{section-5} by comparing with all the linear codes $\C_{P}$ obtained in \cite{JYHLL}. Our codes in Section \ref{section-3} have the same parameters as those in \cite{JYHLL} and it should be noted that whether the set $\Omega_1=\cup_{i=1}^{h} {\mathbb F}_{2^{r_{i}}}$ in Section \ref{section-3} is a simplicial complex in $\ftwo^{m}$ depends on the selected basis of $\ftwom$ over $\ftwo$. For $q=2$ and the linear codes in Section \ref{section-6}, the parameters of our codes in Theorem \ref{optimalcode-con-2} are the same with the codes in
\cite[Theorem V.2(iii)]{JYHLL} and the parameters of our codes in Theorem \ref{optimalcode-3} are different from those of \cite[Theorem V.2]{JYHLL}.
 \end{remark}

\section{Conclusions}

The construction of optimal linear codes is a hard problem.
In this paper, we constructed four families of linear codes over finite fields via the defining set approach, which can produce infinite families of optimal linear codes, including infinite families of (near) Griesmer codes. Using the Griesmer bound, we characterized the optimality of
these four families of linear codes with an explicit computable criterion and obtained infinite families of distance-optimal linear codes. Moreover, we obtained several classes of distance-optimal linear codes with few weights and completely determined their weight distributions. In addition, we investigated the self-orthogonality and minimality of these linear codes and it is shown that most of them are either self-orthogonal or minimal.

\section*{Acknowledgements}

This work was supported in part by the National Natural Science Foundation of China (Nos. 62072162, 61761166010, 12001176, 61702166), the Application Foundation Frontier Project of Wuhan Science and Technology Bureau (No. 2020010601012189) and the National Key Research and Development Project (No. 2018YFA0704702).

\end{document}